\newcommand*{\affaddr}[1]{#1} 
\newcommand*{\affmark}[1][*]{\textsuperscript{#1}}
\def\Kgg1{{\alpha_0}}
\def\1b{{\alpha_1}}
\def\Kgg1{{\beta_0}}
\def\2b{{\beta_1}}
\def\gg3{{\gamma_0}}
\def\3b{{\gamma_1}}
\def\gg4{{\epsilon_0}}
\def\4b{{\epsilon_1}}
\def\gg5{{\zeta_0}}
\def\5b{{\zeta_1}}
\def\gg6{{\mu_0}}
\def\6b{{\mu_1}}
\def\C{\mathbb C}
\def\mod{\text{ mod }}
\def\m{{\frak m}}
\def\x{{\frak x}}
\def\y{{\frak y}}
\def\p{{\frak p}}
\def\q{{\frak q}}
\def\rN{{\mathrm N}}
\newcommand{\beq}{\begin{equation}}
\newcommand{\eeq}{\end{equation}}
\newcommand{\la}{\lambda}
\newcommand{\al}{\alpha}
\newcommand{\be}{\beta}
\newcommand{\ga}{\gamma}
\newcommand{\ze}{\zeta}
\newcommand{\rA}{{\mathrm{A}}}
\newcommand{\rB}{{\mathrm{B}}}
\newcommand{\rC}{{\mathrm{C}}}
\newcommand{\Pro}{{\mathbb P}}
\newcommand{\Z}{{\mathbb Z}}
\newcounter{NN}
\newtheorem{notation}[NN]{Notation}
\newtheorem{remark}[NN]{Remark}
\newtheorem{proposition}[NN]{Proposition}
\newtheorem{theorem}[NN]{Theorem}
\newtheorem{corollary}[NN]{Corollary}
\newtheorem{lemma}[NN]{Lemma}
\newtheorem{definition}[NN]{Definition}
\def\m{{\frak m}}
\begin{document}
\bibliographystyle{plain}
\title{QRT maps and related Laurent systems}
\author{%
K. Hamad\affmark[1], A.N.W. Hone\affmark[2], P.H. van der Kamp\affmark[1], and G.R.W. Quispel\affmark[1] \\
\affaddr{\affmark[1]Department of Mathematics and Statistics}\\
\affaddr{La Trobe University, Victoria 3086, Australia}\\
\affaddr{\affmark[2]School of Mathematics, Statistics \& Actuarial Science}\\
\affaddr{University of Kent, Canterbury CT2 7NF, UK}%
}
\date{}

\maketitle

\begin{abstract}
In recent work 
it was shown how recursive factorisation of certain QRT maps leads to Somos-4 and Somos-5 recurrences with periodic
coefficients, and to a fifth-order recurrence with the Laurent property. Here we recursively factorise the 12-parameter symmetric QRT
map, given by a second-order recurrence, to obtain a
system of three coupled recurrences which possesses the Laurent property.
As degenerate special cases, we first 
derive
systems of two coupled recurrences corresponding to the 5-parameter multiplicative  and  additive symmetric QRT maps.
In all cases, the Laurent property
is established using a generalisation of a result due to Hickerson, and  exact formulae for degree growth  are found from  ultradiscrete (tropical) analogues of the recurrences. 
For the general 18-parameter QRT map it is shown that the components of the iterates 
can be written as a
ratio of quantities that satisfy 
the same Somos-7 recurrence. 
\end{abstract}

\section{Introduction}
A rational recurrence relation  is said to have the {\em Laurent property} if all of the iterates are Laurent polynomials
in the initial values, with coefficients belonging to some ring (typically $\Z$). We call such a  recurrence a {\em Laurent recurrence}. The first examples of such recurrences were discovered by
Michael Somos in the 1980s \cite{Gal}. Since then many more have been found \cite{ACH,FZ,fm,LP,Mas} (also cf. \cite{EZ}). The Laurent property is a central feature of cluster algebras (see \cite{fziv, fst} and references).

This paper is concerned with systems of Laurent recurrences related to QRT maps. The QRT maps 
 are an 
18-parameter family of birational transformations of the plane, which were 
introduced in \cite{QRT1,QRT2},  encompassing various examples that  appeared previously in a wide 
variety of contexts, including  
statistical mechanics, discrete soliton theory and dynamical systems. QRT maps are measure-preserving (symplectic) 
and have an invariant function (first integral),  hence they provide a prototype of  a discrete integrable system in finite dimensions. The generic level set of the  first integral is a curve of genus one, so there is an associated elliptic fibration of the plane \cite{tsuda}. The  rich geometry of QRT maps  is described extensively in the monograph by Duistermaat \cite{duistermaat}; for a terse overview, see subsection 6.3 below. 

It is an open question as to what conditions are necessary for ``Laurentification'' of a general birational transformation,  i.e. to determine whether such a transformation admits a lift to a Laurent  recurrence or a system of such recurrences.
In \cite{HK} two of the authors used ultradiscretization and recursive factorisation (which was
employed in \cite{vdK}, but can in fact be found in earlier work by  Boukraa and Maillard \cite{Mai})
to derive recurrence relations for the divisors of iterates of homogenised discrete  integrable systems. As the divisors
are polynomials, these recurrences should possess the Laurent property, as indeed they do,
in all cases considered. A different approach using projective coordinates has been taken in \cite{Via},
leading to similar results.

Specifically, it was shown  in \cite{HK}  that  two particular  multiplicative symmetric QRT maps, namely
\begin{subequations} \label{fh}
\begin{align} 
u_{n+1}u_{n-1}&=\frac{\alpha u_{n}+\beta}{u_n^{2}}, \label{fha} \\
u_{n+1}u_{n-1}&=\frac{\gamma u_n + \delta}{u_{n}} , \label{fhb}
\end{align}
\end{subequations}
give rise via recursive factorisation to Somos-4 and Somos-5 recurrences, that is
\begin{subequations} \label{S45}
\begin{align}
c_{n-2}c_{n+2}&=\alpha_{n}c_{n-1}c_{n+1}+\beta_{n}c_{n}^{2}, \label{S4} \\
d_{n-3}d_{n+2}&=\gamma_{n}d_{n-2}d_{n+1}+\delta_{n}d_{n-1}d_{n}, \label{S5}
\end{align}
\end{subequations}
respectively,
where the coefficients $\alpha_{n},\beta_{n}$ and $\gamma_{n},\delta_{n}$ are periodic functions of $n$,
with period 8 in the first case and period 7 in the second.  The  connection between the QRT maps
(\ref{fh}) and the
autonomous versions of these Somos recurrences is well known (see e.g. \cite{honeblms,honetams}).
Both equations (\ref{S45}) are special cases of a non-autonomous Gale-Robinson
recurrence \cite{Gal}, which arise as reductions of Hirota's bilinear (discrete KP) equation \cite{GR, Mas, mase, WTS, Zab}. Furthermore, it was shown in \cite{HK} that the additive QRT map
\begin{equation} \label{um}
u_{n+1}+u_{n-1} = \frac{\alpha-u_{n}^2}{u_{n}},
\end{equation}
known as
DTKQ-2 \cite{DTKQ},
is related by  recursive factorisation to a fifth-order Laurent recurrence, that is
\begin{equation} \label{fol}
e_{n-1}e_{n-2}^{2}e_{n-5}+e_{n-1}^{2}e_{n-4}^{2}+e_{n}e_{n-3}^{2}e_{n-4}=\alpha e_{n-2}^{2}e_{n-3}^{2}.
\end{equation}

It is worth pointing out that the Laurent property is neither necessary nor sufficient for integrability. To see why it is not necessary, note that a discrete integrable system, in the form of a birational map satisfying the conditions of Liouville's theorem, need not have the Laurent property: this property is associated with a particular choice of coordinate system, and is easily destroyed by a birational change of coordinates, whereas integrability is not. As for sufficiency, it is known that 
large families of birational recurrences with the Laurent property arise from certain sequences of mutations in a 
cluster algebra \cite{FZ, fm} or an LP algebra \cite{ACH, LP},  yet  integrability is a rare property, and only a small 
minority of such recurrences are discrete integrable systems. 

Nevertheless, in an algebraic setting, based on the evidence of a large number of examples, it appears  that 
discrete integrable systems should always admit Laurentification. The advantage of having a system with the  Laurent 
property is that it leads to a very direct way of calculating the sequence of degrees, so that the 
algebraic entropy of the system can be calculated as the limit $\lim_{n\to\infty} n^{-1} \log d_n$ (where 
$d_n$ is the degree of the $n$th iterate). In the approach of Bellon and Viallet \cite{bv}, 
discrete integrable systems are characterised by having zero algebraic entropy. For the case of the QRT maps considered 
here, which can be regularised by a finite number of blowups of the plane \cite{duistermaat}, and preserve a pencil 
of invariant curves, general arguments indicate that the degrees grow quadratically with $n$ \cite{bellon}, and thus the
entropy is zero. While a geometrical approach via blowups is effective for counting degrees in dimension two, it becomes 
increasingly difficult in higher dimensions, and this is our motivation for considering Laurent systems here, in a test case 
where we know the degree growth in advance.

Laurentification is not a unique procedure, and for convenience one should aim to find the simplest system which has the Laurent property. Recursive factorisation can provide a Laurent system, but not always  the simplest one: 
in particular, as shown below (see also \cite{mase}), solutions to the non-autonomous Somos recurrences (\ref{S45}) from \cite{HK} 
are related to those of their autonomous versions, which are simpler. 
In section 3 we obtain a two-component autonomous system 
directly from the 5-parameter multiplicative symmetric QRT map
\begin{align}\label{mul}
u_{n+1}u_{n-1}=\frac{ a_{3} u_n^2+ a_{5} u_n + a_{6} }{a_{1} u_n^2+a_{2} u_n + a_{3}},
\end{align} 
by writing the iterates as a ratio $u_n=k_n/l_n$. 
We prove that this is a Laurent 
system, and use 
the Laurent property together with ultradiscretization to derive a polynomial formula for  the growth of degrees (quadratic in $n$). 
We also show how our autonomous system degenerates to the non-autonomous Somos-4 (\ref{S4}) and Somos-5 (\ref{S5}) in the special cases considered in \cite{HK}.

In section 4, we Laurentify the 5-parameter additive symmetric QRT map
\begin{align}\label{add}
u_{n+1}+u_{n-1}=-\frac{ a_{2} u_n^2+ a_{4} u_n + a_{5} }{a_{1} u_n^2+a_{2} u_n + a_{3}},
\end{align}
which generalises (\ref{um}). This gives another system of two recurrences, which degenerates to  (\ref{fol}) as a special case.
We show that the same quadratic formula as found for (\ref{mul}) describes the degree growth of (\ref{add}).

In section 5, we recursively factorise the 12-parameter symmetric QRT map 
(see equation (\ref{QRT}) below),
and obtain a three-component system, whose Laurentness follows directly from factorisation properties.
We describe how the additive and
multiplicative Laurent systems obtained from (\ref{mul}) and (\ref{add}) appear as  degenerate cases, and 
use ultradiscretization to show that the degree growth of the symmetric QRT map is quadratic. 

In section 6, we present Somos-7 recurrences that are satisfied by the variables in the  Laurent systems introduced in the preceding sections. We prove that the components of iterates of the general 18-parameter QRT map can also be written
as a ratio of quantities that satisfy a Somos-7 relation.

Because deriving systems that are likely to possess the Laurent property can
now be done routinely, there is a need for verifying the Laurent property routinely.
An account of such a procedure for autonomous recurrences, found by Hickerson, was given in \cite{Gal, Rob}.
Another approach, built into the  axiomatic framework of cluster algebras or LP algebras \cite{LP}, is to use the Caterpillar Lemma as in \cite{FZ},
but this only applies to relations in multiplicative form (i.e. exchange relations with a product of two terms on the left-hand side).
A straightforward generalisation of Hickerson's method to systems of equations, with more general denominators,
is given  in Theorem \ref{T1} in the next section.  
For the multiplicative
and additive Laurent systems (equations (\ref{smsql}) and (\ref{M12}) below) it is easy to verify the conditions in the theorem, and hence to establish their Laurentness.

\section{Proving the Laurent property}
 Sufficient conditions  for equations of the form
\begin{align}\label{rob}
  \tau_{n}\tau_{n-k}=P(\tau_{n-k+1},\dots,\tau_{n-1}), \qquad k \in \mathbb{N}, \quad P\,\, \mathrm{ polynomial} 
\, \mathrm{over}\, {\cal R} 
\end{align}  
(where $\cal R$ is a ring of coefficients) 
to possess the Laurent property were found by Hickerson.
Taking $\{\tau_{i}\}_{i=0}^{k-1}$ as the initial values, the iterates are written as a ratio
$$\tau_{n}=\frac{p_{n}(\tau_{0},\dots,\tau_{k-1})}{q_{n}(\tau_{0},\dots,\tau_{k-1})}$$
of coprime polynomials, so that the greatest common divisor
$(p_{n},q_{n})=1$.  The Laurent property means that all $q_n$ are monomials.
The following is Hickerson's result, as mentioned by Gale in \cite{Gal}
and proved by Robinson in \cite{Rob}.
\begin{theorem}
Equation (\ref{rob}) has the Laurent property if $(p_{k},p_{k+l})=1$ for $l =1,\dots,k$ and $q_{2k}$ is a monomial.
\end{theorem}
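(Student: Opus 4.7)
The plan is to proceed by strong induction on $n$, working in the UFD $R = \mathcal{R}[\tau_0,\dots,\tau_{k-1}]$ (with fraction field $K$) and its Laurent subring $L = \mathcal{R}[\tau_0^{\pm 1},\dots,\tau_{k-1}^{\pm 1}] \subset K$. Writing $\tau_n = p_n/q_n$ in lowest terms in $R$, the Laurent property is the statement that $q_n$ is a monomial for every $n$. First I would verify the base case $n \le 2k$: the iterates $\tau_k, \tau_{k+1},\ldots,\tau_{2k}$ can be computed by direct iteration of the recurrence, and the hypothesis that $q_{2k}$ is a monomial, together with the coprimality conditions $(p_k, p_{k+l}) = 1$ for $l = 1,\ldots,k$, are precisely the input data we will feed to the induction.

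For the inductive step, assume $\tau_m \in L$ for all $m < n$ with $n > 2k$. The recurrence $\tau_n = P(\tau_{n-k+1},\dots,\tau_{n-1})/\tau_{n-k}$ places $\tau_n$ in $K$, and the task is to upgrade this to $\tau_n \in L$. Because $q_{n-k}$ is already a monomial, hence a unit in $L$, this reduces to showing that the numerator $p_{n-k}$ divides $P(\tau_{n-k+1},\dots,\tau_{n-1})$ in $L$. The ring $L$ is a UFD whose primes correspond bijectively to the non-monomial primes of $R$, so it suffices to verify, for every non-monomial irreducible $r \mid p_{n-k}$ in $R$, that $r$ divides $P(\tau_{n-k+1},\dots,\tau_{n-1})$ in $L$ with multiplicity at least that in $p_{n-k}$.

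The main device is a shift of initial conditions: treat $\tau_{n-2k},\ldots,\tau_{n-k-1}$ as a new ordered tuple of initial values. Because the polynomial $P$ is unchanged, the iterates $\tau_{n-k},\ldots,\tau_n$ produced in this shifted frame are described by the same universal rational expressions $p_k/q_k,\ldots,p_{2k}/q_{2k}$ that describe $\tau_k,\ldots,\tau_{2k}$ in the original frame, only evaluated at the shifted initial tuple. Applying the base case in the shifted frame therefore yields $\tau_n$ as a Laurent polynomial in $\tau_{n-2k},\ldots,\tau_{n-k-1}$; and the coprimality hypothesis $(p_k, p_{k+l}) = 1$ translates, in the shifted frame, into the coprimality of the shifted numerator of $\tau_{n-k}$ with each of the shifted numerators of $\tau_{n-k+l}$, for $l=1,\dots,k$, as polynomials in the shifted initial data.

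The hardest step, and where the real bookkeeping lives, is to convert this shifted coprimality into the required divisibility in the original variables. A clean strategy is to fix an irreducible non-monomial $r \mid p_{n-k}$ in $R$, clear denominators in the identity $\tau_n \tau_{n-k} = P(\tau_{n-k+1},\dots,\tau_{n-1})$, and argue by contradiction: if $r$ failed to divide $P(\tau_{n-k+1},\dots,\tau_{n-1})$ in $L$ to sufficient order, then comparing prime factorisations in the cleared identity, combined with the inductive description of each $\tau_{n-k+l}$ via the universal $p_{k+l}/q_{k+l}$ at the shifted data, would force $r$ to contribute a common factor violating the transported coprimality. Running this argument for every non-monomial irreducible divisor of $p_{n-k}$ gives $\tau_n \in L$ and closes the induction.
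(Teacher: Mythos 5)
Your inductive framework and the idea of exploiting a shift of initial conditions are both on the right track, but the choice of shifted window introduces a gap that the paper's proof is specifically designed to avoid. You slide the window all the way to $\tau_{n-2k},\ldots,\tau_{n-k-1}$ and then invoke the universal expressions $p_j/q_j$ evaluated at that shifted tuple $\vec{s}$. The coprimality $(p_k,p_{k+l})=1$ is a statement about polynomials in $k$ fresh indeterminates; you correctly observe that it therefore holds for the universal numerators as polynomials in $\vec{s}$. But to run your contradiction you would need coprimality of $p_k(\vec{s})$ and $p_{k+l}(\vec{s})$ \emph{after substituting} the Laurent polynomials $\tau_{n-2k},\ldots,\tau_{n-k-1}$ for the fresh variables, i.e.\ coprimality in $L=\mathcal{R}[\tau_0^{\pm1},\dots,\tau_{k-1}^{\pm1}]$. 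That does not follow: coprimality is not preserved under substitution (e.g.\ $u$ and $v$ are coprime in $\Z[u,v]$, but substituting $u=xy$, $v=xz$ produces a common factor $x$), and algebraic independence of the $\tau_j$ does not rescue it. Your closing paragraph gestures at a contradiction via ``the transported coprimality,'' but that transported coprimality is precisely what is missing.

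The paper (in the proof of Theorem \ref{T1}, which specialises to the scalar statement) avoids the issue by using \emph{two fixed windows}, at positions $1$ and $k+1$, rather than a single window sliding with $n$. Expressing $\tau_n$ via the window $\tau_1,\ldots,\tau_k$ gives denominator $q_{n-1}(\tau_1,\ldots,\tau_k)$, a monomial by the inductive hypothesis; since $\tau_1,\ldots,\tau_{k-1}$ are themselves initial variables and $\tau_k=p_k/q_k$ with $q_k$ a monomial, after substitution this denominator is a Laurent monomial times a power of $p_k$ — literally the same $p_k$ appearing in the hypothesis, in the original variables. The second window $\tau_{k+1},\ldots,\tau_{2k}$ likewise puts only powers of $p_{k+1},\ldots,p_{2k}$ in the denominator. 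Any non-monomial irreducible factor of the reduced denominator of $\tau_n$ must then divide both $p_k$ and some $p_{k+l}$, contradicting the hypothesis $(p_k,p_{k+l})=1$ directly, with no transport of coprimality needed. To repair your argument you should replace the sliding window at $n-2k$ by these two fixed windows; with that change the rest of your strategy goes through.
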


Below we provide sufficient conditions  for systems of equations to possess the Laurent property. At the same time, we generalise
the form of the right-hand side of (\ref{rob}), by allowing a monomial denominator, and consider the case 
where the iterates  are 
Laurent polynomials in a subset of the initial variables and polynomial in the rest. 

 Consider a  system of $d$ ordinary difference equations of order $k$,
\begin{align}\label{IM1}
\tau_{n}^{i}=\frac{P^{i}(\tau_{n-k}^{1},\dots,\tau_{n-1}^{d})}{Q^{i}(\tau_{n-k}^{1},\dots,\tau_{n-1}^{d})},\quad P^i \,\, \mathrm{ polynomial},
 \quad Q^i \,\,\mathrm{ monomial},  \qquad i=1,\ldots,d. 
\end{align}
From a set of $kd$ initial values $U=\{\tau_{l}^{j}\}_{1\leq j\leq d,0\leq  l \leq k-1}$, where the superscripts denote components (not exponents), one finds $\tau_{n}^{i}$ 
as  rational functions of the initial values, given by
\begin{align}\label{IM2}
\tau_{n}^{i}=\frac{p_{n}^{i}(\tau_{0}^{1},\dots,\tau_{k-1}^{d})}{q_{n}^{i}(\tau_{0}^{1},\dots,\tau_{k-1}^{d})},
\end{align} with $(p_{n}^{i},q_{n}^{i})=1$.  By definition, if $ q_{n}^{i}\in{\cal R}[U]$ is a monomial for all $i$ and $n\geq0$, then (\ref{IM1}) has the Laurent property, meaning that each $\tau_{n}^{i}$ belongs to the 
ring ${\cal R}[U^{\pm 1}]:={\cal R}[(\tau_{0}^{1})^{\pm 1},\dots,(\tau_{k-1}^{d})^{\pm 1}]$. The form of (\ref{IM1}) guarantees that all components $q_{n}^{i}$ are monomials for $0\leq n \leq k$. Suppose these monomials depend on a subset of the initial values $V\subset U$, specified by a set of superscripts $I\subset \{1,\ldots,d\}$. The following conditions guarantee that $q_{n}^{i}$ are monomials 
 for all $i$ and $n\geq0$.

\begin{theorem}\label{T1} Suppose that  $ q_{k}^{i}$ is a monomial in ${\cal R}[V]$ for $1\leq i\leq d$.
If $p_{k}^{i}$ is  coprime to $p_{k+l}^{j}$ for all $ i,j\in I\subset\{1,\ldots, d\}$, $l =1,\dots,k$, and $q_{m}^{i}\in {\cal R}[V]$ is a monomial for $1\leq i\leq d$,
$k+1\leq m \leq 2k$, then (\ref{IM1}) has the Laurent property: all iterates are Laurent polynomials in the variables from $V$ and they are polynomial in the remaining variables from $W=U\setminus V$.
\end{theorem}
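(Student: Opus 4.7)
The plan is to prove by strong induction on $n$ that $q_n^i$ is a monomial in ${\cal R}[V]$ for every $1\le i\le d$ and every $n\ge 0$, adapting the Hickerson/Robinson scalar argument to the system case while tracking the split of the initial values into $V$ and $W=U\setminus V$. The base case $n\le 2k$ is immediate from the hypotheses: for $n\le k-1$ the iterate is an initial variable so $q_n^i=1$; for $n=k$ it is the first hypothesis; and for $k+1\le n\le 2k$ it is the third hypothesis.

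For the inductive step fix $n>2k$ and assume $q_m^i\in{\cal R}[V]$ is a monomial for every $i$ and every $m<n$. The key device is a shift of origin: set $\tilde\tau_j^i:=\tau_{j+(n-2k)}^i$, which satisfies the same recurrence (\ref{IM1}) with the same $P^i$ and $Q^i$, so the hypotheses of the theorem apply to it verbatim. Applying the base case to $\tilde\tau$ yields that $\tilde q_{2k}^i$ is a monomial in $\tilde V=\{\tilde\tau_l^j:j\in I,\ 0\le l\le k-1\}$. Since $\tau_n^i=\tilde\tau_{2k}^i=\tilde p_{2k}^i(\tilde\tau)/\tilde q_{2k}^i(\tilde\tau)$, substituting $\tilde\tau_l^j=p_{l+(n-2k)}^j/q_{l+(n-2k)}^j$ (monomial denominators, by the inductive hypothesis) and clearing monomial factors expresses the iterate as
$$
\tau_n^i \;=\; \frac{A}{M\,\prod_{j\in I,\,0\le l\le k-1}\!\bigl(p_{l+(n-2k)}^j\bigr)^{a_{jl}}},
$$
where $A\in{\cal R}[V^{\pm 1}][W]$, $M$ is a monomial in ${\cal R}[V]$, and the $a_{jl}$ are the nonnegative exponents with which the $\tilde\tau_l^j$ occur in $\tilde q_{2k}^i$.

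What remains is to show that each non-monomial factor $p_{l+(n-2k)}^j$ appearing in the denominator divides $A$, so that after reduction to lowest terms $q_n^i$ is a monomial in ${\cal R}[V]$. This is where the coprimality hypothesis $(p_k^i,p_{k+l}^j)=1$ for $i,j\in I$, $l=1,\ldots,k$, enters, once more through a shift: whenever $l+(n-2k)\ge k$, the polynomial $p_{l+(n-2k)}^j$ is structurally the ``$p_k$'' of a further shifted sequence, and the hypothesis, applied to that shift, supplies coprimality with the other polynomials that arise inside $A$; for $l+(n-2k)<k$ the factor is just the initial variable $\tau_{l+(n-2k)}^j$, which for $j\in I$ already lies in $V$ and so is monomial. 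The restriction of the coprimality hypothesis to pairs in $I$ suffices because the inductive hypothesis prevents any $W$-variable from ever entering a denominator, hence from contributing to $\tilde q_{2k}^i$ or to the problematic factors.

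The main obstacle, and the point where real work is required beyond a routine copy of Hickerson's scalar argument, is the coprimality bookkeeping in this last step: one must identify irreducible factors of the shifted $\tilde p^i$'s with those of the $p_{l+(n-2k)}^j$'s across two distinct polynomial rings (the shifted and the original initial-value rings), and verify that the coprimality restricted to indices in $I$ is indeed sufficient to force all the necessary cancellations. The split of $U$ into $V$ and $W$ is the mild new feature beyond Hickerson's scalar setting, and propagating it through the induction reduces to the simple observation that at no stage of the construction can a $W$-variable be introduced into a denominator.
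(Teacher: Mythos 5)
Your overall strategy (strong induction combined with shift-invariance of the universal formulas $p_m^i,q_m^i$) is sound and matches the spirit of the paper's argument, but the crucial step of the induction — showing that the $p$-factors that appear in the denominator actually cancel — is not established, and the way you set things up makes it hard to do so.

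Concretely: after your single shift by $n-2k$ you arrive at
\[
\tau_n^i=\frac{A}{M\cdot\prod_{j\in I,\,0\le l\le k-1}\bigl(p^{j}_{l+(n-2k)}\bigr)^{a_{jl}}},
\]
and you then claim that each $p^{j}_{l+(n-2k)}$ divides $A$ because the coprimality hypothesis, applied to a further shift, makes it coprime to ``the other polynomials that arise inside $A$.'' This does not follow: knowing a polynomial $p$ is coprime to various constituents of $A$ says nothing about whether $p\mid A$ — coprimality can only rule out shared factors, it cannot create divisibility. What is actually needed, and what the paper's proof supplies, is a \emph{second, independent} expression for the same $\tau_n^i$ whose pre-cancellation denominator involves a set of $p$'s that is coprime to the first. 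The paper does this with two different shifted bases: using $\{\tau_1,\dots,\tau_k\}$ gives a denominator that is a monomial in $V$ times powers of $p_k^j$ only (since $\tau_k$ is the only non-initial argument), while using $\{\tau_{k+1},\dots,\tau_{2k}\}$ gives a denominator that is a monomial in $V$ times powers of $p_{k+l}^j$ for $l=1,\dots,k$. Since $\tau_n^i=p_n^i/q_n^i$ in lowest terms must be common to both expressions, and the two products of $p$'s are coprime by hypothesis, the reduced denominator $q_n^i$ can contain none of these polynomial factors and is therefore a monomial in ${\cal R}[V]$. Your single-shift setup produces denominator factors $p^j_{n-2k},\dots,p^j_{n-k-1}$, a contiguous block, and no companion expression with which to compare; nor can a nearby second shift help directly, because its $p$-block overlaps and the coprimality hypothesis only reaches $k$ steps apart. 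You correctly identify that ``coprimality bookkeeping'' is where the work is, but you have not done that work, and the route you sketch does not close. The fix is precisely the two-expression comparison in the paper's proof.

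The part of your argument concerning the $V$/$W$ split is fine: since $W$-variables never enter any $q_m^i$ for $m\le 2k$ by hypothesis, and the substitution step only introduces monomials in $V$ and the $p$'s, the denominators stay in ${\cal R}[V]$ throughout the induction.
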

\begin{proof}
The proof is by induction in $n$.
If we regard $\{\tau_{l}^{j}\}_{1\leq j\leq d,1\leq  l \leq k}$ as initial data, then from (\ref{IM2}) we may write
\begin{align}\label{IM3}
\tau_{n}^{i}=\frac{p_{n-1}^{i}(\tau_{1}^{1},\dots,\tau_{k}^{d})}{q_{n-1}^{i}(\tau_{1}^{1},\dots,\tau_{k}^{d})},
\end{align}
while on the other hand,
by  taking $\{\tau_{l}^{j}\}_{1\leq j\leq d,k+1\leq  l \leq 2k}$ as initial values, we find
\begin{align}\label{IM4}
\tau_{n}^{i}=\frac{p_{n-k-1}^{i}(\tau_{k+1}^{1},\dots,\tau_{2k}^{d})}{q_{n-k-1}^{i}(\tau_{k+1}^{1},\dots,\tau_{2k}^{d})}.
\end{align}
Then by using (\ref{IM2}) again, the arguments $\tau^j_{m}$ for $m=k,\ldots,2k$ can be expressed as
Laurent polynomials in the variables from $V$ with coefficients in ${\cal R}[W]$. 
Thus 
for each $i$ the denominator
of (\ref{IM3}) becomes a monomial in the variables from $V$, multiplied by powers of the polynomials $p^j_k$ for $j\in I$. On the other hand, the denominator of (\ref{IM4}) becomes a monomial in variables from $V$ only, multiplied by powers of  $p^j_{k+l}$ for $j\in I$ and $l=1,\ldots, k$.
By the coprimality assumption, the only way that these two expressions can be equal is if 
all the powers of polynomials $p^j_m$ for $j\in I$ appearing in a denominator cancel with the numerator in each case, to leave a reduced expression for $\tau_{n}^{i}$ as a Laurent polynomial in the ring ${\cal R}[W][V^{\pm 1}]$. 
\end{proof}

The preceding result can be modified to include the case where the coefficients in system (\ref{IM1}) are periodic functions, e.g. as in (\ref{S45}), but we will not need this in the sequel. However, when discussing ultradiscretization it will
be convenient to describe periodic sequences using the following notation.
\begin{notation} \label{not}
A periodic function $f_n$ such that $f_{n+m}=f_n$ is defined by $m$ values: we write $f_{\mod m}=[v_1,\ldots,v_m]$ to mean $f_n=v_{n \mod m}$.
\end{notation}

\section{The multiplicative symmetric QRT map}
In this section, we show how to ``Laurentify'' the multiplicative symmetric QRT map, i.e. produce a corresponding
Laurent system  of recurrences, by applying homogenisation.
We then use ultradiscretization  to derive the
degree growth of
the 
map.
We also show how the Laurent system 
reduces to the
Somos-4 and Somos-5 equations with  periodic coefficients that were found in \cite{HK}.

\subsection{Laurentification of the multiplicative symmetric QRT map}

By taking  $u_{n}=\frac{k_{n}}{l_{n}}$  in $(\ref{mul})$  and  identifying  the numerators and denominators on both sides, we obtain a system that generates
sequences
$(k_{n})$ and $(l_{n})$, that is
\begin{subequations} \label{smsql}
\begin{align}
{k}_{n+1}{k}_{n-1}&=a_{3}{k}_{n}^{2}+a_{5}  {k}_{n}{l}_{n} +a_{6} {l}_{n}^{2},\label{smsqla}\\
{l}_{n+1}{l}_{n-1}&=a_{1}  {k}_{n}^{2}+a_{2}  {k}_{n}{l}_{n} +a_{3} {l}_{n}^{2}\label{smsqlb}.
\end{align}
\end{subequations}
Without loss of generality one can choose $(k_0,k_1,l_0,l_1)=(u_0,u_1,1,1)$ as  initial values for (\ref{smsql}).
Observe that the system (\ref{smsql}) is homogeneous of degree 2: it is 
a Hirota bilinear form for 
 $(\ref{mul})$.

\begin{proposition} \label{thm5}
The system (\ref{smsql}) has the Laurent property. Any four adjacent iterates $k_n$, $l_n$, $k_{n+1}$, $l_{n+1}$ are pairwise coprime  Laurent polynomials in the ring
$
{\cal R}[k_0^{\pm 1},k_1^{\pm 1},l_0^{\pm 1},l_1^{\pm 1}],
$
where ${\cal R}=\Z [a_1,a_2,a_3,a_5,a_6]$ is the ring of coefficients.
\end{proposition}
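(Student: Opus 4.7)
The strategy is twofold: apply Theorem~\ref{T1} to establish the Laurent property, then induct on $n$ for pairwise coprimality of four consecutive iterates. The system (\ref{smsql}) has the form (\ref{IM1}) with $d = k = 2$. Since the denominator monomials $Q^{1} = k_{n-2}$ and $Q^{2} = l_{n-2}$ together involve both components of the step-($n-k$) data, the natural choice in Theorem~\ref{T1} is $I = \{1,2\}$, hence $V = U$ and $W = \emptyset$, promising full Laurentness in all four variables $k_{0},l_{0},k_{1},l_{1}$.

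First I would compute the reduced iterates $\tau_{n}^{i} = p_{n}^{i}/q_{n}^{i}$ for $n = 2,3,4$; in particular $p_{2}^{1}=a_{3}k_{1}^{2}+a_{5}k_{1}l_{1}+a_{6}l_{1}^{2}$ with $q_{2}^{1}=k_{0}$, and $p_{2}^{2}=a_{1}k_{1}^{2}+a_{2}k_{1}l_{1}+a_{3}l_{1}^{2}$ with $q_{2}^{2}=l_{0}$. The $q_{3}^{i}$ and $q_{4}^{i}$ remain monomials (after any cancellation with their numerators) simply because the recurrence only ever divides by monomials. The substantive hypothesis is that $p_{2}^{i}$ is coprime to $p_{2+l}^{j}$ for $l=1,2$ and all $i,j \in \{1,2\}$.

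This coprimality check is the main technical step. Both $p_{2}^{1}$ and $p_{2}^{2}$ are quadratic forms in $(k_{1},l_{1})$ alone, and are generically irreducible over ${\cal R}$, so a common divisor with $p_{3}^{j}$ or $p_{4}^{j}$ would have to be $p_{2}^{i}$ itself up to a unit. I would rule this out by computing $p_{3}^{j}, p_{4}^{j}$ modulo $p_{2}^{i}$ in ${\cal R}[k_{0},l_{0},k_{1},l_{1}]/(p_{2}^{i})$, or equivalently by checking that the corresponding resultants in $k_{1}$ (or $l_{1}$) are nonzero polynomials in the remaining variables. These are small direct computations, after which Theorem~\ref{T1} immediately gives the Laurent property.

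For the pairwise coprimality of $k_{n},l_{n},k_{n+1},l_{n+1}$ I would induct on $n$. The base $n=0$ is trivial, since $k_{0},l_{0},k_{1},l_{1}$ are independent indeterminates. In the inductive step, $(k_{n},l_{n})$ is inherited, and the remaining five pairs follow from the recurrences (\ref{smsql}). For instance, any common factor of $k_{n+1}$ and $k_{n}$ divides $k_{n+1}k_{n-1}=a_{3}k_{n}^{2}+a_{5}k_{n}l_{n}+a_{6}l_{n}^{2}$, hence divides $a_{6}l_{n}^{2}$; since $(k_{n},l_{n})=1$ and $a_{6}$ is a unit over $\mathbb{Q}(a_{1},\dots,a_{6})$, the factor must be a unit. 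The trickiest pair is $(k_{n+1},l_{n+1})$: a common factor $f$ must divide both quadratic forms $a_{3}k_{n}^{2}+a_{5}k_{n}l_{n}+a_{6}l_{n}^{2}$ and $a_{1}k_{n}^{2}+a_{2}k_{n}l_{n}+a_{3}l_{n}^{2}$ in $(k_{n},l_{n})$, and these are coprime (their resultant is a nonzero polynomial in the $a_{i}$'s), so any prime factor of $f$ must divide both $k_{n}$ and $l_{n}$, contradicting the inductive hypothesis. The main obstacle throughout is bookkeeping around genericity of $a_{1},\dots,a_{6}$: strict coprimality inside ${\cal R}[U^{\pm 1}]$ could in principle allow coefficient-only common factors that a direct recurrence argument does not by itself rule out.
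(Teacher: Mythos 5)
Your proof follows essentially the same strategy as the paper's: apply Theorem~\ref{T1} with $d=k=2$, $I=\{1,2\}$, $V=U$ for the Laurent property, and then induct on $n$ for pairwise coprimality, handling $(k_{n+1},k_n)$, $(k_{n+1},l_n)$, etc.\ by reduction to $(k_n,l_n)=1$. For the pair $(k_{n+1},l_{n+1})$ the paper packages the resultant argument as a $4\times4$ Sylvester matrix identity $\mathrm{S}\,{\bf h}_n=P\,{\bf v}_n$ and multiplies by the adjugate to conclude $P$ divides each monomial $k_n^{3-i}l_n^i$, whereas you invoke the non-vanishing resultant of the two quadratic forms more informally to reach the same contradiction; your closing caveat about genericity of the $a_i$ corresponds to the remark the paper makes immediately after the proposition.
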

\begin{proof}
The Laurent property can be verified directly by applying Theorem \ref{T1} in the case that the dimension $d=2$ and the order $k=2$.
For the coprimality, observe that when $n=0$ this is trivially true, and proceed by induction in $n$.
If a non-constant Laurent polynomial $P\in{\cal R}[k_0^{\pm 1},k_1^{\pm 1},l_0^{\pm 1},l_1^{\pm 1}]$ is a common factor of $k_{n+1}$ and $k_n$, then it
divides the right-hand side of (\ref{smsqla}), hence $P| l_n$, which contradicts $(k_n,l_n)=1$. Thus
$(k_{n+1},k_n)=(k_{n+1},l_n)=1$, and similarly from   (\ref{smsqlb}) we have $(l_{n+1},k_n)=(l_{n+1},l_n)=1$. Now let  $P$ be  a  common factor of $k_{n+1}$ and $l_{n+1}$, 
$$
\implies
\mathrm{S} \, {\bf h}_n = P \, {\bf v}_n,
\ \mathrm{ where } \ \
\mathrm{S}=\begin{pmatrix}
a_3 & a_5 & a_6 & 0 \\
0 & a_3 & a_5 & a_6 \\
a_1 & a_2 & a_3 & 0 \\
0 & a_1 & a_2 & a_3
\end{pmatrix}, \ {\bf h}_n =\left(\begin{array}{c} k_n^3 \\ k_n^2l_n \\ k_nl_n^2 \\ l_n^3 \end{array}\right),
\, P \, {\bf v}_n =\left(\begin{array}{c} k_nk_{n+1}k_{n-1} \\ k_{n+1}k_{n-1}l_n \\ k_nl_{n+1}l_{n-1} \\ l_{n+1}l_{n-1}l_n \end{array}\right)  .
$$ Multiplying the above equation by the adjugate of 
the Sylvester matrix $\mathrm S$ yields
$$
\mathrm{R} \,   {\bf h}_n = P\,  \mathrm{S^{adj} }  {\bf v}_n,
$$
where the resultant $\mathrm R=\det\mathrm{S}$ is a non-zero element of the coefficient ring $\cal R$, namely
\beq\label{resul}
\mathrm{R} = a_1^2 a_6^2-a_1 a_2 a_5 a_6-2 a_1 a_3^2 a_6+a_1 a_3 a_5^2+a_2^2 a_3 a_6-a_2 a_3^2 a_5+a_3^4\neq 0.
\eeq
Hence $P$ divides each component of the vector ${\bf h}_n$, contradicting $(k_n,l_n)=1$.
\end{proof}
\begin{remark}
The latter  result remains true for numerical values $a_i$ such that $a_1a_3a_6\mathrm{R}\neq 0$.
\end{remark}

The Laurent property implies  that, in general,  the iterates of (\ref{smsql}) can be written in the form
\beq\label{deform}
k_n = \frac{\rN_n ({\bf k})}{{\bf k}^{{\bf d}_n}}, \qquad
l_n = \frac{\hat{\rN}_n ({\bf k})}{{\bf k}^{{\bf e}_n}},
\eeq
where $\rN_n$, $\hat{\rN}_n$ are polynomials in ${\bf k}=(k_0,k_1,l_0,l_1)$ that are not divisible by any of these four variables,
while the denominators are Laurent monomials, i.e.
$$
{\bf k}^{{\bf d}_n}=k_0^{d_n^{(1)}}k_1^{d_n^{(2)}}l_0^{d_n^{(3)}}l_1^{d_n^{(4)}}, \qquad
{\bf d}_n = (d_n^{(1)}, d_n^{(2)}, d_n^{(3)}, d_n^{(4)})^T
$$
and similarly for ${\bf k}^{{\bf e}_n}$, where the exponents appearing in the denominator vectors
${\bf d}_n$
and ${\bf e}_n$ are integers. The initial vectors are
\beq\label{dvecs}
{\bf d}_0=\left( \begin{array}{c} -1 \\ 0 \\ 0 \\ 0\end{array}\right), \quad
{\bf d}_1=\left( \begin{array}{c} 0\\  -1  \\ 0 \\ 0\end{array}\right), \quad
{\bf e}_0=\left( \begin{array}{c}  0 \\ 0\\-1 \\ 0\end{array}\right), \quad
 {\bf e}_1=\left( \begin{array}{c}  0 \\ 0\\ 0 \\ -1 \end{array}\right).
\eeq

\subsection{Growth of degrees of the multiplicative symmetric QRT map}

In order to measure the growth of degrees of the map (\ref{mul}), we consider the growth of the degrees of the Laurent polynomials
that are generated by the system  (\ref{smsql}). From the form of this system, it is clear that $k_n,l_n$ are homogeneous rational functions of degree 1 in
the initial values   ${\bf k}=(k_0,k_1,l_0,l_1)$, which implies that
\beq\label{homog}
\deg_{\bf k}( \rN_n  ({\bf k})) = 1+\deg_{\bf k} ({\bf k}^{{\bf d}_n}) =1+\sum_{j=1}^4d_n^{(j)}, \qquad
\deg_{\bf k}( \hat{\rN}_n ({\bf k})) = 1+\deg_{\bf k} ({\bf k}^{{\bf e}_n}) =1+\sum_{j=1}^4e_n^{(j)},
\eeq
where $\deg_{\bf k}$ denotes the total degree  in these variables.
Furthermore, from the form of (\ref{smsql}), $k_n,l_n$ are also subtraction-free rational expressions in the initial values,
meaning that a standard argument which is used in the theory of cluster algebras can be applied (cf. equation (7.7) in \cite{fziv}, or Lemma 8.3 in \cite{fst}),
and hence the denominator vectors ${\bf d}_n,{\bf e}_n$ satisfy a tropical version of the Laurent system,
given by the max-plus ultradiscretization\footnote{Here, and in the sequel, when going from a discrete equation  (recurrence) to an ultradiscrete one, we assume the parameters $a_i$ are
generic; in particular, for (\ref{smsql}) we assume non-zero $a_i $ such that (\ref{resul}) holds.}
\beq\label{tropical}
\begin{array}{rcl}
{\bf d}_{n+1}+{\bf d}_{n-1} & = & \max (2{\bf d}_{n}, {\bf d}_{n}+{\bf e}_{n}, 2{\bf e}_{n}), \\
{\bf e}_{n+1}+{\bf e}_{n-1} & = & \max (2{\bf d}_{n}, {\bf d}_{n}+{\bf e}_{n}, 2{\bf e}_{n}),
\end{array}
\eeq
where the max applies componentwise on the right-hand side.

Due to  the symmetrical form of the tropical system (\ref{tropical}) and the initial vectors (\ref{dvecs}), the solution of this ultradiscrete vector
system can be written as
$$
{\bf d_n} = (d_n,d_{n-1},e_n,e_{n-1})^T, \qquad
{\bf e_n} =  (e_n,e_{n-1},d_n,d_{n-1})^T,
$$
in terms of a pair of sequences $(d_n)$, $(e_n)$ which satisfy the scalar version of (\ref{tropical}), that is
\beq\label{scalar}
\begin{array}{rcl}
{d}_{n+1}+{d}_{n-1} & = & \max (2{d}_{n}, { d}_{n}+{e}_{n}, 2{ e}_{n}), \\
{ e}_{n+1}+{ e}_{n-1} & = & \max (2{d}_{n}, {d}_{n}+{e}_{n}, 2{ e}_{n}),
\end{array}
\eeq
with the initial values
$ d_0=-1$,  $d_1=e_0=e_1=0$.
If we introduce the sums and differences
$$
\Sigma_n = d_n+e_n, \qquad \Delta_n = d_n-e_n,
$$
and note the fact that $\max(\Delta, 0, -\Delta)=|\Delta|$, then the scalar system (\ref{scalar}) becomes
\beq\label{simp}
\begin{array}{rcl}
\Sigma_{n+2}-2\Sigma_{n+1}+\Sigma_{n} & = & 2|\Delta_{n+1}|, \\
\Delta_{n+2}+\Delta_{n} & = & 0 ,
\end{array}
\eeq
with initial values
\beq\label{dsinits}
\Sigma_0=\Delta_0=-1, \qquad \Sigma_1=\Delta_1=0.
\eeq
The decoupled equation for $\Delta_n$ implies that this quantity has period 4, and from the initial values it is clear
that
\beq\label{delsol}
\Delta_{\mod 4}=[0,1,0,-1],
\eeq
so the right-hand side
of the first equation in (\ref{simp}) has period 2, which gives the homogeneous linear
equation
$$
({\cal S}^2 -1)({\cal S}-1)^2 \Sigma_n=0,
$$ where $\cal S$ denotes the shift operator such that ${\cal S}\Sigma_n = \Sigma_{n+1}$.
Using the fact that $\Sigma_n$ takes  the sequence of values $-1,0,1,4$ for $n=0,1,2,3$, this fourth-order
recurrence is readily solved.
\begin{lemma}\label{ivp}
The solution of the system (\ref{simp}) with initial values (\ref{dsinits}) is given by
$$
\Sigma_n = \frac{1}{2}n^2 -\frac{3}{4}-\frac{(-1)^n}{4}
$$
together with (\ref{delsol}).
\end{lemma}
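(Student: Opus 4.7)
The plan is to handle the two equations of the system (\ref{simp}) in order, starting with the decoupled one for $\Delta_n$. The recurrence $\Delta_{n+2}=-\Delta_n$ with $\Delta_0=-1$, $\Delta_1=0$ immediately forces period $4$, and iterating two more steps gives $\Delta_2=1$, $\Delta_3=0$, which matches the claim $\Delta_{{\rm mod}\,4}=[0,1,0,-1]$ (under Notation \ref{not}, where the index $0$ value sits in the last slot).

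Next I feed this back into the first equation. Because $|\Delta_{n+1}|$ just alternates between $0$ and $1$, the inhomogeneous term $2|\Delta_{n+1}|$ is a $2$-periodic sequence, hence is annihilated by the operator $\mathcal{S}^2-1$. Applying $\mathcal{S}^2-1$ to both sides of $(\mathcal{S}-1)^2\Sigma_n=2|\Delta_{n+1}|$ then yields the homogeneous fourth-order recurrence $(\mathcal{S}^2-1)(\mathcal{S}-1)^2\Sigma_n=0$, as already observed in the excerpt. The characteristic polynomial factors as $(z-1)^3(z+1)$, so the general solution has the form
\begin{equation*}
\Sigma_n = A + Bn + Cn^2 + D(-1)^n.
\end{equation*}

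To pin down the four constants I compute four initial values of $\Sigma_n$ directly from (\ref{simp}): from (\ref{dsinits}) and the computed $|\Delta_{n+1}|$ pattern, I get $\Sigma_0=-1$, $\Sigma_1=0$, $\Sigma_2=1$, $\Sigma_3=4$ (matching the values quoted just before the lemma statement). Solving the resulting $4\times 4$ linear system for $A,B,C,D$ gives $A=-3/4$, $B=0$, $C=1/2$, $D=-1/4$, i.e.\ $\Sigma_n = \tfrac12 n^2-\tfrac34-\tfrac14(-1)^n$. By the uniqueness of the solution to a fourth-order linear recurrence with four prescribed initial values, this formula holds for all $n\geq 0$; combined with the already established expression for $\Delta_n$, this is precisely the claim of the lemma.

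There is no serious obstacle here: once the observation that $|\Delta_{n+1}|$ is $2$-periodic has upgraded (\ref{simp}) to a constant-coefficient linear recurrence, everything reduces to elementary bookkeeping. The only mildly delicate point is making sure that $\mathcal{S}^2-1$ really does kill the forcing term (which relies on $\Delta_{n+1}$ being non-zero only at even $n$), and that the initial-value fit uses enough values to determine all four constants of the general solution — both are straightforward to check.
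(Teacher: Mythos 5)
Your proof is correct and follows essentially the same route as the paper: derive $\Delta_{\bmod 4}$ from the decoupled period-4 recurrence, note the forcing $2|\Delta_{n+1}|$ is $2$-periodic so $(\mathcal{S}^2-1)(\mathcal{S}-1)^2\Sigma_n=0$, then fit the general solution $A+Bn+Cn^2+D(-1)^n$ to the four computed initial values $\Sigma_0,\ldots,\Sigma_3=-1,0,1,4$. The paper states these steps more tersely but the argument is identical.
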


Now if we substitute in the initial values $(k_0,k_1,l_0,l_1)=(u_0,u_1,1,1)$ then the numerators in (\ref{homog}) become a pair of polynomials in $u_0,u_1$, denoted $\rN_n ({\bf u})$, $\hat{\rN}_n({\bf u})$, and 
we find
\beq\label{unform}
u_n = \frac{k_n}{l_n}=\frac{\rN_n({\bf u})}{u_0^{\Delta_n}u_1^{\Delta_{n-1}} \hat{\rN}_n({\bf u})}.
\eeq
For generic non-zero coefficients such that (\ref{resul}) holds,   the polynomials $\rN_n ({\bf u})$
and $\hat{\rN}_n({\bf u})$ are coprime, and from the form of the recurrence they always contain a term of highest possible total degree in $u_0,u_1$. Thus, by  (\ref{homog}) and the result of Lemma \ref{ivp}, we have
$$
\deg_{\bf u}( \rN_n ({\bf u})) = \deg_{\bf k}   (\rN_n ({\bf k})) =\Sigma_{n-1}+\Sigma_n+1 =n^2-n,
$$
and the same formula holds for  $ \deg_{\bf u}( \hat{\rN}_n ({\bf u}) )$. From the periodic sequence  (\ref{delsol}) it is clear that the Laurent monomial factor  in (\ref{unform}) cycles in the pattern $u_0,u_1,u_0^{-1},u_1^{-1}$, so the degree of the numerator is one more than the degree of the denominator, or vice versa. Hence
$\deg_{\bf u}(u_n)=1+\deg_{\bf u}( \rN_n ({\bf u}))=1+\deg_{\bf u}( \hat{\rN}_n ({\bf u}))$, which yields an exact formula for this degree.
\begin{theorem}
As a rational function of the initial values $u_0,u_1$, the $n$th iterate $u_n$ of
 the multiplicative QRT map (\ref{mul}) has degree
$n^2-n+1$.
\end{theorem}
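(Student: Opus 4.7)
The plan is to combine the Laurent representation from Proposition~\ref{thm5} with the closed-form solution of the ultradiscrete initial-value problem given in Lemma~\ref{ivp}. Having set $(k_0,k_1,l_0,l_1)=(u_0,u_1,1,1)$, formula~(\ref{unform}) expresses $u_n$ as the product of a Laurent monomial $u_0^{\Delta_n}u_1^{\Delta_{n-1}}$ with a ratio of polynomials in $u_0,u_1$. My task reduces to computing the degree of each factor in turn.

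For the polynomial part, I would invoke the homogeneity relation (\ref{homog}) together with the symmetry of the tropical vector system (\ref{tropical}), which reduces the exponent bookkeeping to the pair of scalar sequences $d_n,e_n$ and hence to $\Sigma_n=d_n+e_n$. A short computation using Lemma~\ref{ivp} then gives
$$
\deg_{\mathbf{u}}\rN_n(\mathbf{u})=\Sigma_{n-1}+\Sigma_n+1=n^2-n,
$$
where the alternating $(-1)^n$ contributions cancel between consecutive values of $\Sigma$. The same formula holds for $\hat{\rN}_n$ by the symmetry of the vector system. For the monomial factor, (\ref{delsol}) shows that $(\Delta_n,\Delta_{n-1})$ always has exactly one nonzero entry, equal to $\pm 1$, so this factor bumps the degree of $u_n$ by exactly one. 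Combining the two contributions yields $\deg_{\mathbf{u}}(u_n)=n^2-n+1$.

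The main obstacle, and the reason this reads as a plan rather than a one-line computation, is justifying two claims used implicitly above. First, one must check that $\rN_n(\mathbf{k})$ actually attains its generic total degree $n^2-n$ after the specialisation $l_0=l_1=1$: no top-degree monomial in the four variables $k_0,k_1,l_0,l_1$ should be killed. This is where the hypothesis of generic coefficients is essential; it should follow by tracking a single leading monomial through the subtraction-free recurrence~(\ref{smsql}), where no cancellations can occur provided $a_1a_3a_6\mathrm{R}\neq 0$. Second, the numerator and denominator in the reduced form of $u_n$ must be coprime after specialisation, so that the Laurent monomial factor cannot be absorbed into the polynomial ratio; this is inherited from the coprimality $(k_n,l_n)=1$ established in Proposition~\ref{thm5}, since any common factor in $u_0,u_1$ would lift to a nontrivial common factor of $k_n$ and $l_n$ in the ambient Laurent ring.
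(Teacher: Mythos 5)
Your proposal follows the same route as the paper: invoke the Laurent form (\ref{unform}) together with Lemma~\ref{ivp}, compute $\deg_{\mathbf{k}}(\rN_n)=\Sigma_{n-1}+\Sigma_n+1=n^2-n$ via the homogeneity relation (\ref{homog}), and use the 4-periodic sequence $\Delta_n$ to see that the Laurent monomial shifts the degree up by exactly one. You also correctly identify the two points that need care: (i) that the degree of $\rN_n$ in $u_0,u_1$ is not lowered when specialising $l_0=l_1=1$, for which your plan of tracking a leading monomial through the subtraction-free recurrence is exactly what the paper gestures at; and (ii) that $\rN_n(\mathbf{u})$ and $\hat{\rN}_n(\mathbf{u})$ are coprime so the Laurent monomial and the polynomial ratio do not interact.

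However, your justification of point (ii) is not valid as stated. You argue that a common factor of $\rN_n(\mathbf{u})$ and $\hat{\rN}_n(\mathbf{u})$ ``would lift to a nontrivial common factor of $k_n$ and $l_n$ in the ambient Laurent ring.'' This lifting does not happen in general: coprimality of polynomials in $k_0,k_1,l_0,l_1$ is \emph{not} preserved under the specialisation $l_0=l_1=1$. For instance, $k_0l_0-k_1l_1$ and $k_0l_1-k_1l_0$ are coprime in four variables, yet both specialise to $u_0-u_1$. So the coprimality of $k_n$ and $l_n$ established in Proposition~\ref{thm5} does not, by itself, give what you need. The paper instead treats this as a genericity claim about the coefficients $a_i$ (asserted to hold whenever the resultant (\ref{resul}) is nonzero), not as a consequence of the four-variable coprimality. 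To repair your argument you would need to show directly that coprimality survives specialisation — e.g.\ by a resultant or discriminant computation in the coefficients — rather than appeal to lifting.
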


\subsection{Degeneration to Somos recurrences}	
In this subsection, we demonstrate that the Somos-4 and Somos-5 recurrences  (\ref{S45}),  with periodic coefficients of periods 8 and 7 respectively, arise as special cases of the Laurent system (\ref{smsql}).

In \cite{HK} the initial data for (\ref{fha}) were taken as $u_1,u_2$, while here we use $u_0,u_1$ instead; mutatis mutandis, the periodic coefficients found in \cite{HK} are defined as follows.
 \begin{definition}
The Somos-4 equation with periodic coefficients which generates the divisors of the QRT map (\ref{fha}) is given by
(\ref{S4})  with
$
\alpha_{n}=\alpha u_{0}^{\p_{n+2}} u_{1}^{\p_{n-1}}$, $\beta_{n}=\beta u_{0}^{\q_{n+2}} u_{1}^{\q_{n-1}} $,
and
\begin{equation} \label{pq}
\p_{\mod 8}=[0,1,0,0,1,0,0,1]\text{ and }\q_{\mod 8}=[2,0,0,1,0,1,0,0].
\end{equation}
\end{definition}
\begin{theorem} \label{Thm1}
In the degenerate case $a_{1}=1$, $a_{2}=a_{3}=0$, $a_{5}=\alpha$ and $a_{6}=\beta$ the quantities $k_{n}$ and $l_{n}$ can be expressed in terms of solutions to (\ref{S4}) as
\begin{align}\label{m15}
k_{n}=u_{0}^{2\ze_{n+2}+\q_{n+1}-\p_{n+1}}u_{1}^{2\ze_{n-1}+\q_{n-2}-\p_{n-2}}c_{n-2}c_{n}, \quad
l_{n}=u_{0}^{2\ze_{n+2}}u_{1}^{2\ze_{n-1}}c_{n-1}^{2},
\end{align}
where $\ze_n$ satisfies
\beq\label{zeta}
({\cal S}-1)^2\, \ze_n=\p_n-\q_n.
\eeq
\end{theorem}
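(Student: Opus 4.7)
My plan is to verify the closed-form formulas (\ref{m15}) by direct substitution into the reduced Laurent system (\ref{smsql}), which in the degenerate case $a_1=1$, $a_2=a_3=0$, $a_5=\alpha$, $a_6=\beta$ becomes
\begin{equation*}
k_{n+1}k_{n-1} = \alpha k_n l_n + \beta l_n^2, \qquad l_{n+1}l_{n-1} = k_n^2.
\end{equation*}
The second, simpler relation is the source of the ansatz: writing $l_n = \mu_n c_{n-1}^2$ with $\mu_n$ a Laurent monomial in $u_0,u_1$ forces, via $l_{n+1}l_{n-1}=k_n^2$, that $k_n = \nu_n c_{n-2}c_n$ with $\nu_n^2 = \mu_{n+1}\mu_{n-1}$, matching the structure of (\ref{m15}).

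First I would substitute the ansatz for $l_n$ into $l_{n+1}l_{n-1}=k_n^2$: the $c$-dependence reduces to $c_n^2 c_{n-2}^2$ on both sides, and matching the exponents of $u_0$ and $u_1$ gives the second-difference relation (\ref{zeta}) evaluated at two shifted arguments. This step simultaneously motivates the definition of $\zeta_n$ and verifies (\ref{smsqlb}).

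Next I would substitute into $k_{n+1}k_{n-1}=\alpha k_n l_n + \beta l_n^2$. The three terms carry $c$-factors $c_{n-3}c_{n-1}^2 c_{n+1}$, $c_{n-2}c_{n-1}^2 c_n$, and $c_{n-1}^4$; after dividing through by $c_{n-1}^2$, the equation takes the shape
\begin{equation*}
M_1\, c_{n-3}c_{n+1} = \alpha M_2\, c_{n-2}c_n + \beta M_3\, c_{n-1}^2,
\end{equation*}
with monomial coefficients $M_i=M_i(n)$ in $u_0, u_1$ whose exponents are explicit linear combinations of values of $\zeta$, $\p$ and $\q$. Comparing with the Somos-4 relation (\ref{S4}) shifted by $n \mapsto n-1$, the required coincidences $M_2/M_1 = u_0^{\p_{n+1}} u_1^{\p_{n-2}}$ and $M_3/M_1 = u_0^{\q_{n+1}} u_1^{\q_{n-2}}$ telescope, using (\ref{zeta}), into identities that hold automatically. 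Finally, the data $(k_0,k_1,l_0,l_1)=(u_0,u_1,1,1)$ picks out consistent initial values for $(c_n)$ and $(\zeta_n)$, up to a rescaling of $c_n$ and a shift of $\zeta_n$ by a linear polynomial in $n$ (a gauge freedom of the ansatz).

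The main obstacle I anticipate is the bookkeeping of the exponents in the second step: the coincidence of $M_2/M_1$ and $M_3/M_1$ with the required monomials is not a single instance of (\ref{zeta}) but a sum of shifted instances, so one must check that the coefficients of each $\p_{n+k}$ and $\q_{n+k}$ match individually on both sides, rather than relying on further cancellations that would impose unwanted constraints on the periodic sequences (\ref{pq}). Once the telescoping pattern is identified, verifying the theorem reduces to a routine check of finitely many residues modulo the period $8$ using the explicit data in (\ref{pq}).
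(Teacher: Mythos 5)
Your strategy---direct substitution of (\ref{m15}) into the degenerate bilinear system---is sound, but the paper takes a cleaner, factored route that you might compare against. The paper first notes that $k_n=\tau_{n-2}\tau_n$, $l_n=\tau_{n-1}^2$ solves the degenerate (\ref{smsql}) whenever $\tau_n$ satisfies the \emph{autonomous} Somos-4 relation $\tau_{n+2}\tau_{n-2}=\alpha\tau_{n+1}\tau_{n-1}+\beta\tau_n^2$ (a one-line algebraic identity involving no periodic data at all), verifies that every solution of the degenerate system can be put in this form by choosing $\tau$-initial values (at the cost of square roots), and then converts $\tau_n$ to a solution $c_n$ of the non-autonomous (\ref{S4}) via a gauge transformation $\tau_n=u_0^{\ze_{n+3}}u_1^{\ze_n}c_n$. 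This gauge forces a fourth-order linear recurrence $({\cal S}^2-1)^2\ze_n=-\q_{n+1}$ in addition to (\ref{zeta}), and the heart of the proof is the periodic-data identity $({\cal S}+1)^2(\p_n-\q_n)=-\q_{n+1}$, which collapses the fourth-order relation to a consequence of (\ref{zeta}). Your single-substitution approach would, after dividing by $c_{n-1}^2$ and matching the monomial prefactors $M_2/M_1$, $M_3/M_1$ against $\alpha_{n-1},\beta_{n-1}$, land on an equivalent identity. The factored approach buys a clean separation of the ``autonomous skeleton'' from the ``gauge bookkeeping''; your approach is more direct but the bookkeeping is more entangled.

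There is one soft spot in your write-up. You claim that the required coincidences $M_2/M_1=u_0^{\p_{n+1}}u_1^{\p_{n-2}}$ and $M_3/M_1=u_0^{\q_{n+1}}u_1^{\q_{n-2}}$ ``telescope, using (\ref{zeta}), into identities that hold automatically.'' They do \emph{not} follow from (\ref{zeta}) alone; they additionally require a constraint binding $\p$ and $\q$ together---precisely the paper's identity $({\cal S}+1)^2(\p_n-\q_n)=-\q_{n+1}$ (or an equivalent reformulation). Without this, after telescoping shifted instances of (\ref{zeta}) you are left with a relation between sums of shifted $\p_k-\q_k$ and single entries of $\p$ or $\q$, which is only true for the particular period-8 data in (\ref{pq}). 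You do row this back at the end (``a routine check of finitely many residues modulo the period 8''), which is the correct conclusion, but the earlier wording suggesting the matching is automatic should be dropped; the finite verification against (\ref{pq}) is an essential step, not a formality. Aside from this, and the fact that the existence of consistent initial data $(c_{-2},\dots,c_1,\ze_0,\ze_1)$ compatible with $(k_0,k_1,l_0,l_1)=(u_0,u_1,1,1)$ is asserted rather than exhibited, your plan is workable.
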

\begin{proof}
Taking $a_{1}=1$ and  $a_{2}=a_{3}=0$, $a_{5}=\alpha$ and $a_{6}=\beta$ in equation (\ref{smsql}),
it is easy to see that $k_n=\tau_{n-2}\tau_n$, $l_n=\tau_{n-1}^2$ is a solution of (\ref{smsql}) whenever
$\tau_n$ satisfies the autonomous Somos-4 recurrence
\beq \label{auts4}
\tau_{n+2}\tau_{n-2}=\al \tau_{n+1}\tau_{n-1}+\be \tau_n^2.
\eeq
Moreover, for this degenerate choice of coefficients, every solution of (\ref{smsql}) can be written in this way, e.g. by taking $\tau_{-2}=k_0/\sqrt{l_1}$,  $\tau_{-1}=\sqrt{l_0}$, $\tau_0=\sqrt{l_1}$,  $\tau_{1}=k_1/\sqrt{l_0}$ as
initial values for (\ref{auts4}). Now one can make a gauge transformation between $\tau_n$ and $c_n$, which is
required to satisfy    (\ref{S4}) :
$$
\tau_n = u_0^{\ze_{n+3}}u_1^{\ze_{n}}c_n \implies ({\cal S}^2 -1)^2 \ze_n =-\q_{n+1},
$$
a fourth-order recurrence for $\ze_n$, and also (\ref{zeta}) must hold. But the form of the
 8-periodic coefficients (\ref{pq}) implies that $({\cal S}+1)^2 (\p_n-\q_n)=-\q_{n+1}$, so the fourth-order relation is
a consequence of (\ref{zeta}), which completely determines the sequence of exponents $(\ze_n)$, 
up to a suitable choice of
$\ze_0,\ze_1$.
\end{proof}

\begin{definition}
The Somos-5 equation with periodic coefficients which generates the divisors of the QRT map (\ref{fhb})
is given by (\ref{S5}),
where
\begin{align*}
\gamma_{n}=\gamma u_{0}^{\p_{n+1}}u_{1}^{\p_{n-3}} \text { and }
\quad  \delta_{n}=\delta u_{0}^{\q_{n+1}}u_{1}^{\q_{n-3}},
\end{align*} with $\p_{\mod 7}=[1,0,0,0,1,0,0] \text{ and } \q_{\mod 7} =[0,0,1,0,0,1,1]$.
\end{definition}
\begin{theorem}
In the degenerate  case  $a_{1}=a_{3}=0$, $a_{2}=1$, $a_{5}=\gamma$ and $a_{6}=\delta$, $k_{n}$ and $l_{n}$ are expressed in terms of solutions to (\ref{S5}) as
\begin{align}\label{Q6}
k_{n}=u_{0}^{\eta_{n+3} +\eta_{n}}u_{1}^{\eta_{n-1} +\eta_{n-4}}d_nd_{n-3},
 l_{n}=u_{0}^{\eta_{n+2} +\eta_{n+1}}u_{1}^{\eta_{n-2} +\eta_{n-3}}d_{n-1}d_{n-2},
\end{align}where
\beq\label{third} ({\cal S}^3 - {\cal S}^2-{\cal S}+1)\eta_n = \p_n-\q_n. \eeq
\end{theorem}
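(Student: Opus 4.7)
The plan is to follow the same strategy as the Somos-4 case in the previous theorem. With the degenerate choice of coefficients the system (\ref{smsql}) becomes
\begin{equation*}
k_{n+1}k_{n-1}=\gamma k_nl_n+\delta l_n^2,\qquad l_{n+1}l_{n-1}=k_nl_n.
\end{equation*}
The second equation is identically satisfied by the bilinear ansatz $k_n=\tau_n\tau_{n-3}$, $l_n=\tau_{n-1}\tau_{n-2}$, and substituting this into the first equation and dividing through by $\tau_{n-1}\tau_{n-2}$ reduces it to the autonomous Somos-5 recurrence $\tau_{n+1}\tau_{n-4}=\gamma\tau_n\tau_{n-3}+\delta\tau_{n-1}\tau_{n-2}$. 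As in the Somos-4 case, every solution of the degenerate version of (\ref{smsql}) can be recovered from a single such sequence by choosing five consecutive initial values for $\tau_n$ appropriately (any square roots cancelling in the products that define $k_n$ and $l_n$).

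Next I apply the gauge transformation $\tau_n=u_0^{\eta_{n+3}}u_1^{\eta_{n-1}}d_n$, engineered so that $k_n$ and $l_n$ acquire exactly the exponent patterns displayed in (\ref{Q6}). Substituting this into the autonomous Somos-5 and matching each monomial in $u_0,u_1$ against the coefficients $\gamma_{n-1},\delta_{n-1}$ of (\ref{S5}) produces four exponent conditions on $\eta_n$, but the $u_1$-conditions are just the shifts $n\to n-4$ of the $u_0$-conditions. Only two independent relations remain, which after factoring the associated difference operators take the form
\begin{equation*}
-({\cal S}-1)^2({\cal S}+1)({\cal S}^2+1)\eta_{n-1}=\p_n,\qquad -({\cal S}-1)^2({\cal S}+1)({\cal S}^2+{\cal S}+1)\eta_{n-1}=\q_n.
\end{equation*}
Their difference is $({\cal S}-1)^2({\cal S}+1){\cal S}\,\eta_{n-1}=\p_n-\q_n$, and after one shift this is exactly (\ref{third}).

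The main obstacle, as in the previous theorem, is the converse direction: showing that each of the two fifth-order relations individually follows from the third-order equation (\ref{third}) together with the specific 7-periodic patterns of $\p$ and $\q$. Applying $({\cal S}^2+1)$ to (\ref{third}) reduces the first of the two constraints to the scalar identity
\begin{equation*}
\p_{n-1}+\p_n+\p_{n+1}=\q_{n-1}+\q_{n+1},
\end{equation*}
and applying $({\cal S}^2+{\cal S}+1)$ reduces the second constraint to the same identity. This identity is then verified directly by running $n$ through one period of length 7: both sides yield the 7-periodic pattern $[1,1,0,1,1,1,1]$. Since (\ref{third}) is a linear recurrence of order three with kernel spanned by $1$, $n$ and $(-1)^n$, the three free constants in $\eta_n$ can be fixed so that the initial data $k_0,k_1,l_0,l_1$ come out correctly, and the formulae (\ref{Q6}) follow.
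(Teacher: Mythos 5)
Your proof is correct and follows essentially the same route as the paper's (which simply cites the analogy with the Somos-4 case): the substitution $k_n=\tau_n\tau_{n-3}$, $l_n=\tau_{n-1}\tau_{n-2}$ reducing the degenerate system to autonomous Somos-5, followed by the gauge transformation $\tau_n=u_0^{\eta_{n+3}}u_1^{\eta_{n-1}}d_n$ and the derivation of (\ref{third}). You have in fact supplied the details the paper leaves implicit — the factorisation of the two fifth-order exponent constraints and the verification that both collapse, via the 7-periodic identity $\p_{n-1}+\p_n+\p_{n+1}=\q_{n-1}+\q_{n+1}$, to consequences of the third-order relation — and these check out.
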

\begin{proof} This is similar to the proof of Theorem \ref{Thm1}. The autonomous Somos-5 relation
$$\tau_{n-3}\tau_{n+2}= \ga \tau_{n-2}\tau_{n+1}+\delta \tau_{n-1}\tau_n$$ solves this degenerate
case of (\ref{smsql}) by taking $k_n= \tau_{n-3}\tau_n$, $l_n = \tau_{n-2}\tau_{n-1}$, and then
a gauge transformation $\tau_n =u_0^{\eta_{n+3}}u_1^{\eta_{n-1}}d_n$ gives a corresponding solution of (\ref{S5}),
provided $\eta_n$ satisfies the  third-order linear relation (\ref{third}) above.
\end{proof}

\section{The additive symmetric QRT map}
The aim of this section is to Laurentify the additive QRT map (also called the McMillan map), and to establish a formula for the growth of degrees.

\subsection{Laurentification of the additive symmetric QRT map}
Substituting $u_{n}=\frac{k_{n}}{l_{n}}$ into (\ref{add}),
and identifying quadratic numerators and denominators on each side leads to the following associated Laurent system.
\begin{proposition} \label{addlau}
The system
\begin{align}\label{M12}
\begin{array}{rl}
k_{n+1}l_{n-1} 
&=-(
 k_{{n-1}}l_{{n+1}}+
a_{2}\,{k_{{n}}}^{2}+a_{4}\,k_{{n}}l_{{n}}+a_{5}\,{l_{{n}}}^{2}),
\\
\phantom{xxxlxxxx} l_{n+1}l_{n-1}&=a_{1}\,{k_{{n}}}^{2}+a_{2}\,k_{{n}}l_{{n}}+a_{3}\,{l_{{n}}}^{2}
\end{array}
\end{align}
has the Laurent property. Any four adjacent iterates $k_n$, $l_n$, $k_{n+1}$, $l_{n+1}$ are pairwise coprime  Laurent polynomials in the ring
$
{\cal R}[k_0,k_1,l_0^{\pm 1},l_1^{\pm 1}],
$
where ${\cal R}=\Z [a_1,a_2,a_3,a_4,a_5]$. 
\end{proposition}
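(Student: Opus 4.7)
The proof has two parts: establishing the Laurent property, and then the pairwise coprimality. For the Laurent property I would apply Theorem~\ref{T1} with $d=2$, $k=2$, $V=\{l_0,l_1\}$ (so $I=\{2\}$) and $W=\{k_0,k_1\}$. Writing $f(X,Y)=a_2X^2+a_4XY+a_5Y^2$ and $g(X,Y)=a_1X^2+a_2XY+a_3Y^2$, the system (\ref{M12}) solves to
\[
l_{n+1} \,=\, \frac{g(k_n,l_n)}{l_{n-1}}, \qquad k_{n+1} \,=\, -\frac{k_{n-1}\,g(k_n,l_n)+l_{n-1}\,f(k_n,l_n)}{l_{n-1}^2},
\]
so the denominators $Q^i$ are monomials in $V$. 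One then computes the iterates through $k_4,l_4$ explicitly and verifies the two remaining hypotheses of Theorem~\ref{T1}, namely that $q_m^i\in{\cal R}[V]$ is a monomial for $i=1,2$, $m=3,4$, and that the numerators $p_2^2$, $p_3^2$, $p_4^2$ are pairwise coprime. These are finite polynomial computations.

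For the pairwise coprimality I would proceed by induction on $n$, assuming $k_{n-1},l_{n-1},k_n,l_n$ are pairwise coprime and establishing the five new pairs among $\{k_n,l_n,k_{n+1},l_{n+1}\}$. The pairs $(l_{n+1},k_n)$ and $(l_{n+1},l_n)$ go through exactly as in Proposition~\ref{thm5}: a non-constant common irreducible factor $P$ divides $l_{n+1}l_{n-1}=g(k_n,l_n)$, and reducing modulo $P$ forces $P\mid l_n$ or $P\mid k_n$ respectively, contradicting $(k_n,l_n)=1$. For $(k_{n+1},l_{n+1})$ I would adapt the Sylvester argument of Proposition~\ref{thm5}: a common factor $P$ forces $P$ to divide both binary quadratic forms $f(k_n,l_n)$ and $g(k_n,l_n)$, hence each of $k_nf,\,l_nf,\,k_ng,\,l_ng$; assembling these into an identity $\mathrm{S}\,{\bf h}_n=P\,{\bf v}_n$, where $\mathrm{S}$ is the Sylvester matrix of $f$ and $g$ (obtained from the one in Proposition~\ref{thm5} by the relabelling $a_5\mapsto a_4,\,a_6\mapsto a_5$), and multiplying by $\mathrm{S}^{\mathrm{adj}}$, one obtains $P\mid \mathrm{R}\,{\bf h}_n$ for the nonzero resultant $\mathrm{R}=\det\mathrm{S}$, forcing $P\mid k_n$ and $P\mid l_n$.

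The main obstacle is the remaining pair $(k_{n+1},k_n)$, and analogously $(k_{n+1},l_n)$: the first equation of (\ref{M12}) contains $l_{n+1}$ on its right-hand side, which breaks the symmetry that made the multiplicative case immediate. Eliminating $l_{n+1}$ using the second equation yields the polynomial identity $k_{n+1}l_{n-1}^2=-\bigl(k_{n-1}g(k_n,l_n)+l_{n-1}f(k_n,l_n)\bigr)$; a common factor $P$ of $k_{n+1}$ and $k_n$ then reduces this modulo $P$ to $P\mid l_n^2\bigl(a_3k_{n-1}+a_5l_{n-1}\bigr)$. The case $P\mid l_n$ gives the usual contradiction with $(k_n,l_n)=1$; the residual case $P\mid a_3k_{n-1}+a_5l_{n-1}$ I would exclude by combining the inductive coprimality of $k_n$ with each of $k_{n-1}$ and $l_{n-1}$ with the explicit Laurent expression for $k_n$ obtained by running the same elimination at step $n-1$, so that for generic parameters no irreducible factor of $k_n$ can simultaneously divide a nontrivial $\cal R$-linear combination of $k_{n-1}$ and $l_{n-1}$; the pair $(k_{n+1},l_n)$ is handled by the symmetric argument, with $a_1k_{n-1}+a_2l_{n-1}$ playing the role of $a_3k_{n-1}+a_5l_{n-1}$.
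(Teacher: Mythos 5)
Your application of Theorem~\ref{T1} for the Laurent property is correct, and the coprimality of $(l_{n+1},k_n)$, $(l_{n+1},l_n)$ (from the second, product-form equation) and of $(k_{n+1},l_{n+1})$ (via the resultant of $f$ and $g$) follows by arguments that genuinely parallel Proposition~\ref{thm5}. You have also correctly located the real obstacle: because the first equation of (\ref{M12}) contains the term $k_{n-1}l_{n+1}$, a common factor $P$ of $k_{n+1}$ and $k_n$ (resp.\ of $k_{n+1}$ and $l_n$) only yields $P\mid a_3k_{n-1}+a_5l_{n-1}$ (resp.\ $P\mid a_1k_{n-1}+a_2l_{n-1}$), not an immediate contradiction. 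The paper does not address this point at all --- its proof is a single sentence deferring to Proposition~\ref{thm5} --- so this is precisely the non-routine part.

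However, your disposal of the residual case is not a proof. The inference you invoke --- that since $(k_n,k_{n-1})=1$ and $(k_n,l_{n-1})=1$, no irreducible factor of $k_n$ can divide an ${\cal R}$-linear combination of $k_{n-1}$ and $l_{n-1}$ --- is simply false: in $\Z[x,y]$ the element $x+y$ is coprime to each of $x$ and $y$ yet divides $x+y$. Genericity of the $a_i$ does not rescue it either, since the putative common factor $P$ is allowed to involve $a_3,a_5$ (or $a_1,a_2$) itself. So the pairs $(k_{n+1},k_n)$ and $(k_{n+1},l_n)$ remain unproved and the induction is incomplete. Closing this gap requires an argument that actually exploits the structure of $a_3k_{n-1}+a_5l_{n-1}$ relative to $k_n$ (for example, grading by the parameter degrees, or a resultant computation in the spirit of the $(k_{n+1},l_{n+1})$ case, or a strengthened inductive hypothesis involving more than four adjacent iterates), rather than merely combining the two separate coprimality facts.
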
 
\begin{proof}
This follows from Theorem 2 and a coprimality argument analogous to the proof of Proposition \ref{thm5}.
\end{proof}

It is straightforward to check that the equation  (\ref{fol}),  obtained from (\ref{um}) in \cite{HK}, corresponds to a degenerate case of (\ref{M12}).
\begin{proposition}
In the degenerate case $a_{1}=a_{3}=a_{4} =0$, $a_{2}=1$ and  $a_{5}=-\alpha$, the solution of (\ref{M12}) is given by
 $ k_{n}=e_{n+1}e_{n-2}$ and $l_{n}=e_{n}e_{n-1}$,
where $e_n$ satisfies equation (\ref{fol}).
\end{proposition}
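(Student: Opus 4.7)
The plan is a direct substitution and verification. First I would specialize system (\ref{M12}) to the given parameters $a_1=a_3=a_4=0$, $a_2=1$, $a_5=-\alpha$, producing the simplified pair
\begin{align*}
k_{n+1}l_{n-1}+k_{n-1}l_{n+1}+k_n^2 &= \alpha\, l_n^2, \\
l_{n+1}l_{n-1} &= k_n l_n.
\end{align*}
Then I would substitute the ansatz $k_n=e_{n+1}e_{n-2}$, $l_n=e_n e_{n-1}$ into each equation.

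The second (geometric-mean) relation is immediate: both sides evaluate to $e_{n+1}e_n e_{n-1}e_{n-2}$, so it holds identically in the $e_j$ without using (\ref{fol}) at all. This confirms that the ansatz is consistent with the multiplicative structure encoded in the $l$-equation.

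The real content is in the $k$-equation. Substituting the ansatz gives
\begin{equation*}
e_{n+2}e_{n-1}^2 e_{n-2} + e_{n+1}e_n^2 e_{n-3} + e_{n+1}^2 e_{n-2}^2 = \alpha\, e_n^2 e_{n-1}^2,
\end{equation*}
which I would then compare to (\ref{fol}). Replacing the index $n$ by $n+2$ in (\ref{fol}) produces exactly this identity (up to reordering the three terms on the left-hand side). Hence, provided $e_n$ satisfies (\ref{fol}) for all $n$, the $k$-equation holds for all $n$ as well.

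There is no real obstacle here beyond correctly tracking the index shift between (\ref{fol}) and the identity produced by the substitution; the verification is entirely mechanical. One may optionally remark that, conversely, generic initial data for (\ref{M12}) can be recovered from initial data $(e_{-2},e_{-1},e_0,e_1,e_2)$ of (\ref{fol}) via $k_0=e_1e_{-2}$, $k_1=e_2 e_{-1}$, $l_0=e_0 e_{-1}$, $l_1=e_1 e_0$, so this really parametrizes the solution space of the degenerate system (up to an obvious scaling gauge in the $e_n$).
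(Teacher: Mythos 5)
Your verification is correct and is precisely the "straightforward check" that the paper leaves to the reader: specializing the parameters in (\ref{M12}), substituting $k_n=e_{n+1}e_{n-2}$, $l_n=e_ne_{n-1}$, noting the second equation holds identically, and matching the first to (\ref{fol}) with $n\to n+2$. The paper gives no written proof for this proposition, so your direct substitution supplies exactly what is intended.
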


\subsection{Growth of degrees of the additive QRT map}

Just as in the multiplicative case, the Laurent property means that the iterates of (\ref{M12}) can be factored as in 
(\ref{deform}), in terms of a general set of initial values ${\bf k}=(k_0,k_1,l_0,l_1)$, with initial denominator vectors (\ref{dvecs}), where
(for generic parameter values) $\rN_n$ and $\hat{\rN}_n$ are coprime. Observe that the system is also bilinear (homogeneous of degree 2), so the relations (\ref{homog}) hold, and the second equations
in (\ref{smsql}) and (\ref{M12}) are identical. Note also that, due to the minus sign in the first equation, the system (\ref{M12}) does not generate subtraction-free rational expressions in ${\bf k}$. Nevertheless, by considering the dependence on $k_0,k_1$, it is not hard to show that cancellations cannot occur in the highest degree terms appearing
in the numerator on the right-hand side, and arrive at the following 

\begin{lemma}\label{nosf}  
The denominator vectors satisfy the max-plus tropical version of
(\ref{M12}), namely
\beq\label{tropadd}
\begin{array}{rcl}
{\bf d}_{n+1}+{\bf e}_{n-1} & = & \max ({\bf d}_{n-1}+{\bf e}_{n+1},2{\bf d}_{n}, {\bf d}_{n}+{\bf e}_{n}, 2{\bf e}_{n}), \\
{\bf e}_{n+1}+{\bf e}_{n-1} & = & \max (2{\bf d}_{n}, {\bf d}_{n}+{\bf e}_{n}, 2{\bf e}_{n}).
\end{array}
\eeq
\end{lemma}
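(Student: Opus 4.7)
The plan is to establish both equations of (\ref{tropadd}) componentwise in each direction $x\in\{k_0,k_1,l_0,l_1\}$ by induction on $n$. For each direction, equality in the tropical relation is equivalent to the non-vanishing of the leading $x^{v_x^{\min}}$-coefficient on the right-hand side of the corresponding equation of (\ref{M12}), where $v_x^{\min}$ is the minimum $x$-valuation among the summands; cancellation among lowest-valuation terms would strictly shrink the Laurent denominator. For the second equation of (\ref{tropadd}), the right-hand side $a_1 k_n^2+a_2 k_n l_n+a_3 l_n^2$ is subtraction-free in $(k_n,l_n,a_i)$, so the standard argument applies: for each $x$, the leading $x$-coefficient is either a nonzero monomial square (when $v_x(k_n)\ne v_x(l_n)$) or the value of the generic quadratic form $a_1 X^2+a_2 XY+a_3 Y^2$ at the leading $x$-coefficients $(K_n,L_n)$ of $(k_n,l_n)$; the latter is nonzero because $K_n,L_n\ne 0$ by induction and the $a_i$ are algebraically independent.

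For the first equation the minus sign requires more care, since I must rule out cancellation between $k_{n-1}l_{n+1}$ and the three subtraction-free terms. Following the hint in the paper, I would use the total $(k_0,k_1)$-degree, which is well-defined because by Proposition \ref{addlau} the iterates are polynomial in $k_0,k_1$. Set $D_n^{(k)}=\deg_{(k_0,k_1)}(k_n)$ and $D_n^{(l)}=\deg_{(k_0,k_1)}(l_n)$. Applying the subtraction-free argument of the previous paragraph in this grading to the second equation gives $D_{n+1}^{(l)}+D_{n-1}^{(l)}=2D_n^{(k)}$, and a short induction then establishes the invariant $D_n^{(k)}=D_n^{(l)}+1$. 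With this in hand, the four right-hand-side terms of the first equation of (\ref{M12}) have respective $(k_0,k_1)$-degrees
\[
2D_n^{(l)}+3,\quad 2D_n^{(l)}+2,\quad 2D_n^{(l)}+1,\quad 2D_n^{(l)},
\]
so $k_{n-1}l_{n+1}$ strictly dominates the other three, and its top-$(k_0,k_1)$-degree contribution to the right-hand side cannot be cancelled.

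To promote this strict dominance into the full componentwise tropical statement, I would strengthen the induction with a Newton-polytope corner hypothesis: the top-$(k_0,k_1)$-degree monomial of each iterate simultaneously realises the minimum $l_0$- and minimum $l_1$-exponents. This is immediate for the base cases $n=0,1$ and is preserved because the top-$(k_0,k_1)$-degree parts of the iterates satisfy a multiplicative, subtraction-free recurrence analogous to (\ref{smsql}). Under the corner hypothesis, extracting the leading $x$-coefficient for $x\in\{l_0,l_1\}$ preserves the strict $(k_0,k_1)$-degree gap established above, so no cancellation is possible at the minimum $x$-valuation; in the $k_0,k_1$ directions the tropical equation holds trivially because the corresponding components of $\mathbf{d}_n,\mathbf{e}_n$ vanish for $n\ge 2$ (no iterate is divisible by $k_0$ or $k_1$, which is visible in $k_2,l_2$ and propagates inductively since (\ref{M12}) never divides by $k_0$ or $k_1$). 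The main obstacle will be the bookkeeping needed to carry the corner hypothesis through the induction, but the essential combinatorial content is captured by the strict $(k_0,k_1)$-degree dominance of $k_{n-1}l_{n+1}$.
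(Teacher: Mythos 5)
Your proposal is essentially the same as the paper's proof: both rest on tracking the total $(k_0,k_1)$-degree of the numerators $\rN_n,\hat{\rN}_n$ in the factored form (\ref{deform}), exploiting the fact that (by Proposition \ref{addlau}) the iterates are honest polynomials in $k_0,k_1$, and showing inductively that the $k_{n-1}l_{n+1}$ term strictly dominates the others so that no cancellation at the top degree can occur. The paper's bookkeeping is carried out via the pair of degree sequences $\delta_n=\deg_{(k_0,k_1)}\rN_n$, $\hat\delta_n=\deg_{(k_0,k_1)}\hat\rN_n$ together with the hypothesis $\delta_n>\hat\delta_n$, obtaining (\ref{del1}) and (\ref{del2}); your invariant $D_n^{(k)}=D_n^{(l)}+1$ is exactly this gap $\delta_n-\hat\delta_n=1$ written differently, and leads to the same strict dominance $\delta_{n-1}+\hat\delta_{n+1}>2\delta_n$.

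Where you go beyond the paper is in making explicit what it calls ``the numerator is a \emph{monomial} in $k_0,k_1$ of degree $\delta_n$ + lower order terms'' and what you call the corner hypothesis: that the unique top-$(k_0,k_1)$-degree monomial of $\rN_n,\hat\rN_n$ also realises the minimum $l_0$- and $l_1$-exponents. This is genuinely needed to pass from ``no cancellation at top $(k_0,k_1)$-degree'' to the componentwise tropical equalities in the $l_0,l_1$ directions: absent a corner, a surviving high-$(k_0,k_1)$-degree term could still sit at a strictly higher $l_0$- or $l_1$-exponent than the denominator suggests, leaving the possibility of an unnoticed cancellation at the lowest valuation. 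The paper builds this into the inductive hypothesis implicitly (the leading term is a monomial with coefficient in ${\cal R}$, hence has zero $l_0,l_1$-exponent since $\rN_n$ is not divisible by either), but does not flag the role it plays; your version makes the logic transparent, at the price of a somewhat heavier induction in which the corner property, the degree gap, and the inequalities $d_n^{(3)}\geq e_n^{(3)}$, $d_n^{(4)}\geq e_n^{(4)}$ must all be carried along simultaneously. Your separate observation that the $k_0,k_1$ components of (\ref{tropadd}) are trivially satisfied because those components of ${\bf d}_n,{\bf e}_n$ vanish for $n\geq 2$ is correct and matches what the paper extracts from the solution of (\ref{tropadds}).
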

\begin{proof} By Proposition \ref{addlau}, the iterates of (\ref{M12}) can be written in the form (\ref{deform}), 
where now the (Laurent) monomial denominators are just monomials in $l_0,l_1$ for all $n\geq 2$. Then, by  explicitly 
considering how the numerators depend on $k_0,k_1$, 
$\rN_2({\bf k}) = -a_1 k_0 k_1^2 +\ldots$, $\hat{\rN}_2({\bf k}) = a_1 k_1^2 +\ldots$,  
and it can be shown by induction 
that 
for all $n\geq 2$, 
$\rN_n({\bf k})=$ monomial in $k_0,k_1$ of degree $\delta_n$ + 
lower order terms, and 
 $\hat{\rN}_n({\bf k})=$  monomial in $k_0,k_1$ of degree $\hat{\delta}_n$ + 
lower order terms, 
where $\delta_n,\hat{\delta}_n$ denote the respective total degrees of these numerators.
To perform the induction, it should be assumed also that $\delta_n>\hat{\delta}_n$ as part of the inductive 
hypothesis; clearly this holds for $n=2$, and $n=3$ is easily checked as well. 
Now suppose the hypothesis holds up to $n$, and consider the right-hand side
of the second equation in (\ref{M12}): in terms of $k_0,k_1$, the term of highest degree in the  numerator comes from $k_n^2$, and all other terms are of lower degree, so there can be no cancellation between the numerator and 
denominator (which only depends on $l_0,l_1$); thus, 
comparing with the numerator of the left-hand side, this implies 
\beq\label{del1}
\hat{\delta}_{n+1}+\hat{\delta}_{n-1} = 2\delta_n, 
\eeq 
and the numerator $\hat{\rN}_{n+1}({\bf k})$ is of the required form. 
Similarly,  on the  right-hand side of the first equation in  (\ref{M12}), 
the term with numerator 
of largest possible degree can only be  $k_{n-1}l_{n+1}$ or $k_n^2$, but the first term has 
degree $\delta_{n-1}+\hat{\delta}_{n+1}  =2\delta_n +\delta_{n-1}-\hat{\delta}_{n-1}>2\delta_n$,
using (\ref{del1}) and the inductive hypothesis, which implies that the numerator of  $k_{n-1}l_{n+1}$ 
is of largest degree; so again there can be no cancellation, 
the numerator ${\rN}_{n+1}({\bf k})$ is of the required form, and 
comparing with the left-hand side yields 
\beq\label{del2} 
\delta_{n+1}+\hat{\delta}_{n-1}=\hat{\delta}_{n+1}+\delta_{n-1}, 
\eeq 
and hence $\delta_{n+1}- \hat{\delta}_{n+1}=\delta_{n-1}-\hat{\delta}_{n-1}>0$, which 
gives the other part of the 
hypothesis.
\end{proof} 

From the form of the initial data (\ref{dvecs}), the solution of the ultradiscrete system (\ref{tropadd}) is written as
$$
{\bf d_n} = (d_n,d_{n-1},\tilde{d}_n,\tilde{d}_{n-1})^T, \qquad
{\bf e_n} =  (e_n,e_{n-1},\tilde{e}_n,\tilde{e}_{n-1})^T,
$$
in terms of two pairs of sequences $(d_n,e_n)$, $(\tilde{d}_n,\tilde{e}_{n})$, each of which satisfies the scalar version of
(\ref{tropadd}). Upon introducing the difference vector
$${\mathbf{\Delta}}_n={\bf d}_n-{\bf e}_n= (\Delta_n,\Delta_{n-1},\tilde{\Delta}_n,\tilde{\Delta}_{n-1})^T,$$
the  system is equivalent to
\beq\label{tropadds}
\begin{array}{rcl}
{\mathbf{\Delta}}_{n+1} & = & \max ({\mathbf{\Delta}}_{n-1}, 0), \\
{\bf e}_{n+1}-2{\bf e}_{n}+{\bf e}_{n-1} & = &  \max (2{\mathbf\Delta}_{n},0).
\end{array}
\eeq
For the first sequence pair $(d_n,e_n)$, the initial data give $\Delta_0=-1$, $\Delta_1=0$, which implies
$\Delta_n=0$ for $n\geq 1$, while
for the pair $(\tilde{d}_n,\tilde{e}_{n})$ with $\tilde{\Delta}_0=1$, $\tilde{\Delta}_1=0$,
the first equation above gives $\tilde{\Delta}_{\mod 2} =[0,1]$. For the second equation in (\ref{tropadds}), the
solution ${\bf e}_n$  is then found to be specified by
$$
  \tilde{e}_n = \frac{1}{2}n^2-\frac{3}{4}-\frac{(-1)^n}{4}
\quad \mathrm{and} \quad e_n = 0 \quad \forall n\geq 0
\implies d_n=0 \quad \forall n\geq 1,
$$
so $k_0,k_1$ never appear in the denominator of any Laurent polynomials, as is obvious from (\ref{M12}).

Finally, from (\ref{homog}) we see that, consistent with (\ref{del1}) and (\ref{del2}), 
$$\deg_{\bf k}   (\rN_n ({\bf k}))=
\Delta_n+\Delta_{n-1}+\tilde{\Delta}_n+\tilde{\Delta}_{n-1} +\deg_{\bf k}   (\hat{\rN}_n ({\bf k}))=
1+\deg_{\bf k}   (\hat{\rN}_n ({\bf k}))\quad \mathrm{for} \, n\geq 2$$
and $\deg_{\bf k}   (\hat{\rN}_n ({\bf k}))=1+e_n+e_{n-1}+\tilde{e}_n+\tilde{e}_{n-1}=n^2-n$.
By setting  ${\bf k}=(u_0,u_1,1,1)$ to find $u_n=k_n/l_n=\rN_n ({\bf u})/\hat{\rN}_n ({\bf u})$ for $n\geq 2$, and noting
that the total degrees of $\rN_n$ and $\hat{\rN}_n$ remain the same
after substitution, this yields the same quadratic expression for the degree growth 
as for (\ref{mul}).
\begin{theorem}\label{LL4}
As a rational function of the initial values $u_0,u_1$, the $n$th iterate $u_n$ of
 the additive QRT map (\ref{add}) has degree
$n^2-n+1$.
\end{theorem}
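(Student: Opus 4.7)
The plan is to follow the same three-step strategy used in the multiplicative case: solve the tropical system (\ref{tropadd}) for the denominator exponents, use the bilinearity (\ref{homog}) to read off the numerator degrees, and then specialise $(k_0,k_1,l_0,l_1)=(u_0,u_1,1,1)$ while verifying that no cancellation lowers the degree. The heavy lifting has already been done in Lemma \ref{nosf} and in the paragraphs that immediately follow it, so the proof is largely a matter of assembling those ingredients.

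First I would solve the ultradiscrete system (\ref{tropadd}). The block-symmetric form of the initial data (\ref{dvecs}) decouples the vector system into two independent scalar copies of (\ref{tropadds}), one with $(\Delta_0,\Delta_1)=(-1,0)$ and one with $(\tilde\Delta_0,\tilde\Delta_1)=(1,0)$. The first-order tropical relation for the differences integrates immediately: in the former case $\Delta_n=0$ for all $n\geq 1$, and in the latter $\tilde\Delta_{\mod 2}=[0,1]$. Substituting each into the second-difference equation of (\ref{tropadds}) yields $e_n=d_n=0$ for $n\geq 1$ (consistent with $k_0,k_1$ not appearing in any denominator of (\ref{M12})), together with the closed form $\tilde e_n=\frac12 n^2-\frac34-\frac14(-1)^n$ obtained by solving $({\cal S}^2-1)({\cal S}-1)^2\tilde e_n=0$ exactly as in Lemma \ref{ivp}.

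Next I would translate to degrees via (\ref{homog}). Summing the components of ${\bf e}_n$ gives $\deg_{\bf k}\hat{\rN}_n=1+e_n+e_{n-1}+\tilde e_n+\tilde e_{n-1}=n^2-n$, while for $n\geq 2$ the difference vector $\mathbf{\Delta}_n={\bf d}_n-{\bf e}_n$ adds $\tilde\Delta_n+\tilde\Delta_{n-1}=1$ to the component sum, so $\deg_{\bf k}\rN_n=n^2-n+1$. This matches the relations (\ref{del1}) and (\ref{del2}) established inside the proof of Lemma \ref{nosf}.

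The one step that still requires justification, and which I expect to be the main obstacle, is showing that the specialisation $l_0=l_1=1$ preserves these total degrees. For this I would again invoke the inductive construction in Lemma \ref{nosf}: it shows that the top-degree part of $\rN_n$, respectively $\hat{\rN}_n$, is a single monomial in $k_0,k_1$ alone, of total degree $\delta_n=n^2-n+1$, respectively $\hat\delta_n=n^2-n$. Setting $l_0=l_1=1$ cannot annihilate these leading monomials, so $\deg_{\bf u}\rN_n({\bf u})=n^2-n+1$ and $\deg_{\bf u}\hat{\rN}_n({\bf u})=n^2-n$. Coprimality of numerator and denominator, guaranteed by Proposition \ref{addlau}, then gives $\deg_{\bf u}u_n=\max(n^2-n+1,\,n^2-n)=n^2-n+1$, completing the proof.
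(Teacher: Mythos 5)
Your argument is correct and follows the paper's own proof essentially step for step: solving the ultradiscrete system (\ref{tropadd}) via the difference/sum decomposition, reading off the $\bf{k}$-degrees from the homogeneity relation (\ref{homog}), and invoking the monomial top-degree structure from the proof of Lemma \ref{nosf} to justify that specialising $l_0=l_1=1$ does not lower the degree. The only cosmetic difference is that you make the last cancellation step slightly more explicit than the paper does, which is harmless.
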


\section{The 12-parameter symmetric QRT map}
The symmetric QRT map \cite{QRT1,QRT2} is constructed as follows. We start with two symmetric $3\times 3$ matrices,
\beq\label{AB}
{\cal A}=\begin{pmatrix}
a_{00} & a_{01} & a_{02} \\
a_{01} & a_{11} & a_{12} \\
a_{02} & a_{12} & a_{22}
\end{pmatrix},
\quad
{\cal B}=\begin{pmatrix}
b_{00} & b_{01} & b_{02} \\
b_{01} & b_{11} & b_{12} \\
b_{02} & b_{12} & b_{22}
\end{pmatrix},
\eeq
and introduce the vectors
\[
{\bf V} (u,v)=\begin{pmatrix}
u^2 \\
uv \\
v^2
\end{pmatrix},
\qquad
{\bf f}(u)={\cal A}{\bf V}(u,1)\times {\cal B}{\bf V}(u,1),
\]
with the components of ${\bf f}$ denoted by $f^{(i)}$, $i=1,2,3$. The map of the plane defined by
\begin{equation} \label{QRT}
\varphi_{sym}:\, (u_{n-1},u_n)\mapsto(u_n,u_{n+1}),\quad\mathrm{ with}\quad
u_{n+1}=\frac{f^{(1)}(u_n)-u_{n-1}f^{(2)}(u_n)}{f^{(2)}(u_n)-u_{n-1}f^{(3)}(u_n)},
\end{equation}
is the general form of the symmetric QRT map, which
admits the invariant
\begin{equation} \label{INV}
J=\frac{{\bf V}(u_{n-1},1)^T{\cal A}{\bf V}(u_n,1)}{{\bf V} (u_{n-1},1)^T{\cal B}{\bf V}(u_n,1)}.
\end{equation}
\subsection{Recursive factorisation of the symmetric QRT map}
Let us homogenise the map $\varphi_{sym}$. Taking $u_n=p_n/q_n$ gives
\[
\frac{p_{n+1}}{q_{n+1}}=\frac{q_{n-1}f^{(1)}(\frac{p_n}{q_n})-p_{n-1}f^{(2)}(\frac{p_n}{q_n})}
{q_{n-1}f^{(2)}(\frac{p_n}{q_n})-p_{n-1}f^{(3)}(\frac{p_n}{q_n})},
\]
from which we obtain the polynomial system
\begin{equation} \label{QRTH}
p_{n+1}=q_{n-1}F^{(1)}_n-p_{n-1}F^{(2)}_n,\quad
q_{n+1}=q_{n-1}F^{(2)}_n-p_{n-1}F^{(3)}_n,
\end{equation}
with $F^{(i)}_n=q_n^4f^{(i)}(\frac{p_n}{q_n})$. 
Slightly more explicitly, in terms of the vectors ${\bf A}_n:={\cal A}{\bf V}(p_n,q_n)$, ${\bf B}_n:={\cal B}{\bf V}(p_n,q_n)$, with components denoted $A^{(i)}_n$, $ B^{(i)}_n$ respectively,
we set ${\bf F}_n={\bf A}_n\times {\bf B}_n$,
and then
the system (\ref{QRTH}) can be written as
\begin{equation} \label{BB1}
\begin{split}
p_{n+1}&=B^{(3)}_n(p_{n-1}A^{(1)}_n+q_{n-1}A^{(2)}_n)-A^{(3)}_n(p_{n-1}B^{(1)}_n+q_{n-1}B^{(2)}_n),\\
q_{n+1}&=B^{(1)}_n(p_{n-1}A^{(2)}_n+q_{n-1}A^{(3)}_n)-A^{(1)}_n(p_{n-1}B^{(2)}_n+q_{n-1}B^{(3)}_n),
\end{split}
\end{equation}
which is  equivalent to the vector equation
\begin{equation} \label{BB2}
\left(q_{n-1}q_{n+1},-p_{n-1}q_{n+1}-q_{n-1}p_{n+1},p_{n-1}p_{n+1}\right)^T={\bf W}_n,
\end{equation}
where ${\bf W}_n={\bf F}_n\times {\bf V}_{n-1}$, ${\bf V}_n={\bf V}(p_n,q_n)$.
The  map (\ref{BB1}) generates polynomials when iterated forwards, but not backwards, as 
the inverse does not have the Laurent property (this is analogous to the fact that a generic polynomial map does not 
have a polynomial inverse); 
it leaves the ratio of polynomials
$N_n/D_n$ invariant, where
$$ N_n ={\bf V}_n\cdot{\bf A}_{n-1} ={\bf V}_{n-1}\cdot{\bf  A}_n, \quad
D_n ={\bf  V}_n\cdot{\bf  B}_{n-1} = {\bf V}_{n-1}\cdot{\bf  B}_n,
$$
since
the matrices $\cal A$, $\cal B$ are symmetric. The invariance implies that
$N_n$ divides $N_{n+1}$ and $D_n$ divides $D_{n+1}$. We
can describe the  factorisations as follows.
\begin{lemma} \label{L15}
We have $N_{n+1} = N_nQ_n$ and $D_{n+1}=D_nQ_n$ where $Q_n=(F^{(2)}_n)^2-F^{(1)}_nF^{(3)}_n$.
\end{lemma}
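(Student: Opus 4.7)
The plan is to compute $N_{n+1}$ directly by substituting the homogenised recursion (\ref{QRTH}) into the symmetric expression $N_{n+1}={\bf V}_{n+1}\cdot{\bf A}_n$, and then to reduce the resulting quadratic form in $(p_{n-1},q_{n-1})$ using the single orthogonality identity ${\bf F}_n\cdot{\bf A}_n=0$, which holds because ${\bf F}_n={\bf A}_n\times{\bf B}_n$. Once $N_{n+1}=Q_nN_n$ is established this way, the companion identity $D_{n+1}=Q_nD_n$ follows either by the same calculation with ${\bf A}_n$ replaced by ${\bf B}_n$ (using ${\bf F}_n\cdot{\bf B}_n=0$), or directly from the stated invariance $N_{n+1}D_n=N_nD_{n+1}$, provided $N_n\not\equiv 0$.

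Concretely, writing $A_i$ for $A^{(i)}_n$ and $F_i$ for $F^{(i)}_n$, and substituting $p_{n+1}=q_{n-1}F_1-p_{n-1}F_2$ and $q_{n+1}=q_{n-1}F_2-p_{n-1}F_3$ into $p_{n+1}^2A_1+p_{n+1}q_{n+1}A_2+q_{n+1}^2A_3$, I would obtain a homogeneous quadratic in $p_{n-1},q_{n-1}$ whose $p_{n-1}^2$-coefficient is $F_2^2A_1+F_2F_3A_2+F_3^2A_3$, whose $q_{n-1}^2$-coefficient is $F_1^2A_1+F_1F_2A_2+F_2^2A_3$, and whose cross-term is $-2F_1F_2A_1-(F_1F_3+F_2^2)A_2-2F_2F_3A_3$. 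The orthogonality $F_1A_1+F_2A_2+F_3A_3=0$ then collapses the first coefficient via $F_2A_2+F_3A_3=-F_1A_1$ to $(F_2^2-F_1F_3)A_1=Q_nA_1$, and the second analogously to $Q_nA_3$; the cross-term needs the outer pieces regrouped as $-2F_2(F_1A_1+F_3A_3)=2F_2^2A_2$, after which it reduces to $Q_nA_2$. Hence $N_{n+1}=Q_n(p_{n-1}^2A_1+p_{n-1}q_{n-1}A_2+q_{n-1}^2A_3)=Q_n\,{\bf V}_{n-1}\cdot{\bf A}_n=Q_nN_n$.

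I do not anticipate a conceptual obstacle: the whole argument rests on the single scalar identity $F_1A_1+F_2A_2+F_3A_3=0$ and its ${\bf B}$-analogue, with everything else reducing to the nine-term expansion of a quadratic form. The one step that requires a little care is the cross-term, whose simplification only succeeds after pairing the outer $A_1$ and $A_3$ contributions so that they combine through the orthogonality; the $p_{n-1}^2$- and $q_{n-1}^2$-coefficients, by contrast, each admit a clean one-step reduction. This produces $N_{n+1}/N_n=D_{n+1}/D_n=Q_n$, consistent with the invariance already noted just before the lemma.
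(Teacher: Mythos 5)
Your computation is correct and all three coefficient reductions check out (the $p_{n-1}^2$ and $q_{n-1}^2$ terms collapse in one step via $F_2A_2+F_3A_3=-F_1A_1$ and $F_1A_1+F_2A_2=-F_3A_3$, and the cross term via $-2F_2(F_1A_1+F_3A_3)=2F_2^2A_2$), so this is essentially the paper's own argument: both rest entirely on the orthogonality ${\bf F}_n\cdot{\bf A}_n=0$ coming from ${\bf F}_n={\bf A}_n\times{\bf B}_n$, plus its ${\bf B}$-analogue for $D_{n+1}$. The only difference is organizational: the paper routes the same expansion through the vector ${\bf W}_n={\bf F}_n\times{\bf V}_{n-1}=({\bf A}_n\cdot{\bf V}_{n-1}){\bf B}_n-({\bf B}_n\cdot{\bf V}_{n-1}){\bf A}_n$, which extracts the factor $N_n={\bf A}_n\cdot{\bf V}_{n-1}$ in a single step, whereas you expand the quadratic form in $(p_{n-1},q_{n-1})$ term by term and apply the scalar identity to each coefficient.
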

\begin{proof}
Using  ${\bf F}={\bf A}\times {\bf B}$, 
hence ${\bf F}\cdot{\bf A}=0$,
${\bf W}={\bf F}\times{\bf V}=({\bf A}\cdot{\bf V}){\bf B}-({\bf B}\cdot {\bf V}){\bf A}$, we get
\begin{align*}
N_{n+1}&={\bf V}_{n+1}\cdot{\bf A}_n\\
&=p_{n+1}(q_{n-1}F^{(1)}_n-p_{n-1}F^{(2)}_n)A^{(1)}_n+(q_{n-1}F^{(2)}_n-p_{n-1}F^{(3)}_n) (p_{n+1} A_n^{(2)}+q_{n+1}A_n^{(3)}) \\
&=W^{(3)}(-F^{(2)}_nA^{(1)}_n-F^{(3)}_nA^{(2)}_n) + W^{(2)}F^{(3)}_nA^{(3)}_n + W^{(1)}F^{(2)}_nA^{(3)}_n\\
&={\bf A}_n\cdot{\bf V}_{n-1} \Big(F_2^n(A^{(3)}_nB^{(1)}_n-A^{(1)}_nB^{(3)}_n)-F^{(3)}_n(A^{(2)}_nB^{(3)}_n-A^{(3)}_nB^{(2)}_n)\Big)
=N_nQ_n.
\end{align*}
\end{proof}
Iterating the map one more time, the quotient $Q_n$ arises as a common
divisor of $p_{n+2}$ and $q_{n+2}$. This can be seen from
\begin{eqnarray}
p_{n+2}&=&B^{(3)}_{n+1}(p_{n}A^{(1)}_{n+1}+q_{n}A^{(2)}_{n+1})-A^{(3)}_{n+1}(p_{n}B^{(1)}_{n+1}+q_{n}B^{(2)}_{n+1})\notag\\
&=&\frac{B^{(3)}_{n+1}(p_{n}^2A^{(1)}_{n+1}+p_{n}q_{n}A^{(2)}_{n+1}+p_{n}^2A^{(3)}_{n+1})}{p_{n}}-\frac{A^{(3)}_{n+1}(p_{n}^2B^{(1)}_{n+1}+p_{n}q_{n}B^{(2)}_{n+1}+p_{n}^2B^{(3)}_{n+1})}{p_{n}}\notag\\
&=&\frac{B^{(3)}_{n+1}N_{n+1}-A^{(3)}_{n+1}D_{n+1}}{p_{n}}
=
Q_n \left(\frac{B^{(3)}_{n+1}N_n-A^{(3)}_{n+1}D_n}{p_{n}}\right). \label{pnp2}
\end{eqnarray}
The second term in (\ref{pnp2}) is polynomial, which can be seen directly from the factorisation
\[
B^{(3)}_{n+1}A^{(3)}_{n-1}-A^{(3)}_{n+1}B^{(3)}_{n-1}=(p_{n-1}q_{n+1}-p_{n+1}q_{n-1})({\bf F}_n\times{\bf V}_{n-1})\cdot{\bf P},
\]
where ${\bf P}=(a_{02},a_{12},a_{22})^T\times(b_{02},b_{12},b_{22})^T$,  and the observation that ${\bf F}_n\equiv{\bf P}q_n^4\mod p_n$.
For $q_{n+2}$ we find the similar expression
\begin{equation} \label{qnp2}
q_{n+2}=Q_n \left(\frac{B^{(1)}_{n+1}N_n-A^{(1)}_{n+1}D_n}{q_{n}}\right).
\end{equation}

\begin{theorem} \label{QdQ}
The polynomial $Q_n$ is a divisor of $Q_{n+1}$.
\end{theorem}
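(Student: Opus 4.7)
My approach will exploit the geometric observation that, modulo $Q_n$, the vector ${\bf F}_{n+1}$ becomes parallel to the rank-one vector ${\bf V}_n = (p_n^2, p_n q_n, q_n^2)^T$. Since the quadratic form $Q({\bf v}) := (v^{(2)})^2 - v^{(1)} v^{(3)}$ vanishes identically on ${\bf V}_n$ while $Q_{n+1} = Q({\bf F}_{n+1})$ by definition, this parallelism will force $Q_n \mid Q_{n+1}$.

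To make this precise, I will first shift the index $n \mapsto n+1$ in equation (\ref{BB2}) to obtain
$$
(q_n q_{n+2},\, -p_n q_{n+2} - q_n p_{n+2},\, p_n p_{n+2})^T = {\bf F}_{n+1} \times {\bf V}_n.
$$
By (\ref{pnp2}) and (\ref{qnp2}), every component of the left-hand side is divisible by $Q_n$, so ${\bf F}_{n+1} \times {\bf V}_n \equiv {\bf 0} \pmod{Q_n}$. Writing ${\bf F}_{n+1} = (X,Y,Z)^T$, this yields the three congruences
$$
q_n(Y q_n - Z p_n) \equiv 0, \qquad Z p_n^2 - X q_n^2 \equiv 0, \qquad p_n(X q_n - Y p_n) \equiv 0 \pmod{Q_n}.
$$
Next I observe that $Q_n = (F^{(2)}_n)^2 - F^{(1)}_n F^{(3)}_n$ is coprime to each of $p_n$ and $q_n$ in the polynomial ring $\Z[\{a_{ij}\},\{b_{ij}\}][p_n, q_n]$: specialising $p_n = 0$ (resp.\ $q_n = 0$) collapses each $F^{(i)}_n$ to a constant multiple of $q_n^4$ (resp.\ $p_n^4$), leaving $Q_n$ as a nonzero form in the surviving variable with coefficients that are non-trivial polynomials in the matrix entries. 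Cancelling $q_n$ from the first congruence and $p_n$ from the third therefore gives
$$
q_n Y \equiv p_n Z, \qquad q_n X \equiv p_n Y \pmod{Q_n}.
$$

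Finally, squaring the first of these and using the second to rewrite $q_n^2 XZ = q_n(q_n X) Z \equiv p_n (q_n Y) Z \equiv p_n (p_n Z) Z = p_n^2 Z^2 \pmod{Q_n}$, I obtain $q_n^2 Y^2 \equiv p_n^2 Z^2 \equiv q_n^2 XZ \pmod{Q_n}$, whence $q_n^2 (Y^2 - XZ) \equiv 0 \pmod{Q_n}$. Cancelling the coprime factor $q_n^2$ delivers $Q_{n+1} = Y^2 - XZ \equiv 0 \pmod{Q_n}$, as required. The hard work has already been carried out in the preceding factorisation arguments leading to (\ref{pnp2})--(\ref{qnp2}); the only obstacle I anticipate above is the coprimality of $Q_n$ with $p_n$ and $q_n$, which is handled by the short specialisation sketched.
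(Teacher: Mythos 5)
Your proof is correct, and it takes a genuinely different — and arguably cleaner — route than the paper's. The paper proves Theorem~\ref{QdQ} by brute-force substitution of (\ref{QRTH}) into $Q_{n+1}$, reduction modulo $Q_n$, identification of certain GCDs among the coefficients of the resulting octic form in $p_{n-1},q_{n-1}$, and a final verification carried out with computer algebra (Maple). Your argument avoids that entirely: you observe that the shifted identity (\ref{BB2}) together with the already-established divisibilities $Q_n\mid p_{n+2}$, $Q_n\mid q_{n+2}$ from (\ref{pnp2})--(\ref{qnp2}) force ${\bf F}_{n+1}\times {\bf V}_n\equiv {\bf 0}\pmod{Q_n}$, i.e.\ that ${\bf F}_{n+1}$ is parallel to the rank-one vector ${\bf V}_n$ modulo $Q_n$; since the quadratic form defining $Q$ vanishes identically on ${\bf V}_n$, and $Q_n$ is coprime to $p_n$ and $q_n$ (your specialisation check), the relation $Q_{n+1}\equiv 0\pmod{Q_n}$ drops out by elementary manipulation. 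What your approach buys is a conceptual, computer-free proof that actually explains \emph{why} the divisibility holds, whereas the paper's gives a verification but little insight. What the paper's approach buys is that it produces, along the way, the explicit coefficients $e_i$ and the curious polynomial identity $X(F_n^{(1)})^4\equiv Q(F_n^{(1)},F_n^{(2)})\pmod{Q_n}$, which may be of independent interest. One small point to make explicit in your write-up: the mid-proof cancellations of $q_n$ and $p_n$ from the first and third congruences require $p_n,q_n$ to be non-zero-divisors modulo $Q_n$; in the UFD at hand this is equivalent to the coprimality you establish, but it is worth stating. (Alternatively, you can dispense with the intermediate cancellations altogether: combining the congruences without cancelling yields $q_n^4(Y^2-XZ)\equiv 0\pmod{Q_n}$ directly, after which a single appeal to $\gcd(Q_n,q_n)=1$ finishes the proof.)
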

\begin{proof}
The proof is by direct computation. We write
\begin{equation} \label{detm}
Q_{n}=\sum_{i=0}^8 d_ip_{n}^{8-i}q_{n}^i = Q(p_{n},q_{n}).
\end{equation}
Considering the three components of ${\bf F}_n$ as variables,
we substitute equation (\ref{QRTH}) into $Q_{n+1}$
and reduce the result modulo $Q_n=(F^{(2)}_n)^2-F^{(1)}_nF^{(3)}_n$ using a total
degree ordering. We denote the coefficients of the resulting polynomial in $p_{n-1},q_{n-1}$
by $e_i$, so that
$
Q_{n+1}\mid_{\text{Eq.} (\ref{QRTH})} \equiv \sum_{i=0}^8 e_ip_{n-1}^{8-i}q_{n-1}^i \mod Q_n$.
There appear to be two non-trivial common factors, namely
$
X=\gcd(e_0,e_2,e_4,e_6,e_8)$, $Y=\gcd(e_1,e_3,e_5,e_7)$,
and we have $Q_n \mid Y- XF^n_2$. Thus it suffices to establish that $Q_n$
is a divisor of $X$. Curiously, we have the following expression, modulo $Q_n$:
$
X(F_n^{(1)})^4 \equiv \sum_{i=0}^8 d_i(F_n^{(1)})^{8-i}(F^{(2)}_n)^i= Q(F^{(1)}_n,F^{(2)}_n)$.
Finally, with computer algebra (e.g. Maple) it can be  verified that $Q_n$ divides $Q(F^{(1)}_n,F^{(2)}_n)$.
\end{proof}
From (\ref{BB1}), an ultradiscrete system of recurrences for a lower bound on the multiplicities is given as follows:
\begin{align}\label{uds}
\m^p_{n+1}=\m^q_{n+1}=\min_{r\in\{p,q\},i+j=4,i,j\geq 0} \left(\m^r_{n-1}+i\m^p_n+j\m_n^q\right).
\end{align}
So a lower bound on the growth of the multiplicity of divisors is 
$\m_{n+1}^{p}=\m_{n+1}^{q}=\m_{n+1}$ with
\begin{align}\label{pet1}
\m_{n+1}=4\m_{n}+\m_{n-1},
\end{align} where $\m_{0}=0$ and $\m_{1}=1$. We are interested in primitive divisors. Therefore, according to Theorem \ref{QdQ} we want to divide $Q_n$ by $Q_{n-1}$. But there is much more to divide out. As $Q_{n-2}$ divides $\gcd(p_n,q_n)$ and $Q_n$ is homogeneous in $p_n,q_n$ of degree 8, $Q_n$ is also divisible by $(Q_{n-2})^8$, and by $(Q_{n-3})^{32}$, and so on. We can recursively define a polynomial $r_n$ by $r_1=1$ and
\[
Q_n=\prod_{k=2}^{n-2} r_{n-k}^{8\m_{k-1}} r_{n-1} r_{n} \implies r_n=\prod_{j=0}^{n-2} (Q_{n-j})^{1-2j^2}.
\]
In terms of $r_n$, a common divisor of $p_n$ and $q_n$ is given by
\begin{equation} \label{cook} g_n=\prod_{k=2}^{n-2} r_k^{\m_{n-k-1}} \implies
g_{n+1}=r_{n-1}g_{n-1}g_{n}^4.
\end{equation}
As every divisor is a common divisor, we define the quotients $s_n,t_n$ 
by
\begin{align}\label{G3}
p_{n}=g_{n}s_{n} \quad \text{ and }\quad q_{n}=g_{n}t_{n}.
\end{align}
To find a closed system of equations we need to involve $r_n$
and find out how it relates to $s_n,t_n$. Using
$Q_n=Q(p_n,q_n)=g_n^8Q(s_n,t_n)$ it can be
verified that $r_nr_{n-1}=Q(s_n,t_n)$.
Substituting (\ref{G3}) into (\ref{QRTH}), and using (\ref{cook}), we arrive at the 
system 
 \begin{align}\label{KP1}
\begin{array}{ll}
r_{n}r_{n-1}&=Q(s_{n},t_{n}), \\
s_{n+1}r_{n-1}&=t_{n-1}F^{(1)}(s_{n},t_{n})-s_{n-1}F^{(2)}(s_{n},t_{n}),\\
t_{n+1}r_{n-1}&=t_{n-1}F^{(2)}(s_{n},t_{n})-s_{n-1}F^{(3)}(s_{n},t_{n}),
\end{array}
\end{align} with ${\bf F}(s_{n},t_{n})=g_n^{-4}{\bf F}_n$. Up to shifting indices,  initial values can be chosen as
$s_{0}=u_{0}$, $s_{1}=u_{1}$, $t_0=t_1=r_{0}=1$.
The ratio $u_n=\frac{s_n}{t_n}$ is an iterate of the symmetric QRT map (\ref{QRT}).
\begin{theorem}\label{LP}
The system (\ref{KP1}) has the Laurent property: 
$r_n,s_n,t_n\in\hat{\cal R}:={\cal R}[r_0^{\pm 1}, s_0, s_1, t_0, t_1]$, where
$\cal R$ is the ring of polynomials  in the parameters $a_{ij},b_{ij}$ over $\Z$.
\end{theorem}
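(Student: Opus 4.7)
The proof divides naturally into two stages: first, establish the Laurent property in the specialization $r_0 = 1$, where the iterates turn out to be honest polynomials in $s_0, s_1, t_0, t_1$; then extend to arbitrary $r_0$ via a one-parameter multiplicative gauge.

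For the specialization $r_0 = 1$, take $g_0 = g_1 = 1$ in the recursion $g_{n+1} = r_{n-1} g_{n-1} g_n^4$, so that the quantities $p_n = g_n s_n$, $q_n = g_n t_n$ from (\ref{G3}) satisfy the polynomial recurrence (\ref{QRTH}) with initial data $(p_0, q_0, p_1, q_1) = (s_0, t_0, s_1, t_1)$. Hence $p_n, q_n \in {\cal R}[s_0, s_1, t_0, t_1]$ by a straightforward induction. The relation $Q_n = g_n^8 r_n r_{n-1}$, together with Lemma \ref{L15}, Theorem \ref{QdQ}, and the product formula from (\ref{cook}), then inductively yields the polynomiality of both $g_n$ and $r_n$ in ${\cal R}[s_0, s_1, t_0, t_1]$, whence also $s_n = p_n/g_n$ and $t_n = q_n/g_n$.

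For arbitrary $r_0$, introduce the multiplicative gauge $s_n = \mu_n \hat{s}_n$, $t_n = \mu_n \hat{t}_n$, $r_n = \nu_n \hat{r}_n$, where $(\hat{s}_n, \hat{t}_n, \hat{r}_n)$ are the polynomial iterates just constructed (corresponding to $\hat{r}_0 = 1$ with the same $s_0, s_1, t_0, t_1$). Using that $F^{(i)}$ and $Q$ are homogeneous of degrees $4$ and $8$ respectively in $(s, t)$, substituting the ansatz into (\ref{KP1}) produces the scalar gauge recurrences
\begin{align*}
\nu_n \nu_{n-1} &= \mu_n^8, \\
\mu_{n+1} \nu_{n-1} &= \mu_{n-1} \mu_n^4,
\end{align*}
with $\mu_0 = \mu_1 = 1$ and $\nu_0 = r_0$ forced by matching the five initial values. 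Solving these step by step gives $\mu_n = r_0^{\alpha_n}$ and $\nu_n = r_0^{\beta_n}$ with integer exponents $\alpha_n, \beta_n$ for all $n \geq 0$, and combined with the polynomiality of the hatted iterates this places $s_n, t_n, r_n$ in $\hat{\cal R} = {\cal R}[r_0^{\pm 1}, s_0, s_1, t_0, t_1]$.

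The principal obstacle lies in the first stage, specifically the polynomiality of $r_n$: the closed form $r_n = \prod_j (Q_{n-j})^{1 - 2j^2}$ contains negative exponents, so what is needed is the full \emph{cumulative} divisibility $\prod_{k=2}^{n-2} (Q_{n-k})^{8\m_{k-1}} \cdot Q_{n-1} \mid Q_n$ implicit in (\ref{cook}), extending the single-step $Q_{n-1} \mid Q_n$ of Theorem \ref{QdQ}. This requires careful accounting of how each earlier $Q_{n-k}$ appears to the claimed multiplicity $8\m_{k-1}$ in $\gcd(p_n, q_n)$. Once this divisibility chain is established, the gauge lift to generic $r_0$ is immediate.
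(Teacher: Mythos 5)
Your two-stage strategy — establish polynomiality of the iterates of (\ref{KP1}) at $r_0=1$ directly from the homogenized system (\ref{QRTH}), then lift to general $r_0$ by a gauge in $r_0$ alone — is a genuinely different route from the paper's, and the second stage is sound: the recurrences $\nu_n\nu_{n-1}=\mu_n^8$, $\mu_{n+1}\nu_{n-1}=\mu_{n-1}\mu_n^4$ with $\mu_0=\mu_1=1$, $\nu_0=r_0$ do yield integer exponents $\alpha_n,\beta_n$, so if stage one held, the conclusion $s_n,t_n,r_n\in\hat{\cal R}$ would follow. The difficulty is exactly where you flag it: stage one needs (a) that $r_n$, a priori defined by $r_n=\prod_{j}(Q_{n-j})^{1-2j^2}$ with negative exponents, is in fact a polynomial, and (b) that $g_n$ divides both $p_n$ and $q_n$ so that $s_n=p_n/g_n$, $t_n=q_n/g_n$ are polynomial. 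Both reduce to the cumulative divisibility $\prod_{k=2}^{n-2} r_{n-k}^{8\m_{k-1}}\,r_{n-1}r_n=Q_n$ with every $r_m$ polynomial, which is strictly stronger than the single-step Theorem \ref{QdQ} ($Q_n\mid Q_{n+1}$) and the divisibilities (\ref{pnp2}), (\ref{qnp2}). Proving this would require an induction tracking the exact multiplicity of each $Q_m$ in $\gcd(p_n,q_n)$, which is itself a substantial argument comparable in depth to the theorem being proved. As written, ``then inductively yields the polynomiality of $g_n$ and $r_n$'' is the conclusion you want, not a derivation of it.

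The paper sidesteps the divisibility chain entirely by adapting Hickerson's criterion (Theorem \ref{T1}) to the weighted-homogeneous system (\ref{KP1}), whose only denominator at each step is the monomial $r_{n-1}$. One checks finitely many base cases — that $r_1,r_2,s_2,t_2,s_3,t_3\in\hat{\cal R}$, using Theorem \ref{QdQ} and the factorizations (\ref{pnp2}), (\ref{qnp2}) exactly once — then verifies $(r_1,r_2)=1$ by a numerical specialization, and deduces Laurentness of all further iterates by expressing each in two ways from shifted windows of initial data and invoking coprimality to force all non-monomial denominator factors to cancel. Your gauge observation is a nice complement, since it explains structurally why only $r_0$ can appear in denominators, but a complete proof along your lines would still need the full multiplicity bookkeeping that the Hickerson route avoids.
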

\begin{proof} 
Upon noting that $Q(s,t)$ and $F^{(j)}(s,t)$ are  homogeneous with weights 8 and 4 respectively, we see that the system
(\ref{KP1}) is weighted homogeneous, where $r_n$ has degree 4 and $s_n,t_n$ have degree 1.  
This allows us to show by induction that  the iterates can be written in terms of ${\bf r}=(r_0,s_0,s_1,t_0,t_1)$
in a similar fashion  to (\ref{deform}), 
as
\beq \label{ndrst} 
 r_n = \frac{\rN^*_n (s_0,s_1,t_0,t_1)}{{r_0}^{{c}_n}}, \qquad
s_n = \frac{\rN_n (s_0,s_1,t_0,t_1)}{{r_0}^{{d}_n}}, \qquad
t_n = \frac{\hat{\rN}_n (s_0,s_1,t_0,t_1)}{{r_0}^{{e}_n}}, 
\eeq 
where $\rN^*_n$, $\rN_n$, $\hat{\rN}_n$ are homogeneous polynomials in $s_0,s_1,t_0,t_1$. 
The system (\ref{KP1}) only involves division by $r_{n-1}$ at each iteration, and since it is of second order in $s_n,t_n$ and only first order in $r_n$, it is slightly more general than the conditions in Theorem 2, but   
Hickerson's method still extends to this  situation. Clearly the 5  initial values as well as $r_1,s_2,t_2$ belong to $\hat{\cal R}$, 
while from Theorem \ref{QdQ} it follows that $Q(s_1,t_1)| Q(s_2,t_2)$, which implies $r_2\in \hat{\cal R}$, and the same computations that yield (\ref{pnp2}) and (\ref{qnp2}) also give $s_3,t_3\in \hat{\cal R}$. It can also be verified with 
computer algebra that $r_1$ and $r_2$ are coprime; it is sufficient to check that this is so for some particular numerical  choice of coefficients $a_{ij},b_{ij}$, in which case it must hold when the coefficients are variables. 
Then from the inductive hypothesis we can write the new iterate produced by the first equation in (\ref{KP1}) as 
$$
r_n = \frac{\rN^*_{n-1} (s_1,s_2,t_1,t_2)}{{r_1}^{{c}_{n-1}}} = \frac{\rN^*_{n-2} (s_2,s_3,t_2,t_3)}{{r_2}^{{c}_{n-2}}}.
$$ 
By 
substituting for $r_1,r_2,s_2,s_3,t_2,t_3$ as Laurent polynomials in $\hat{\cal R}$ and using $(r_1,r_2)=1$, it follows from the equality of the latter two expressions  above that $r_n\in\hat{\cal R}$, and the same argument shows that 
$s_{n+1},t_{n+1}\in\hat{\cal R}$. 
\end{proof}
In an appendix, we also prove the following result. 
\begin{lemma}\label{coprime} 
The Laurent polynomials $s_n$ and $t_n$ generated by (\ref{KP1}) are coprime in $\hat{\cal R}$ for all $n\geq 0$. 
\end{lemma}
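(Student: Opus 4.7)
The plan is to proceed by induction on $n$, with a strengthened hypothesis that, for all $m \leq n$, the polynomials $s_m$, $t_m$, $r_m$, and $r_{m-1}$ are pairwise coprime in $\hat{\cal R}$. The base cases $n=0,1,2$ are checked directly from $s_0 = u_0$, $t_0 = 1$, $s_1 = u_1$, $t_1 = 1$, $r_0 = 1$, together with $r_1 = Q(u_1,1)$ and the explicit forms of $s_2, t_2, r_2$ given by (\ref{KP1}); as invoked in the proof of Theorem \ref{LP} for $(r_1, r_2) = 1$, it suffices to verify these coprimalities at a single numerical specialisation of the $a_{ij}, b_{ij}$ and then lift to the symbolic polynomial ring.

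For the inductive step, the key is an adjugate manipulation analogous to the ``Sylvester''-type argument used in Proposition \ref{thm5}. Reading the last two equations of (\ref{KP1}) as the $2 \times 2$ linear system
$$ M \begin{pmatrix} t_{n-1} \\ s_{n-1} \end{pmatrix} = r_{n-1} \begin{pmatrix} s_{n+1} \\ t_{n+1} \end{pmatrix}, \qquad M = \begin{pmatrix} F^{(1)}(s_n,t_n) & -F^{(2)}(s_n,t_n) \\ F^{(2)}(s_n,t_n) & -F^{(3)}(s_n,t_n) \end{pmatrix}, $$
the first equation of (\ref{KP1}) identifies $\det M = (F^{(2)})^2 - F^{(1)}F^{(3)} = Q(s_n,t_n) = r_n r_{n-1}$. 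Multiplying by $M^{\mathrm{adj}}$ and cancelling $r_{n-1}$ from both sides yields
\begin{align*}
r_n t_{n-1} &= F^{(2)}(s_n,t_n)\,t_{n+1} - F^{(3)}(s_n,t_n)\,s_{n+1}, \\
r_n s_{n-1} &= F^{(1)}(s_n,t_n)\,t_{n+1} - F^{(2)}(s_n,t_n)\,s_{n+1}.
\end{align*}
Any common factor $P$ of $s_{n+1}$ and $t_{n+1}$ therefore divides both $r_n t_{n-1}$ and $r_n s_{n-1}$, and the inductive coprimality $(s_{n-1}, t_{n-1}) = 1$ then forces $P \mid r_n$.

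The hard part will be ruling out a non-trivial $P$ with $P \mid r_n$, $P \mid s_{n+1}$, and $P \mid t_{n+1}$ simultaneously. The plan is to combine two ingredients: first, from $r_{n+1}r_n = Q(s_{n+1}, t_{n+1})$ and the homogeneity of $Q$ of degree $8$ in $(s,t)$, one obtains $P^8 \mid r_{n+1} r_n$; second, reducing the recurrences for $s_{n+1}$ and $t_{n+1}$ modulo $P$ yields $t_{n-1} F^{(i)}(s_n,t_n) \equiv s_{n-1} F^{(i+1)}(s_n,t_n) \pmod P$ for $i=1,2$, which is compatible with $P \mid Q(s_n,t_n) = r_n r_{n-1}$ but can be combined with the divisor-propagation statement of Theorem \ref{QdQ} to show that such a $P$ would already have been absorbed into the recursive definition of $g_n$ in (\ref{cook}), contradicting the construction of $s_{n+1} = p_{n+1}/g_{n+1}$ and $t_{n+1} = q_{n+1}/g_{n+1}$ as the ``stripped'' polynomials. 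The most robust way to actually close this step, however, is the specialisation-and-lift strategy used in the proof of Theorem \ref{LP}: verify the four required pairwise coprimalities at a single numerical choice of the coefficients $a_{ij}, b_{ij}$ by computer algebra, and promote the conclusion to $\hat{\cal R}$ using the fact that coprimality of polynomials is preserved under such a lift. The same technique handles the remaining pairs $(r_{n+1}, s_{n+1})$, $(r_{n+1}, t_{n+1})$, and $(r_{n+1}, r_n)$, closing the induction.
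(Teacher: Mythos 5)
Your adjugate manipulation of the last two equations of (\ref{KP1}) is a clean and correct observation: from $M^{\mathrm{adj}} M = Q(s_n,t_n)\, I = r_n r_{n-1} I$ and cancellation of $r_{n-1}$, you deduce
$r_n t_{n-1} = F^{(2)} t_{n+1} - F^{(3)} s_{n+1}$ and $r_n s_{n-1} = F^{(1)} t_{n+1} - F^{(2)} s_{n+1}$, and hence any irreducible common factor $P$ of $s_{n+1}$ and $t_{n+1}$ must divide $r_n$ (using $(s_{n-1},t_{n-1})=1$). This is a genuinely different opening move from the paper's: there, a resultant identity in $u_0=s_0/t_0$ is shifted via the pullback of the map to give (\ref{laurid}), whose right-hand side $-t_{n-1}^2\,Q(s_n,t_n)$ only forces $P\mid t_{n-1}$ or $P\mid Q(s_n,t_n)=r_n r_{n-1}$, whereas your version pins down the cleaner conclusion $P\mid r_n$ directly. (You do implicitly need $P$ irreducible to pass from $P\mid r_n t_{n-1}$ and $P\mid r_n s_{n-1}$ to $P\mid r_n$; for a coprimality statement that reduction is harmless.)

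The genuine gap is in the step you yourself flag as ``the hard part'': ruling out an irreducible $P$ with $P\mid r_n$, $P\mid s_{n+1}$, $P\mid t_{n+1}$ simultaneously. Neither route you sketch actually closes it. The observation $P^8 \mid Q(s_{n+1},t_{n+1}) = r_{n+1} r_n$ is true but is not on its own a contradiction, and the appeal to Theorem~\ref{QdQ} together with ``such a $P$ would already have been absorbed into the recursive definition of $g_n$'' is circular: $g_n$ is defined in (\ref{cook}) as a specific explicit product of powers of the $r_k$, not as $\gcd(p_n,q_n)$, so the claim that nothing more can be stripped out is precisely what the lemma must establish, not something one may invoke. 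The specialisation-and-lift fallback also fails here: that technique certifies a \emph{fixed, finite} coprimality (as in the paper's check that $(r_1,r_2)=1$ in the proof of Theorem~\ref{LP}), but it cannot close an induction over all $n$, since $r_n,s_{n+1},t_{n+1}$ change with $n$ and there is no single computation that decides the general inductive step. The paper closes the gap by deriving, once and for all, further shifted resultant identities relating $r_{n-1}s_{n+1}$, $t_{n+1}$, $Q(s_n,t_n)$, $t_n$ and the pair $(s_{n-1},t_{n-1})$, whose right-hand sides carry \emph{coprime} resultant polynomials $R^*,R^\dagger$ (respectively $\tilde R,\hat R$); those, combined with the Sylvester-matrix argument already used in Proposition~\ref{thm5} and the inductive hypothesis $(s_{n-1},t_{n-1})=(s_n,t_n)=1$, deliver the contradiction. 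You would need to supply an argument of comparable explicitness before your induction is complete.
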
 

As before, no cancellations occur in the numerators when the Laurent polynomials are substituted into the right-hand sides of the system (\ref{KP1}), so we can immediately write down the ultradiscrete system for the denominator 
exponents $c_n,d_n,e_n$ in (\ref{ndrst}), 
that is
$$
\begin{array}{rcl}
{c}_n+{c}_{n-1} & = & \max \Big((8-i){d}_n +i{e}_n\Big)_{i=0,\ldots,8}, \\
 {d}_{n+1}+{c}_{n-1} & = & \max ({d}_{n-1},{e}_{n-1}) + M_n, \\
 {e}_{n+1}+{c}_{n-1} & = & \max ({d}_{n-1},{e}_{n-1}) + M_n,
\quad M_n:=  \max \Big((4-i){d}_n +i{e}_n\Big)_{i=0,\ldots,4}
\end{array}
$$
Subtracting the last two equations above implies that ${d}_n = {e}_n$ for all $n\geq 2$, and hence
$$
{c}_n+{c}_{n-1} =8{d}_n, \qquad
{d}_{n+1}+{c}_{n-1} =4{d}_n + {d}_{n-1} \implies ({\cal S}-1)^3 {d_n}=0.
$$
Therefore ${d}_n$ and ${e}_n$ both grow quadratically with $n$, as do the degrees of 
$\rN_n$, $\hat{\rN}_n$, and (by Lemma \ref{coprime} and its proof - see appendix) 
these are coprime and their degrees remain the same after substituting ${\bf r}=(1,u_0,u_1,1,1)$ into
$u_n =\rN_n/\hat{\rN}_n$. This yields 
an exact 
formula for the degree growth of  (\ref{QRT}). To be precise, for $n\geq 2$ we find
${ d}_n ={ e}_n={(n^2-n)}/{2}
\implies
\deg_{\bf u}(\rN_n)=\deg_{\bf u}(\hat{\rN}_n)=\deg_{\bf u}(u_n)=2n^2-2n+1$, using the fact that 
$\deg_{\bf r}(\rN_n)=4d_n+1$ by the weighted homogeneity of (\ref{KP1}). 
\begin{theorem}\label{LL5}
As a rational function of the initial values $u_0,u_1$, the $n$th iterate $u_n$ of
 the symmetric QRT map (\ref{QRT}) has degree $2n^2-2n+1$.
\end{theorem}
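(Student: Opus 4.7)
The plan is to lift the degree count to the level of the three-component Laurent system (\ref{KP1}) via the representation (\ref{ndrst}) guaranteed by Theorem \ref{LP}, then read off degrees from the ultradiscrete (tropical) shadow of the system. First I would record two structural facts: (i) by the weighted homogeneity of (\ref{KP1}), with $r_n$ of weight $4$ and $s_n,t_n$ of weight $1$, the total $\bf r$-degrees of the homogeneous numerators satisfy $\deg_{\bf r}(\rN^*_n)=4c_n+4$, $\deg_{\bf r}(\rN_n)=d_n+1$, $\deg_{\bf r}(\hat{\rN}_n)=e_n+1$; (ii) since $Q$ and the $F^{(j)}$ are themselves polynomial (no sign cancellations in the leading parts when viewed along appropriate directions), no cancellation occurs between numerator and denominator when the Laurent expressions from the inductive step are plugged into the right-hand sides of (\ref{KP1}). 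This second point is the analogue of the subtraction-free argument used in the multiplicative case and of Lemma \ref{nosf} in the additive case, and is what allows the tropical system to compute the denominator exponents exactly rather than merely bounding them.

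Granted the no-cancellation property, taking the max-plus tropicalisation of (\ref{KP1}) directly gives
\[
\begin{array}{rcl}
c_n+c_{n-1} &=& \max_{0\le i\le 8}\bigl((8-i)d_n+i e_n\bigr),\\
d_{n+1}+c_{n-1} &=& \max(d_{n-1},e_{n-1})+M_n,\\
e_{n+1}+c_{n-1} &=& \max(d_{n-1},e_{n-1})+M_n,
\end{array}
\]
with $M_n=\max_{0\le i\le 4}\bigl((4-i)d_n+i e_n\bigr)$. Subtracting the last two equations gives $d_{n+1}=e_{n+1}$, and since the initial data $s_0=u_0$, $s_1=u_1$, $t_0=t_1=1$, $r_0=1$ force $d_n=e_n$ already for small $n$, this equality holds for all $n\ge 2$. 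Then the system collapses to $c_n+c_{n-1}=8d_n$ and $d_{n+1}+c_{n-1}=4d_n+d_{n-1}$; eliminating $c_n$ yields $({\cal S}-1)^3 d_n=0$, so $d_n$ is quadratic in $n$. With the initial conditions read off from (\ref{ndrst}) and the small-$n$ values $d_0=0,d_1=0,d_2=1$ one obtains $d_n=e_n=(n^2-n)/2$ and hence $\deg_{\bf r}(\rN_n)=\deg_{\bf r}(\hat{\rN}_n)=2n^2-2n+1$.

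To pass from $\bf r$-degrees to $\bf u$-degrees I would use Lemma \ref{coprime}: since $s_n$ and $t_n$ are coprime in $\hat{\cal R}$, after the specialisation ${\bf r}=(1,u_0,u_1,1,1)$ they remain coprime as polynomials in $u_0,u_1$ (generic coefficients ensuring no accidental cancellations), so the ratio $u_n=s_n/t_n$ is already in reduced form. The specialisation preserves the total degree because, by homogeneity, every monomial of $\rN_n$ and $\hat{\rN}_n$ in the four variables $s_0,s_1,t_0,t_1$ has the same total degree, and the substitution $s_0\mapsto u_0$, $s_1\mapsto u_1$, $t_0,t_1\mapsto 1$ cannot lower this. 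Combining these observations gives $\deg_{\bf u}(u_n)=2n^2-2n+1$, which is the asserted formula.

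The main obstacle is the no-cancellation claim used to justify that the tropical equalities (and not merely inequalities) hold. Unlike the multiplicative case, the system (\ref{KP1}) is not subtraction-free, so one cannot appeal to the cluster-algebraic argument of \cite{fziv,fst} directly; instead one must argue by induction, tracking the leading monomial structure of $\rN_n,\hat{\rN}_n$ in a chosen pair of variables (say $s_0,s_1$) to confirm that the dominant terms on the right-hand sides of the second and third equations of (\ref{KP1}) cannot cancel with each other or with the leading contribution from $Q(s_n,t_n)$ in the first equation. This is the analogue of the bookkeeping performed in the proof of Lemma \ref{nosf}, and I would structure the induction identically, simultaneously maintaining the coprimality needed for Lemma \ref{coprime}.
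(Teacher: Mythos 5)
Your proposal follows the paper's proof almost exactly: tropicalize the three-component system (\ref{KP1}), collapse to $d_n=e_n$, derive $({\cal S}-1)^3 d_n=0$ from the resulting linear recurrences, and translate back to $\deg_{\bf u}(u_n)$ via the weighted homogeneity and Lemma \ref{coprime}. One arithmetic slip: since $r_0$ carries weight $4$ in the denominator $r_0^{d_n}$, the weighted-homogeneity identities should read $\deg_{\bf r}(\rN_n)=4d_n+1$ and $\deg_{\bf r}(\hat{\rN}_n)=4e_n+1$, not $d_n+1$ and $e_n+1$; you evidently apply the correct version at the end, since $d_n=(n^2-n)/2$ then gives $4d_n+1=2n^2-2n+1$, so the final conclusion is unaffected.
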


\subsection{Degeneration to the Laurentified multiplicative/additive QRT maps}
In this subsection, we show how the Laurentified multiplicative and additive QRT maps arise as special cases of the three-component system (\ref{KP1}).

\begin{theorem}
In the degenerate case $b_{11}=1$ and all other  $b_{ij}=0$, the polynomials $r_{n}$, $s_{n}$ and $t_{n}$ can be expressed in terms of the solution to (\ref{smsql}) as
\beq\label{H15}r_n = k_{n+1}l_{n+1}k_{n}l_{n}, \qquad
s_n=k_n, \qquad t_n=l_n.
\eeq
\end{theorem}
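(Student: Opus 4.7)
The plan is to verify the ansatz $(s_n,t_n,r_n)=(k_n,l_n,k_{n+1}l_{n+1}k_nl_n)$ solves the three-component system (\ref{KP1}) by direct substitution, using the two relations of (\ref{smsql}) as the only algebraic input. The first step is to specialise the building blocks of (\ref{KP1}) to the degenerate $\mathcal{B}$. Since only $b_{11}$ is nonzero one has $\mathbf{B}(s,t)=\mathcal{B}\mathbf{V}(s,t)=(0,st,0)^T$, which collapses the cross product to
\[
\mathbf{F}(s,t)=\mathbf{A}(s,t)\times\mathbf{B}(s,t)=\bigl(-st\,A^{(3)}(s,t),\,0,\,st\,A^{(1)}(s,t)\bigr)^T,
\]
where $A^{(1)}(s,t)=a_{00}s^2+a_{01}st+a_{02}t^2$ and $A^{(3)}(s,t)=a_{02}s^2+a_{12}st+a_{22}t^2$ are the first and third components of $\mathbf{A}(s,t)$. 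In particular $F^{(2)}\equiv 0$, so $Q(s,t)=-F^{(1)}F^{(3)}=s^2t^2\,A^{(1)}(s,t)A^{(3)}(s,t)$.

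The second step is to match parameters: under the identification $(a_1,a_2,a_3,a_5,a_6)=(a_{00},a_{01},a_{02},a_{12},a_{22})$, the degenerate QRT reduces to (\ref{mul}) and the Laurent system (\ref{smsql}) reads exactly $k_{n+1}k_{n-1}=A^{(3)}(k_n,l_n)$ and $l_{n+1}l_{n-1}=A^{(1)}(k_n,l_n)$. Substituting the ansatz into the first equation of (\ref{KP1}) then yields the one-line identity
\[
r_nr_{n-1}=(k_{n+1}l_{n+1}k_nl_n)(k_nl_nk_{n-1}l_{n-1})=k_n^2l_n^2\,A^{(3)}(k_n,l_n)\,A^{(1)}(k_n,l_n)=Q(s_n,t_n).
\]
For the second equation, the vanishing of $F^{(2)}$ eliminates one term and what remains, $t_{n-1}F^{(1)}(s_n,t_n)=-l_{n-1}\,k_nl_n\,k_{n+1}k_{n-1}$, matches $s_{n+1}r_{n-1}=k_{n+1}k_nl_nk_{n-1}l_{n-1}$ as the same product of four adjacent $k$'s and $l$'s; the third equation is handled identically using $F^{(3)}(k_n,l_n)=k_nl_n\,l_{n+1}l_{n-1}$.

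The main thing to watch is the overall sign introduced by the cross-product convention, which leaves the matching in equations two and three up to a global sign. Since the first equation of (\ref{KP1}) is quadratic in $r_n$, the map $r_n\mapsto -r_n$ is an automorphism of it, and any such sign is absorbed by the choice of initial value $r_0$ (which for this ansatz equals $k_1l_1k_0l_0=u_0u_1$, not the default value $1$ used in Section~5.1). One then checks the base case $r_1r_0=Q(s_1,t_1)$, which reduces via (\ref{smsql}) to $(k_2k_0)(l_2l_0)=A^{(3)}(u_1,1)A^{(1)}(u_1,1)$, and the remaining values follow by induction from the recursive structure of (\ref{KP1}) together with (\ref{smsql}). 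No obstacle beyond this sign and initial-value bookkeeping is expected; the entire verification is a short polynomial computation.
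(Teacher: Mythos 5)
Your proof takes essentially the same route as the paper's: specialise $\mathbf{B}=(0,st,0)^T$ so that $F^{(2)}\equiv 0$, compute $F^{(1)}=-st\,A^{(3)}$, $F^{(3)}=st\,A^{(1)}$, $Q=s^2t^2A^{(1)}A^{(3)}$, identify $(a_1,\dots,a_6)$ with the corresponding $a_{ij}$, and substitute the ansatz into the specialised (\ref{KP1}), invoking (\ref{smsqla}) and (\ref{smsqlb}). That part of the calculation is correct and matches (\ref{KP9}) modulo sign.

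The sign discussion, however, is not right. You claim the discrepancy is harmless because ``$r_n\mapsto -r_n$ is an automorphism of [the first equation]''. While the first equation $r_nr_{n-1}=Q(s_n,t_n)$ is indeed invariant under flipping every $r_n$, the second and third equations $s_{n+1}r_{n-1}=\cdots$, $t_{n+1}r_{n-1}=\cdots$ change sign under that flip, so it is \emph{not} a symmetry of the full system (\ref{KP1}), and the sign cannot be absorbed into a choice of $r_0$ once $s_0,s_1,t_0,t_1$ are fixed. What is actually happening is that with the stated convention $\mathbf{F}=\mathbf{A}\times\mathbf{B}$ the degenerate specialisation of (\ref{KP1}) reads $s_{n+1}r_{n-1}=-t_{n-1}s_nt_nS_n$ and $t_{n+1}r_{n-1}=-s_{n-1}s_nt_nR_n$, so the correct ansatz is $r_n=-k_{n+1}l_{n+1}k_nl_n$; this is consistent with the additive degeneration, where (\ref{KP4}) carries the minus sign in the third equation and the ansatz (\ref{QRTM}) correspondingly has $r_n=-l_n^2l_{n+1}^2$. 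Equations (\ref{KP9}) and (\ref{H15}) in the multiplicative case are internally consistent with each other but both appear to carry the same sign slip relative to (\ref{KP1}); the fix is to flip the sign of $r_n$ in (\ref{H15}), not to appeal to a non-existent symmetry of (\ref{KP1}).

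Your observation that the ansatz requires $r_0=\pm k_1l_1k_0l_0=\pm u_0u_1$ rather than the default $r_0=1$ from Section 5.1 is correct and worth noting, but it is orthogonal to the sign issue: changing $|r_0|$ rescales the whole $r$-sequence, whereas the sign of $r_0$ is forced by the second and third equations once the $s,t$ initial data are fixed.
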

\begin{proof}
By substituting $b_{11}=1$ and no  other non-zero $b_{ij}$ in (\ref{KP1}), we find
\beq r_{n}r_{n-1}=s_{n}^2t_{n}^{2}R_{n}S_{n},\label{KP9}\quad
s_{n+1}r_{n-1}=t_{n-1}t_{n}s_{n}S_{n},\quad 
t_{n+1}r_{n-1}=s_{n-1}s_{n}t_{n}R_{n},
\eeq
where
$R_{n}={a_{{1}}s_{{n}}}^{2}+a_{{2}}s_{{n}}t_{{n}}+a_{{3}}{t_{{n}}}^{2}$ and $S_{n}={a_{{3}}s_{{n}}}^{2}+a_{{5}}s_{{n}}t_{{n}}+a_{{6}}{t_{{n}}}^{2}$. With the substitutions (\ref{H15}),
the second and third equations n (\ref{KP9}) correspond to the two equations (\ref{smsql}), while the first one is a consequence of them.
\end{proof}

\begin{theorem}
In the degenerate case $b_{22}=1$ and all other $b_{ij}=0$,  $r_{n}$, $s_{n}$  and  $t_{n}$ are expressed in terms of solutions to (\ref{M12}) as
\beq\label{QRTM} r_{n}=-l_{n}^{2}l_{n+1}^{2}, \qquad
s_{n}=k_{n},\qquad t_{n}=l_{n} .
\eeq
\end{theorem}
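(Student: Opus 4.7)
The plan is to mirror the proof of the preceding theorem: substitute the degenerate matrix ${\cal B}$ into the definitions of ${\bf F}$ and $Q$ appearing in (\ref{KP1}), then verify each of the three equations using the relations (\ref{M12}) satisfied by $k_n,l_n$.

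First I would compute the ingredients explicitly. With $b_{22}=1$ and all other $b_{ij}=0$, one has ${\cal B}{\bf V}(u,1)=(0,0,1)^T$, so ${\bf f}(u)={\cal A}{\bf V}(u,1)\times (0,0,1)^T$ gives $f^{(1)}(u)=a_{01}u^2+a_{11}u+a_{12}$, $f^{(2)}(u)=-(a_{00}u^2+a_{01}u+a_{02})$, $f^{(3)}(u)=0$. Matching the scalar map (\ref{QRT}) against (\ref{add}) then identifies $(a_{00},a_{01},a_{02},a_{11},a_{12})$ with $(a_1,a_2,a_3,a_4,a_5)$. Homogenising via $F^{(i)}(s,t)=t^4 f^{(i)}(s/t)$ yields $F^{(1)}(s,t)=t^2(a_2 s^2+a_4 st+a_5 t^2)$, $F^{(2)}(s,t)=-t^2(a_1 s^2+a_2 st+a_3 t^2)$, $F^{(3)}(s,t)=0$, and hence $Q(s,t)=(F^{(2)})^2=t^4(a_1 s^2+a_2 st+a_3 t^2)^2$.

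Next I would substitute (\ref{QRTM}) into the three equations of (\ref{KP1}) and reduce using (\ref{M12}). The first equation gives $r_n r_{n-1}=l_{n-1}^2 l_n^4 l_{n+1}^2$ on the left, while on the right $Q(k_n,l_n)=l_n^4(l_{n+1}l_{n-1})^2$ after applying the second equation of (\ref{M12}). The third equation reduces to the same instance of that relation after factoring out $-l_{n-1}l_n^2$ on its right-hand side. The second equation is the only one requiring both equations of (\ref{M12}): pulling $l_n^2$ out of the right-hand side leaves $l_{n-1}(a_2 k_n^2+a_4 k_n l_n+a_5 l_n^2)+k_{n-1}(a_1 k_n^2+a_2 k_n l_n+a_3 l_n^2)$; replacing the second summand by $k_{n-1} l_{n+1}l_{n-1}$ via the second equation of (\ref{M12}) and then factoring out $l_{n-1}$ produces $-k_{n+1}l_{n-1}$ via the first equation of (\ref{M12}), matching the left-hand side.

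There is no real obstacle; the verification is mechanical, and the main care needed is in tracking signs, particularly the minus sign in $r_n=-l_n^2 l_{n+1}^2$, which pairs naturally with the minus sign in $f^{(2)}$ and with the overall minus sign in the first equation of (\ref{M12}).
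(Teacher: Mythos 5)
Your proposal is correct and follows essentially the same route as the paper: compute $\mathbf{f}$, $F^{(i)}$, and $Q$ for $b_{22}=1$, $b_{ij}=0$ otherwise, then verify the three equations of the specialised system (\ref{KP4}) by substituting $r_n=-l_n^2l_{n+1}^2$, $s_n=k_n$, $t_n=l_n$ and reducing with the two relations in (\ref{M12}). Your explicit computations (including the sign tracking through $f^{(2)}$, the minus sign in $r_n$, and the minus sign in the first equation of (\ref{M12})) match the paper's argument, which merely states that the first and third equations follow from the second equation of (\ref{M12}) and the second from the first.
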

\begin{proof}
Setting  all  $b_{ij}=0$ apart from $b_{22}=1$ in (\ref{KP1}) gives
\beq\label{KP4}
r_{n}r_{n-1}=t_{n}^4R_{n}^2,\qquad
s_{n+1}r_{n-1}=t_{n}^2(s_{n-1}R_{n}+t_{n-1}  S_n), \qquad
t_{n+1}r_{n-1}=-t_{n-1}t_{n}^2R_{n},
\eeq
where $R_{n}=a_{1}s_{n}^{2}+a_{2}t_{n}s_{n}+a_{3}t_{n}^{2}$,
$S_n =a_{2}s_{n}^2+a_{4}s_{n}t_{n}+a_{5}t_{n}^2$.
The first and third equations above follow from the second equation in (\ref{M12}). The second equation in (\ref{KP4})
is a consequence of the first equation of the system (\ref{M12}).
\end{proof}

\section{Somos recurrences}
In this section we show that the components of the Laurent systems we have derived satisfy Somos-7 recurrence relations. We start with the multiplicative case, and from this we will obtain the additive case, before proceeding to obtain the Somos-7 relations corresponding  to the general symmetric QRT map as a corollary of a broader result for asymmetric QRT.

\subsection{The multiplicative case}
Let
\beq\label{iinv}
I=\frac {a_1u_0^2u_1^2+a_2u_0u_1(u_0+u_1)+a_3(u_0^2+u_1^2)+a_5(u_0+u_1)+a_6}{u_0u_1},
\eeq
be the invariant of (\ref{mul}) considered as a map on the plane of initial values
$(u_0,u_1)$. We will prove the following theorem.
\begin{theorem}\label{maint}
The variables $k_n,l_{n}$ in (\ref{smsql}) each satisfy the same Somos-7 recurrence,
\begin{equation} \label{s7}
x_{n+7}x_n=Ax_{n+6}x_{n+1}+Bx_{n+5}x_{n+2}+Cx_{n+4}x_{n+3},
\end{equation}
with coefficients given by
\begin{equation}\label{cofo}
A = -(\alpha^3+\alpha\beta\gamma+\beta^2), \qquad
B = (\beta^2-A)\alpha^2, \qquad
C = A(A-\beta^2),
\end{equation}
with
\begin{equation}
\label{abj}
\alpha = a_2a_5+a_3^2-a_1a_6+a_3 I, \quad \beta = (a_1a_5-2a_2a_3)a_5+a_2^2a_6+(a_1a_6-a_3^2)I ,
\quad \gamma=4a_3+I.
\end{equation}
\end{theorem}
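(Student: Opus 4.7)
The plan rests on two observations that reduce the theorem to a single polynomial identity. First, $I$ in (\ref{iinv}) is a rational first integral of (\ref{mul}), and hence, regarded as a function of $u_n = k_n/l_n$ and $u_{n+1}$, is constant along every orbit of the bilinear system (\ref{smsql}). The quantities $\alpha, \beta, \gamma$ in (\ref{abj}) and the coefficients $A, B, C$ in (\ref{cofo}) therefore do not depend on the choice of starting index. Second, (\ref{smsql}) is invariant under the involution $(k_n, l_n; a_1, a_2, a_3, a_5, a_6) \mapsto (l_n, k_n; a_6, a_5, a_3, a_2, a_1)$, and a direct check of (\ref{iinv}) and (\ref{abj}) shows that $I$, $\alpha$, $\beta$, $\gamma$ are each fixed by this involution. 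Hence it suffices to prove the Somos-7 relation (\ref{s7}) for $k_n$; the statement for $l_n$ then follows by duality.

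Next, I exploit the autonomy of (\ref{smsql}) to reduce the $k$-case to a single instance. If one verifies
\[
k_7 k_0 - A\, k_6 k_1 - B\, k_5 k_2 - C\, k_4 k_3 = 0
\]
in the ring ${\cal R}[k_0^{\pm 1}, k_1^{\pm 1}, l_0^{\pm 1}, l_1^{\pm 1}]$ (with $A, B, C$ substituted from (\ref{cofo})--(\ref{abj}) and $I$ expanded via (\ref{iinv})), then applying the same identity to the shifted initial data $(k_1, k_2, l_1, l_2)$ and using the invariance of $I$ (so that $A, B, C$ recompute to the same values) yields the case $n=1$; by induction the relation holds for every $n \geq 0$.

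The verification proceeds by iterating (\ref{smsql}) six times to obtain $k_2, \ldots, k_7$ as explicit Laurent polynomials in $(k_0, k_1, l_0, l_1)$; by Proposition \ref{thm5} the only denominators are Laurent monomials. After clearing these, and clearing the powers of $l_0 l_1$ needed to absorb $I$ in the coefficients, the identity becomes a homogeneous polynomial identity whose total degree is controlled by Theorem \ref{LL4}. I would verify it with computer algebra, either by direct expansion or by reduction modulo a Gr\"obner basis generated by the iterated relations (\ref{smsql}).

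The principal obstacle is practical: with degrees approaching $43$ in the initial variables at $n=7$, a hand-proof is infeasible, and the argument relies on a reliable computer-algebra computation. A more conceptual alternative, which would also explain the specific form (\ref{cofo})--(\ref{abj}) of the coefficients, is to use the elliptic-curve structure of the QRT map: orbits of (\ref{mul}) lie on a pencil of genus-one curves parametrised by $I$, and $k_n$ can be written as $k_n = \rho^n\, \sigma(nP + Q)$ for a translation point $P$ on the fibre, with $\rho, Q$ constants of motion. The Somos-7 relation then becomes a four-term quadratic identity among shifted $\sigma$-values with fixed index sum, a consequence of the classical three-term $\sigma$-function identity, and the explicit formulas for $A, B, C$ emerge by expressing $\wp(P)$, $\wp'(P)$, and the discriminant of the pencil in terms of $I$ and the $a_i$. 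Setting up this parametrisation is a substantial detour, so I would favour the direct approach outlined above.
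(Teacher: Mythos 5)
Your proposal is essentially correct, but it follows a genuinely different route from the paper's proof, and you dismiss the paper's actual argument as a ``substantial detour.'' The paper does not verify (\ref{s7}) by iterating (\ref{smsql}) and checking a single polynomial identity by computer. Instead it proves Theorem~\ref{sol}, writing $k_n = \tau_n\tau_n^*$ and $l_n = \tilde\tau_n\tilde\tau_n^*$ as products of four Somos-5 sequences that all share the same coefficients $\alpha,\beta$ and the same invariant $\gamma$ (via the uniformization of the invariant curve by an order-two elliptic function and the sigma-function formula (\ref{msqlsoln})), and then invokes Corollary~\ref{core}, a general statement that such a product satisfies a Somos-7 recurrence whose coefficients are expressed in the companion EDS $\psi_2,\psi_3,\psi_4,\psi_5$. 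The final step relates $\alpha,\beta,\gamma$ to the $a_j$ and $I$ by matching the cubic model (\ref{s5curve}) to the biquadratic $\mathcal C$; this is where computer algebra enters, but only to effect a birational change of coordinates between two genus-one curves, not to expand Laurent polynomials of degree $\sim 43$.

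By contrast, your approach reduces the theorem to one certificate computation: using autonomy plus the invariance of $I$ you reduce to $n=0$, and using the involution $(k_n,l_n;a_1,a_2,a_3,a_5,a_6)\mapsto(l_n,k_n;a_6,a_5,a_3,a_2,a_1)$ — which you correctly check fixes $I,\alpha,\beta,\gamma$ — you reduce from $l_n$ to $k_n$. Both reductions are sound, and the remaining polynomial identity is finite and in principle machine-checkable, so your proof is valid contingent on that computation. What the two approaches buy is different. Your route is more elementary and self-contained, requiring no elliptic-function theory, and the duality observation is a nice symmetry the paper never exploits. The paper's route is heavier but explanatory: it shows \emph{why} $A,B,C$ have the form (\ref{cofo}) (they are the Somos-7 coefficients built from an EDS), it simultaneously yields the Somos-7 for $l_n,p_n,q_n$ and generalizes with no extra work to the 12- and 18-parameter QRT maps in Theorem~\ref{genqrt} and Corollary~\ref{newpq}, and it avoids the degree-43 expansion whose feasibility your argument implicitly assumes. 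One small bookkeeping slip: the degree bound you invoke for the multiplicative iterates comes from the unlabelled theorem at the end of Section~3.2, not from Theorem~\ref{LL4}, which is the additive case; and $I$ also carries $k_0k_1$ in its denominator, not just $l_0l_1$, when clearing denominators. Neither affects the validity of the argument.
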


The coefficients may be found using computer algebra as follows. If $x_n,y_n$ satisfy the same Somos-7 recurrence, then
\[ \label{mat}
\det \begin{pmatrix}
x_{5}x_{-2}&x_{4}x_{-1}&x_{3}x_{0}&x_{2}x_{1}\\
x_{6}x_{-1}&x_{5}x_{0}&x_{4}x_{1}&x_{3}x_{2}\\
y_{5}y_{-2}&y_{4}y_{-1}&y_{3}y_{0}&y_{2}y_{1}\\
y_{6}y_{-1}&y_{5}y_{0}&y_{4}y_{1}&y_{3}y_{2}
\end{pmatrix}=0
\]
for all initial values (where, in each row of the matrix, the indices can be shifted by an arbitrary amount). The coefficients of the Somos-7 recurrence are then found by calculating  a constant vector (independent of $n$) that spans the one-dimensional kernel  of the above matrix - see \cite{hones6} for an introduction to this method.

However, it is more instructive to prove the result by considering the form of the solution. As an added benefit,  this provides more succinct expressions for the coefficients. We first show that ${k}_n,{l}_n$ can be written in the form of products
\begin{equation}\label{prod}
{k}_n = \tau_n\,  \tau^*_n, \qquad {l}_n = \tilde{\tau}_n \,   \tilde{\tau}^*_n,
\end{equation}
where each of $\tau_n,  \tau^*_n,  \tilde{\tau}_n ,   \tilde{\tau}^*_n$, satisfy a Somos-5 recurrence (in fact,  the \textit{same} Somos-5 recurrence, with identical coefficients and a first integral taking the same value); then we use the fact that products of this kind provide special solutions of Somos-7 recurrences.

The general Somos-7 recurrence, of the form (\ref{s7}), arises as a reduction of the bilinear discrete BKP equation,
a partial difference equation which is also known as the cube recurrence \cite{FZ}. It is possible to obtain a general analytic solution in terms of genus two sigma functions, corresponding to a translation on  a two-dimensional Jacobian variety, which we intend to
present elsewhere. Nevertheless, Somos-7 also admits special  solutions given by products of elliptic sigma functions, which are described as follows.

\begin{proposition} Let  $\tau_n$ satisfy the Somos-5 recurrence
\begin{equation}
\label{s5}
\tau_{n+5} \tau_{n} = \alpha \tau_{n+4}\tau_{n+1}+\beta \tau_{n+3}\tau_{n+2},
\end{equation}
with coefficients $\alpha, \beta$,
and let $\tau_n^*$ satisfy the same recurrence but with coefficients  $\alpha^*, \beta^*$.
Then whenever the constraint
\begin{equation}\label{aform}
A = \frac{\psi_4\psi_2^*\psi_5^*}{\psi_2\psi_4^*} =  \frac{\psi_4^*\psi_2\psi_5}{\psi_2^*\psi_4}
\end{equation}
holds,
the product $k_n = \tau_n\,  \tau^*_n$ satisfies a Somos-7 recurrence of the form (\ref{s7}) with
$A$  given by (\ref{aform}), 
\begin{equation}\label{bcform}
B=\frac{\psi_3\psi_4\psi_3^*\psi_4^*}{\psi_2\psi_2^*}-A\psi_3\psi_3^*, \qquad
C=\psi_5\psi_5^* -A\, \frac{\psi_4\psi_4^*}{\psi_2\psi_2^*},
\end{equation}
where $\psi_n,\psi_n^*$ denotes the companion elliptic divisibility sequence associated with $\tau_n,\tau_n^*$ respectively.
\end{proposition}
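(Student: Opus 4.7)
The plan is to exploit the classical sigma-function representation of Somos-5 solutions. Every $\tau_n$ satisfying (\ref{s5}) can be written, after a suitable gauge transformation, as $\tau_n = A_0\,\lambda^n\,\mu^{n^2}\,\sigma(z+nv)$, for some data $(z,v)$ on an associated elliptic curve, with companion elliptic divisibility sequence $\psi_n = \sigma(nv)/\sigma(v)^{n^2}$; the coefficients $\alpha,\beta$ of the Somos-5 relation are then ratios of the $\psi_j$ via the Frobenius--Stickelberger three-term addition formula. A parallel representation holds for $\tau_n^*$ with independent data $(z^*,v^*)$ and EDS $\psi_n^*$, possibly on a different elliptic curve.

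First I would substitute $k_n = \tau_n\tau_n^*$ into the Somos-7 ansatz (\ref{s7}). A key observation is that the four bilinear pairings $\{n,n+7\}$, $\{n+1,n+6\}$, $\{n+2,n+5\}$, $\{n+3,n+4\}$ share the same index-sum and the same sum-of-squares, so the exponential gauge factors $\lambda^{\bullet}\mu^{\bullet^2}$ (together with their starred counterparts) contribute a common prefactor across all four terms and cancel. The Somos-7 identity therefore reduces to a pure bilinear sigma-function identity in the arguments $z+jv$ and $z^*+jv^*$.

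Next I would apply the three-term addition rule, in its EDS form
\begin{equation*}
\psi_{m+r}\,\psi_{m-r} \;=\; \psi_{m+1}\,\psi_{m-1}\,\psi_r^2 \;-\; \psi_{r+1}\,\psi_{r-1}\,\psi_m^2,
\end{equation*}
to reduce each product $\sigma(z+(n+i)v)\sigma(z+(n+7-i)v)$ against the pivot $\sigma(z+(n+3)v)\sigma(z+(n+4)v)$, with scalar coefficients built from $\psi_2,\psi_3,\psi_4,\psi_5$; the same reduction in the starred sector yields a parallel statement with coefficients in $\psi_j^*$. Combining the two reductions collapses the sum of six terms onto a low-dimensional span, and matching coefficients of $k_{n+5}k_{n+2}$ and $k_{n+4}k_{n+3}$ directly produces the claimed formulas (\ref{bcform}) for $B$ and $C$.

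The principal obstacle lies in the coefficient $A$ of the $k_{n+6}k_{n+1}$ term, which is \emph{not} uniquely produced: the reduction in the $\tau$-sector yields one expression for $A$, and the dual reduction in the $\tau^*$-sector yields another. The two must coincide if a Somos-7 relation is to hold, and that coincidence is precisely the \emph{hypothesis} (\ref{aform}) of the proposition, equivalently the cross-relation $\psi_4^2\,(\psi_2^*)^2\,\psi_5^* = (\psi_4^*)^2\,\psi_2^2\,\psi_5$ between the two EDS's. Once (\ref{aform}) is imposed, the two three-term routes through the pivot produce a single consistent $A$, and the Somos-7 identity (\ref{s7}) with $B,C$ as in (\ref{bcform}) follows by linearity. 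The bookkeeping of the simultaneous reductions in the two sectors, rather than any deep analytic input, is the calculational heart of the argument.
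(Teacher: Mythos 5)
Your overall architecture matches the paper's: substitute $k_n=\tau_n\tau_n^*$ into the Somos-7 ansatz, reduce everything to the four products $XX^*$, $YY^*$, $XY^*$, $YX^*$ with $X=\tau_{n+5}\tau_{n+2}$, $Y=\tau_{n+4}\tau_{n+3}$, and demand the coefficient of each vanish. The vanishing of the $XX^*$ and $YY^*$ coefficients gives (\ref{bcform}), and the two cross-term coefficients ($XY^*$ and $YX^*$, not ``the $\tau$-sector'' versus ``the $\tau^*$-sector'' as you phrase it) give the two expressions for $A$ whose consistency is (\ref{aform}). That is exactly the paper's computation.

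However, two steps you propose would fail as written. First, the claim that the four pairings share the same sum of squares is false: $n^2+(n+7)^2 = 2n^2+14n+49$ while $(n+1)^2+(n+6)^2=2n^2+14n+37$, $(n+2)^2+(n+5)^2=2n^2+14n+29$, $(n+3)^2+(n+4)^2=2n^2+14n+25$. So the quadratic-exponential gauge factors $\mu^{n^2}$ (or $\mu^{[n/2]^2}$ in the normalization of (\ref{tau})) do \emph{not} contribute a common prefactor to the four Somos-7 terms; they must be tracked, and ultimately they recombine with powers of $\sigma(\kappa)$ through $\mu=-\sigma(\kappa)^{-4}$ to leave coefficients purely in the $\psi_j$. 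Second, the three-term rule you cite, $\psi_{m+r}\psi_{m-r}=\psi_{m+1}\psi_{m-1}\psi_r^2-\psi_{r+1}\psi_{r-1}\psi_m^2$, is an identity among the companion EDS values $\psi_n\propto\sigma(n\kappa)$ only; it does not by itself reduce the shifted products $\sigma(z+j\kappa)\sigma(z+j'\kappa)$ appearing in $\tau_n$. What you need at that point is the full Weierstrass three-term $\sigma$-identity, or — far more conveniently — the algebraic consequences of it already recorded in the paper: the Somos-5 relation (\ref{s5}) with $\alpha=\psi_3$, $\beta=-\psi_4/\psi_2$, applied at index $n+1$ to eliminate $\tau_{n+6}\tau_{n+1}$, together with the Somos-7 relation (\ref{sps7}), $\tau_{n+7}\tau_n=\tfrac{\psi_3\psi_4}{\psi_2}\tau_{n+5}\tau_{n+2}-\psi_5\tau_{n+4}\tau_{n+3}$, to eliminate $\tau_{n+7}\tau_n$. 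If you replace the EDS formula by these two relations and drop the erroneous cancellation claim, your argument becomes exactly the paper's proof, which never needs to return to explicit $\sigma$-function manipulations at all.
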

\begin{proof} For later use, we record analytic formulae from \cite{honetams,honeswart} concerning the solution of (\ref{s5}), and its companion
elliptic divisibility sequence (EDS). From the proof of  Corollary 2.12 in \cite{honetams}, the general analytic solution of  (\ref{s5}) can be
written in the form
\begin{equation}\label{tau}
\tau_n=A_{\pm}
B_{\pm}^{[\frac{n}{2}]}
\mu^{[\frac{n}{2}]^2}\sigma (z_0+n\kappa), \qquad
\frac{B_+}{B_-}=-\mu ^{-1}=\sigma(\kappa)^4,
\end{equation}
with the $\pm$ signs selected according to the parity of $n$, 
where $\sigma (z)=\sigma (z;g_2,g_3)$ is the Weierstrass sigma
function corresponding to  the elliptic curve $\cal E$ in the $(x,y)$ plane given by
\begin{equation}\label{ellip}
{\cal E}: \qquad
y^2=4x^3-g_2x-g_3,
\end{equation}
and its companion
EDS  is given by
$
\psi_n = {\sigma (n\kappa)}/{\sigma (\kappa)^{n^2-1}}$.
The coefficients in (\ref{s5}) can be expressed in terms of the companion EDS, as
\begin{equation}\label{psiab}
\alpha = \psi_3, \qquad \beta = -\frac{\psi_4}{\psi_2},
\end{equation}
and the terms of the sequence $\tau_n$ satisfy a particular Somos-7 recurrence as well, namely
\begin{equation}\label{sps7}
\tau_{n+7}\tau_n = \frac{\psi_3\psi_4}{\psi_2} \, \tau_{n+5}\tau_{n+2}-\psi_5\, \tau_{n+4}\tau_{n+3},
\qquad \psi_5=\psi_2^3\psi_4-\psi_3^3
\end{equation}
(see also \cite{vdps}).
Upon substituting $\bar{k}_n = \tau_n\,  \tau^*_n$ into (\ref{s7}) and using (\ref{s5}), (\ref{sps7})  and the corresponding equations
(with asterisks) for $\tau_n^*$ to eliminate products with
shifts of width 5 and 7, one obtains a linear equation  in the products $XX^*$, $YY^*$, $XY^*$,$YX^*$, where $X=\tau_{n+5}\tau_{n+2}$,
$Y=\tau_{n+4}\tau_{n+3}$, and similarly for $X^*,Y^*$. Since this expression must  vanish for all $n$, it  is required that the coefficients of each of these
four products should be zero, leading to the two different equations for $A$
in (\ref{aform}), which have to be consistent, as well as the equations (\ref{bcform}) for $B,C$.
\end{proof}
\begin{corollary}\label{core}
If the sequences $\tau_n$, $\tau_n^*$ satisfy the Somos-5 recurrence (\ref{s5}) with the  same coefficients $\alpha,\beta$ and have the same value of
the invariant
\begin{equation}\label{tj}
\gamma=\frac{\tau_{n+1}\tau_{n+4}}{\tau_{n+2}\tau_{n+3}}+\frac{\tau_{n+1}\tau_{n+2}}{\tau_{n}\tau_{n+3}}+\alpha \left(\frac{\tau_{n+1}\tau_{n+4}}{\tau_{n+2}\tau_{n+3}}+\frac{\tau_{n}\tau_{n+3}}{\tau_{n+1}\tau_{n+2}}\right)
+\beta\frac{\tau_{n+2}^2}{\tau_{n}\tau_{n+4}},
\end{equation}
then  $k_n = \tau_n\,  \tau^*_n$  satisfies the Somos-7 recurrence (\ref{s7}) with coefficients
\begin{equation}\label{corco}
A=\psi_5, \qquad B=(D-A)\psi_3^2, \qquad C=A(A-D), \qquad \mathrm{where} \quad D=\left(\frac{\psi_4}{\psi_2}\right)^2.
\end{equation}
\end{corollary}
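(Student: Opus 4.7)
The strategy is to specialise the preceding Proposition. Under the hypotheses $\alpha=\alpha^*$ and $\beta=\beta^*$, the identities in (\ref{psiab}) give $\psi_3=\psi_3^*$ and $\psi_4\psi_2^*=\psi_4^*\psi_2$, so when these are substituted into the two expressions for $A$ appearing in the compatibility condition (\ref{aform}), they collapse to $\psi_5^*$ and $\psi_5$ respectively. Thus the required consistency of (\ref{aform}) is equivalent to the single identity $\psi_5=\psi_5^*$, in which case the common value is $A=\psi_5$.

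To obtain $\psi_5=\psi_5^*$ from the remaining hypothesis $\gamma=\gamma^*$, we use $\psi_5=\psi_2^3\psi_4-\psi_3^3$ from (\ref{sps7}); combined with $\psi_4=-\beta\psi_2$ and $\psi_3=\alpha$, this reduces to establishing $\psi_2=\psi_2^*$ (up to a harmless fourth root of unity absorbed into the gauge freedom of the Somos-5 solution recorded in (\ref{tau})). I would obtain this by substituting the sigma-function parametrisation $\tau_n\propto\sigma(z_0+n\kappa)$ into the invariant (\ref{tj}) and applying the three-term sigma addition identity, which eliminates the $z_0$-dependence and expresses $\gamma$ as an explicit rational function of the elliptic data on $\cal E$; equivalently, $\gamma$ becomes a rational function of $(\alpha,\beta,\psi_2)$. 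Since $\alpha$ and $\beta$ are already fixed, this identifies $\gamma$ as a rational function of $\psi_2$ alone, generically injective, so $\gamma=\gamma^*$ forces $\psi_2=\psi_2^*$, and consequently $\psi_4=\psi_4^*$ and $\psi_5=\psi_5^*$.

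Once $\psi_n=\psi_n^*$ is established for $n=2,3,4,5$, the Proposition applies with $A=\psi_5$, and the general formulas (\ref{bcform}) for $B,C$ simplify directly: writing $D=(\psi_4/\psi_2)^2$, we obtain
$$
B=\frac{\psi_3^2\psi_4^2}{\psi_2^2}-A\psi_3^2=(D-A)\psi_3^2,\qquad C=\psi_5^2-A\frac{\psi_4^2}{\psi_2^2}=A(A-D),
$$
which is exactly (\ref{corco}). The main obstacle I anticipate is the sigma-function manipulation needed to write $\gamma$ as a rational function of $(\alpha,\beta,\psi_2)$ and verifying its injectivity in $\psi_2$; this is a variant of the identities used in \cite{honetams,honeswart}, so the computation should go through, but it is the technical heart of the argument. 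An alternative that sidesteps the elliptic analysis is to note that a fixed triple $(\alpha,\beta,\gamma)$ cuts out a one-dimensional family of Somos-5 orbits on which the companion EDS values are constant, so one may simply pick the representative with $\tau_n^*\equiv\tau_n$ and verify the Somos-7 identity for $k_n=\tau_n^2$ using (\ref{sps7}) alone, bypassing the sigma-function computation altogether.
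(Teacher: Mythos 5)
Your proof is correct and takes essentially the same route as the paper, which argues simply that equality of $\alpha,\beta,\gamma$ forces equality of the companion EDS $\psi_n$ and hence trivializes the constraint~(\ref{aform}) so that (\ref{bcform}) collapses to (\ref{corco}). You supply the detail the paper leaves implicit (that $\gamma$ together with $\alpha,\beta$ determines $\psi_2$ via the relation $\psi_2^4=\beta+\alpha\gamma$ used later in the proof of Theorem~\ref{maint}, up to the harmless fourth-root ambiguity you noted), and the final substitution into (\ref{bcform}) matches the paper exactly.
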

\noindent
\textit{Proof of Corollary:}
The fact that $\alpha,\beta,\gamma$ are the same for both sequences implies that they have the same companion EDS $\psi_n$. This means that the constraint (\ref{aform}) is trivially satisfied, and the expressions for the
coefficients  $A,B,C$ simplify dramatically.
\hfill $\qed$
\begin{theorem}\label{sol}
 The solution of the bilinear system  (\ref{smsql})  can be written in the form (\ref{prod}), where
$\tau_n,  \tau^*_n,  \tilde{\tau}_n ,   \tilde{\tau}^*_n$ all satisfy a Somos-5 recurrence with the same  coefficients $\alpha,\beta$ and the same value of
the invariant $\gamma$ given by (\ref{tj}).
\end{theorem}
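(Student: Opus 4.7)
The plan is to exploit the integrability of the QRT map (\ref{mul}). For generic initial data the level set of the invariant $I$ in (\ref{iinv}) is a smooth curve ${\cal E}$ of genus one, and the map acts as translation by a fixed point $\kappa\in{\cal E}$. Since $u$ is a rational function of degree two on ${\cal E}$, the iterate $u_n$ admits the analytic form
\[
u_n \;=\; C\,\frac{\sigma(z_1+n\kappa)\,\sigma(z_2+n\kappa)}{\sigma(z_3+n\kappa)\,\sigma(z_4+n\kappa)},
\]
up to exponential gauge factors coming from the quasi-periodicity of the Weierstrass $\sigma$ function, where $z_1,z_2$ and $z_3,z_4$ are the Abel--Jacobi images of the two zeros and two poles of $u_0$ on ${\cal E}$.

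I would then match this analytic expression against the bilinear system (\ref{smsql}) with the normalisation $(k_0,k_1,l_0,l_1)=(u_0,u_1,1,1)$. Absorbing the quasi-periodic exponentials into overall constants yields the factorisations
\[
k_n = \tau_n\,\tau^*_n,\qquad l_n = \tilde{\tau}_n\,\tilde{\tau}^*_n,
\]
where each of the four factors is of the form $A_{\pm}B_{\pm}^{[n/2]}\mu^{[n/2]^2}\sigma(z+n\kappa)$ as in (\ref{tau}), sharing a common shift $\kappa$ and curve ${\cal E}$ but with four distinct base points $z\in\{z_1,z_2,z_3,z_4\}$. By the theory recalled in the proof of the preceding Proposition (cf.\ \cite{honetams,honeswart}), every such sequence satisfies the Somos-5 recurrence (\ref{s5}) with $\alpha=\psi_3$, $\beta=-\psi_4/\psi_2$ and invariant $\gamma$ as in (\ref{tj}), and these quantities depend only on $({\cal E},\kappa)$, not on the base point $z$. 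The common pair $({\cal E},\kappa)$ therefore forces common $\alpha,\beta,\gamma$ across all four factors, which is exactly the conclusion required, and via Corollary \ref{core} it also explains why $k_n$ and $l_n$ satisfy the \emph{same} Somos-7 recurrence, as asserted in Theorem \ref{maint}.

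The hard part will be the matching step: the quasi-periodicity of $\sigma$ introduces $n$- and $n^2$-dependent exponentials which must be consistently distributed across the four factors so that $k_n$ and $l_n$ satisfy (\ref{smsql}) exactly, rather than merely up to a non-trivial gauge. In addition, extracting explicit formulae for $\alpha,\beta,\gamma$ matching (\ref{abj}) requires locating the four base points $z_1,\ldots,z_4$ in terms of the coefficients $a_i$ and the initial values $(u_0,u_1)$, which amounts to linearising the QRT dynamics via the Abel--Jacobi map. A more algebraic alternative, which avoids elliptic function theory, would be to factor the quadratic forms $a_3k_n^2+a_5k_nl_n+a_6l_n^2$ and $a_1k_n^2+a_2k_nl_n+a_3l_n^2$ over the extension generated by the roots of their discriminants, use these factorisations to define $\tau_{n\pm1}\tau^*_{n\mp1}$-type relations, and verify the Somos-5 property by induction; this trades analytic for algebraic complexity but ultimately relies on the same geometric fact that the four base points live on a common elliptic curve.
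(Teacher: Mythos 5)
Your proposal follows essentially the same route as the paper: uniformise the invariant curve by Weierstrass sigma functions (the paper invokes Proposition 2.5 of \cite{honetams}), split the numerator and denominator into products of two Somos-5 tau-sequences sharing the same $({\cal E},\kappa)$, and conclude that the Somos-5 data $\alpha,\beta,\gamma$ are common because they depend only on $({\cal E},\kappa)$ and not on the base points. The ``hard part'' you flag --- distributing the $n$- and $n^2$-dependent exponential gauge factors so that $k_n,l_n$ satisfy (\ref{smsql}) on the nose --- is exactly what the paper resolves with the explicit prefactor constraints written out around equation (\ref{msqlsoln}).
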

\begin{proof} 
By Proposition 2.5 in \cite{honetams}, the solution of the multiplicative QRT map (\ref{mul}) can be written as
$u_n =f(z_0+n\kappa)$ where $f=f(z)$ is an elliptic function of its argument i.e. periodic with respect to the period lattice $\Lambda$ defined by an elliptic curve $\cal E$ as in (\ref{ellip}). The
function $f(z)$ provides a uniformization of the curve $\cal C$  in the $(u_0,u_1)$ plane defined by fixing the value of the invariant $I$ for the map, and this curve is isomorphic to $\cal E$. Also,
 projecting onto the first coordinate, $(u_0,u_1)\mapsto u_0$, defines a $2:1$
map ${\cal C} \rightarrow \mathbb{P}^1$, so the function $f$ is an elliptic function of order 2. Hence $f$ has two zeros and two poles in any period parallelogram for $\Lambda$, and so, by a standard result in the theory of elliptic functions (see e.g. \S 20$\cdot$53 in \cite{ww}), up to a 
shift in $z$,  
$$
f(z) =K \frac{\sigma(z-Z)\sigma(z+Z)}
{\sigma(z-P)\sigma(z+P)}$$
for some constants $Z,P,K$. Thus
\beq\label{urat}
u_n =\frac{k_n}{l_n} = K \frac{\sigma(z_n-Z)\sigma(z_n+Z)}{\sigma(z_n-P)\sigma(z_n+P)}, \qquad z_n=z_0+n\kappa,
\eeq
and it remains to check that by inserting suitable $n$-dependent prefactors as in (\ref{tau}) the numerator $k_n$ and denominator $l_n$ can each be written as  a
product of two Somos-5 sequences. To be more precise, the general solution of the bilinear system (\ref{smsql}) can be written as
\begin{equation}\label{msqlsoln}
k_n = \mathrm{a}_\pm \mathrm{b}_\pm^{[\frac{n}{2}]}\mu^{2[\frac{n}{2}]^2}\sigma(z_n-Z)\sigma(z_n+Z),
\qquad
l_n = \tilde{\mathrm{a}}_\pm \mathrm{b}_\pm^{[\frac{n}{2}]}\mu^{2[\frac{n}{2}]^2}\sigma(z_n-P)\sigma(z_n+P),
\end{equation}
where the $\pm$ signs are chosen with the parity of $n$, and the
prefactors are taken to satisfy the relations
$$ \mu=-\sigma (\kappa)^{-4}, \qquad
\frac{ \mathrm{a}_+}{\tilde{\mathrm{a}}_+}=\frac{ \mathrm{a}_-}{\tilde{\mathrm{a}}_-}=K, \qquad
\frac{\mathrm{b}_-}{\mathrm{b}_+}=\mu^2, \qquad
\mathrm{b}_+\mathrm{b}_-=\left(\frac{ \mathrm{a}_-}{{\mathrm{a}}_+}\right)^4.
$$
Subject to the above,
the solution (\ref{msqlsoln}) is completely determined by the 9 parameters
$ \mathrm{a}_+$, $ \mathrm{a}_-$, $\tilde{\mathrm{a}}_+$, $ z_0$, $\kappa$, $ Z$, $P$, $g_2$, $g_3$, which fix the 4 initial values
and 5 coefficients $a_1,a_2,a_3,a_5,a_6$ in (\ref{smsql}). Then from the formula for $k_n$ in (\ref{msqlsoln}) there is a factorisation $k_n=\tau_n\,\tau_n^*$, with $\tau_n$ given by (\ref{tau}) with $z_0$ replaced by $z_0-Z$, and
$$
\tau_n^*=A^*_{\pm}
B_{\pm}^{*\, [\frac{n}{2}]}
\mu^{[\frac{n}{2}]^2}\sigma (z_0+Z+n\kappa), \quad
\frac{B^*_+}{B^*_-}=-\mu ^{-1}=\sigma(\kappa)^4, \quad
\mathrm{where} \,\,
A_{\pm} A^*_{\pm} =\mathrm{a}_\pm, \, B_{\pm} B^*_{\pm} =\mathrm{b}_\pm,
$$
so that $\tau_n$ and $\tau_n^*$ satisfy the same Somos-5 recurrence with the same value of the invariant $\gamma$. By
making an analogous factorisation, the right-hand side of the formula for $l_n$ in (\ref{msqlsoln})  can be written as
a product $  \tilde{\tau}_n \,   \tilde{\tau}^*_n$ for another pair of such  Somos-5 sequences.
\end{proof}
It is worth explaining the nature of the factorisation (\ref{prod}) in more detail. First of all, by 
Proposition \ref{thm5},  each of  the terms $k_n,l_n$ is a
Laurent polynomial in the ring
$
{\cal R}[k_0^{\pm 1},k_1^{\pm 1},l_0^{\pm 1},l_1^{\pm 1}],
$  
and can be factored as a product of a polynomial with a Laurent monomial, but the factorisation 
(\ref{prod}) is not of this 
kind: it leads to $\tau_n,\tau_n^*$, etc. which are algebraic functions of the coefficients and initial data for  
 (\ref{smsql}). Secondly, the factorisation (\ref{prod}) is not unique: there is a 6-parameter family 
of gauge transformations that preserve it, given by 
\beq\label{gauge}
\tau_n\to \overline{A}_{\pm}\overline{B}^n\tau_n, \quad
\tau_n^*\to (\overline{A}_{\pm})^{-1}\overline{B}^{-n}\tau_n^*,\quad
\tilde{\tau}_n\to \hat{A}_{\pm}\hat{B}^n\tilde{\tau}_n, \quad 
\tilde{\tau}_n^*\to (\hat{A}_{\pm})^{-1}\hat{B}^{-n}\tilde{\tau}_n^*, 
\eeq 
where $\overline{A}_{\pm},\hat{A}_{\pm}$ vary with the parity of $n$,  for arbitrary non-zero 
$\overline{A}_{+}, \overline{A}_{-}, \hat{A}_{+}, \hat{A}_-,\overline{B}, \hat{B}$. 
The above theorem 
guarantees that Somos-5 sequences ${\tau}_n$,  ${\tau}_n^*$,  $\tilde{\tau}_n$,  $\tilde{\tau}_n^*$ 
exist such that (\ref{prod}) holds. However, given the initial data $k_0,l_0,k_1,l_1$ and coefficients $a_1,a_3,\ldots,a_6$ for  (\ref{smsql}), although the coefficients $\alpha,\beta$  are given by the polynomial expressions (\ref{abj}), 
it is necessary to solve algebraic equations in order to find the four sets of initial data for  Somos-5, i.e. 
$\tau_0,\tau_1,\ldots, \tau_4$ for the first sequence, and so on. 
Without loss of generality, by exploiting the gauge symmetry (\ref{gauge}),  the 6 values 
$\tau_0^*,\tau_1^*,\tau_2^*,\tilde{\tau}_0^*,\tilde{\tau}_1^*,\tilde{\tau}_2^*$ can all 
be fixed to 1, so that $\tau_0=k_0$, $\tau_1=k_1$, $\tau_2=k_2$ and 
 $\tilde{\tau}_0=l_0$, $\tilde{\tau}_1=l_1$, $\tilde{\tau}_2=l_2$. By fixing $\gamma$ as in (\ref{abj}), 
the formula (\ref{tj}) determines $\tau_4$ as an algebraic function of $\tau_0,\tau_1,\tau_2,\tau_3$ and the coefficients and initial data 
for (\ref{smsql}), by solving a quadratic equation, and similarly for $\tau_4^*,\tilde{\tau}_4, \tilde{\tau}_4^*$. Thus, 
in order to have four complete sets of initial data for Somos-5, 
it remains to determine the four quantities $\tau_3,\tau_3^*,\tilde{\tau}_3, \tilde{\tau}_3^*$. 
The product $\tau_3\tau_3^*=k_3$ is a known (Laurent polynomial) function of   $k_0,l_0,k_1,l_1$ and $a_1,a_3,\ldots,a_6$, and 
similarly for $\tilde{\tau}_3\tilde{\tau}_3^*=l_3$, but two more relations in are needed to obtain the remaining 
four quantities. Then, by using (\ref{s5}) to write 
$$ 
k_5 = \tau_5\tau_5^*
= \frac{(\alpha\tau_1\tau_4+\beta\tau_2\tau_3)( \alpha\tau_1^*\tau_4^*+\beta\tau_2^*\tau_3^*)}
{\tau_0\tau_0^*}, 
$$ 
the left-hand side is determined by the coefficients/initial data for  (\ref{smsql}), while the ratio 
on the right-hand side is a rational function of these together with the Somos-5 initial data, and 
similarly for $l_5= \tilde{\tau}_5\tilde{\tau}_5^*$, thus  providing the necessary two additional 
relations, which are best solved using resultants. 

\noindent
\textit{Proof of Theorem \ref{maint}:}
By Theorem \ref{sol}, both $k_n$ and $l_n$ are products of Somos-5 sequences, so by Corollary \ref{core}
they satisfy the same Somos-7 recurrence with coefficients given by (\ref{corco}). It remains to relate the quantities
$\psi_2,\psi_3,\psi_4$ that define the companion EDS to the parameters $a_j$, $j=1,2,3,5,6$ and the value of the integral $I$ for the QRT map (\ref{mul}). This is achieved by using the formulae for the EDS in \cite{honeswart}, which give the identity
$
\psi_2^4= \beta +\alpha \gamma
$ in addition to (\ref{psiab}), and noting that fixing the value of the invariant $\gamma$ defines a curve of genus one
 in the  plane with coordinates
$(x,y)=(\tau_n\tau_{n+3}/(\tau_{n+1}\tau_{n+2}),(\tau_{n+1}\tau_{n+4}/(\tau_{n+2}\tau_{n+3}) )$, that is
\beq \label{s5curve}
(x+y)xy+\alpha (x+y)-\gamma xy+\beta=0.
\eeq
The latter curve is isomorphic to the curve $\cal C$, and by the use of an algebra package such as {\tt algcurves} in Maple one can  relate $\alpha,\beta,\gamma$ to the coefficients appearing in $\cal C$; this leads to the relations (\ref{abj}).
\hfill $\qed$

\subsection{The additive case}
The quantity
\beq\label{jinv}
J=a_1u_0^2u_1^2+a_2u_0u_1(u_0+u_1)+a_3(u_0^2+u_1^2)+a_4u_0u_1+a_5(u_0+u_1),
\eeq
is an invariant of (\ref{add}) considered as a map on the plane of initial values
$(u_0,u_1)$. 
There is a well known 
link between the additive form (\ref{add}) of the QRT map and the multiplicative one (\ref{mul}).
\begin{lemma} \label{ij}
Suppose that an orbit of (\ref{mul}) starting from the initial point $(u_0,u_1)$ is such that the value of the  invariant is
$I=-a_4$. Then this coincides with the orbit of (\ref{add}) starting from the same  initial point with the invariant taking the value $J=-a_6$.
\end{lemma}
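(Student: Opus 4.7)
The plan is to exploit a simple algebraic identity between the two invariants, and then reduce the whole claim to checking that the two maps agree on the first new iterate they produce, with the rest following by induction.

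First I would compare the explicit formulae (\ref{iinv}) and (\ref{jinv}) for the two invariants. A direct inspection shows that
$$
u_0 u_1\,(I(u_0,u_1)+a_4) \,=\, J(u_0,u_1)+a_6,
$$
so the level set $\{I=-a_4\}$ of (\ref{mul}) is precisely the level set $\{J=-a_6\}$ of (\ref{add}) (for $u_0u_1\ne 0$). In particular, the initial point $(u_0,u_1)$ lies on one of these sets if and only if it lies on the other.

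Next I would compare the two prescriptions for $u_2$. Denote by $u_2^{\mathrm{mul}}$ the value obtained from (\ref{mul}) and by $u_2^{\mathrm{add}}$ the value obtained from (\ref{add}). A short computation, clearing denominators, gives
$$
u_0\bigl(u_2^{\mathrm{add}}-u_2^{\mathrm{mul}}\bigr) \,=\, -\,\frac{J(u_0,u_1)+a_6}{a_1 u_1^2+a_2 u_1+a_3},
$$
which vanishes identically on the common level set. Hence both maps produce the same $u_2$ from the given initial data.

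The conclusion then proceeds by induction on $n$. Since each map preserves its own invariant, after one iteration the point $(u_1,u_2)$ still satisfies $I=-a_4$ (applying the multiplicative map) and equivalently $J=-a_6$ (applying the additive map). The pair $(u_1,u_2)$ therefore again lies on the shared level set, so the previous step produces a common value $u_3$, and so on. The main potential obstacle is simply to ensure that the chosen orbit never hits the singular locus where $a_1 u_n^2+a_2 u_n+a_3=0$ or $u_n=0$, at which one or other map is undefined; for generic initial data on the level set this is avoided, and excluding these exceptional points is implicit in the statement.
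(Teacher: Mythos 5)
Your proposal is correct and follows essentially the same route as the paper: the paper's terse proof amounts to eliminating $a_6$ via (\ref{iinv}) and $a_4$ via (\ref{jinv}) and then comparing the two update rules, which is exactly what your identity $u_0u_1(I+a_4)=J+a_6$ and the computation of $u_0(u_2^{\mathrm{add}}-u_2^{\mathrm{mul}})$ encapsulate. Your write-up is actually more explicit than the paper's, in that it states the linking identity cleanly, verifies the one-step agreement, and makes the inductive propagation along the orbit (using invariance of $I$ under (\ref{mul}) and of $J$ under (\ref{add})) explicit rather than implicit.
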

\begin{proof}
This follows from (\ref{iinv}), expressing 
$a_6$ in terms of 
$I$, 
substituting into the formula for the corresponding map, then comparing with 
$a_4$ in terms of $J$ obtained from
(\ref{jinv}).
\end{proof}
\begin{theorem}
The variables $k_n,l_n$ in (\ref{M12}) each satisfy the same Somos-7 recurrence, of the form
\begin{equation} \label{s7a}
x_{n+7}x_n=Ax_{n+6}x_{n+1}+Bx_{n+5}x_{n+2}+Cx_{n+4}x_{n+3}
\end{equation}
with  coefficients given by (\ref{cofo}), but with
$$
\alpha = a_1 J   + a_2a_5 +a_3^2-a_3 a_4, \quad \beta = a_1a_5^2-2a_2a_3a_5+a_3^2a_4  +(a_1a_4-a_2^2)J,
\quad \gamma=4a_3-a_4.
$$
\end{theorem}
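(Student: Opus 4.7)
The plan is to mirror the proof of Theorem \ref{maint} in the multiplicative case, exploiting the correspondence of orbits provided by Lemma \ref{ij}. First I would observe that an orbit of the additive QRT map (\ref{add}) with invariant value $J$ coincides, after the substitution $a_6 \mapsto -J$ and $I \mapsto -a_4$, with an orbit of the multiplicative map (\ref{mul}); consequently its iterates $u_n$ are uniformized by the same kind of order-2 elliptic function $u_n = f(z_0 + n\kappa)$ as in the proof of Theorem \ref{sol}, associated with an elliptic curve $\cal E$ of the form (\ref{ellip}).

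Secondly, the ratio $u_n = k_n/l_n$ for the additive Laurentification (\ref{M12}) should then admit (after a suitable gauge choice) a factorization $k_n = \tau_n \tau_n^*$, $l_n = \tilde\tau_n \tilde\tau_n^*$, where each of the four factors is of the form (\ref{tau}) built from shifts of $\sigma(z_n \pm Z)$ and $\sigma(z_n \pm P)$, and hence is a Somos-5 sequence associated with the same curve $\cal E$ and shift $\kappa$. All four sequences therefore satisfy the same recurrence (\ref{s5}) with identical coefficients $\alpha, \beta$ given by (\ref{psiab}), and take the same value of the invariant $\gamma$ defined in (\ref{tj}). Corollary \ref{core} applied to each of $k_n$ and $l_n$ then delivers a Somos-7 recurrence of the form (\ref{s7a}) with coefficients given by (\ref{cofo}).

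Thirdly, to identify $\alpha, \beta, \gamma$ explicitly in terms of $a_1, \ldots, a_5$ and $J$, I would note that since the elliptic curve $\cal E$ and the associated Somos-5 data depend only on the orbit (not on the particular Laurentification used), one may obtain the required expressions by specializing the multiplicative formulas (\ref{abj}) via Lemma \ref{ij}: substituting $I = -a_4$ and $a_6 = -J$ into (\ref{abj}) reproduces the claimed $\alpha, \beta, \gamma$ by a direct algebraic check. Alternatively, one can repeat the {\tt algcurves} calculation relating $\alpha, \beta, \gamma$ to the defining equation of the level curve of $J$ in the $(u_0, u_1)$-plane, but specialization is more economical. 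The main obstacle will be justifying the factorization step rigorously: because the first equation of (\ref{M12}) is not of standard Hirota bilinear form, one cannot directly quote the normalization (\ref{msqlsoln}) from the multiplicative proof, and must instead retrace the argument, verifying that the parity-dependent prefactors in (\ref{tau}) can be consistently fixed so that the resulting products satisfy (\ref{M12}) and agree with the Laurent polynomials of Proposition \ref{addlau} up to the analogue of the gauge symmetry (\ref{gauge}).
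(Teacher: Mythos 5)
Your proof is correct, but you take a much longer route than the paper and in doing so create for yourself an obstacle that the paper's argument avoids entirely. The paper's proof is a one-line reduction: by Lemma \ref{ij}, on the level set $I=-a_4$ (equivalently $J=-a_6$) the orbit of (\ref{add}) coincides with an orbit of (\ref{mul}), and the key observation --- which you did not make --- is that the Laurent sequences $k_n, l_n$ of the two systems (\ref{smsql}) and (\ref{M12}) are not merely related but \emph{literally identical} when started from the same initial data $(k_0,k_1,l_0,l_1)$ with $a_6$ chosen so that $J=-a_6$. Indeed, the second equations of (\ref{smsql}) and (\ref{M12}) coincide, and expanding the two expressions for $k_{n+1}$ and using $l_{n+1}l_{n-1}=a_1k_n^2+a_2k_nl_n+a_3l_n^2$ together with the invariant relation (\ref{iinv}) written in the form
\[
a_1 k_{n-1}^2 k_n^2 + a_2 k_{n-1}k_n(k_{n-1}l_n+k_nl_{n-1}) + a_3(k_{n-1}^2l_n^2+k_n^2l_{n-1}^2) + a_4 k_{n-1}k_n l_{n-1}l_n + a_5(k_{n-1}l_n+k_nl_{n-1})l_{n-1}l_n + a_6 l_{n-1}^2l_n^2 = 0
\]
shows that the first equation of (\ref{M12}) is a direct consequence of the first equation of (\ref{smsql}), and vice versa. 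Consequently the Somos-7 recurrence for (\ref{M12}) follows immediately from Theorem \ref{maint} by the substitutions $I\mapsto -a_4$, $a_6\mapsto -J$ in (\ref{abj}); this is exactly the ``specialization'' you propose in your third paragraph as a shortcut for the coefficients, but it in fact settles the whole theorem. The factorization of $k_n,l_n$ into products of Somos-5 sequences, and the worry you raise at the end --- that (\ref{M12}) is not in standard Hirota bilinear form so the normalization (\ref{msqlsoln}) cannot be directly imported --- are both rendered moot, since no fresh factorization is needed once the sequences are seen to agree with those of (\ref{smsql}).
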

\begin{proof}
It is enough to observe that, by Lemma \ref{ij},  any orbit of the multiplicative map  corresponds to an  orbit of the additive version, and by identifying ${k}_n, {l}_n$ in  (\ref{smsql}) with $k_n,l_n$ in  (\ref{M12}) the result follows.
\end{proof}

As already noted, the DTKQ-2 equation (\ref{um}) is a special case of the additive QRT map, with
$$
J= (u_nu_{n-1} - \al )(u_n+u_{n-1})
$$
being a first integral  in this case.
Moreover,  (\ref{fol}) arises from  (\ref{um})  by setting
$$
u_n = \frac{e_ne_{n+3}}{e_{n+1}e_{n+2}}.
$$
To see that how this is a degenerate case of the preceding result on Somos-7 recurrences, observe that, by  Lemma \ref{ij},
the iterates of  (\ref{um}) also satisfy the multiplicative QRT map
$$
u_{n+1}u_{n-1} = -\frac{(\al u_n +J)}{u_n},
$$
which leads to the following result.
\begin{proposition}\label{dtkq2}  Any sequence $(e_n)$ generated by (\ref{fol})  also satisfies the Somos-5 relation
\beq\label{ds5}
e_{n+5}e_n+\al\, e_{n+4}e_{n+1}+J\, e_{n+3}e_{n+2}=0 .
\eeq
\end{proposition}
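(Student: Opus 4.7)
The plan is to exploit the connection between the two forms of DTKQ-2 provided by Lemma \ref{ij}, and then perform a direct telescoping calculation using the substitution $u_n = e_ne_{n+3}/(e_{n+1}e_{n+2})$ that relates (\ref{fol}) to (\ref{um}).

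First, I would note that if $(e_n)$ satisfies (\ref{fol}), then the corresponding sequence $u_n=e_ne_{n+3}/(e_{n+1}e_{n+2})$ is an orbit of the additive DTKQ-2 map (\ref{um}). By Lemma \ref{ij}, the same orbit therefore satisfies the multiplicative form displayed just before the proposition, namely
\beq\label{plan-mult}
u_{n+1}u_{n-1}=-\frac{\alpha u_n+J}{u_n},
\eeq
where $J=(u_nu_{n-1}-\alpha)(u_n+u_{n-1})$ is the conserved value of the invariant along the orbit.

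Next I would carry out the substitution in (\ref{plan-mult}). The key observation is a telescoping:
\[
u_{n+1}u_{n-1}=\frac{e_{n+1}e_{n+4}}{e_{n+2}e_{n+3}}\cdot\frac{e_{n-1}e_{n+2}}{e_ne_{n+1}}=\frac{e_{n-1}e_{n+4}}{e_ne_{n+3}},
\]
so multiplying (\ref{plan-mult}) through by $u_n=e_ne_{n+3}/(e_{n+1}e_{n+2})$ gives $u_{n+1}u_nu_{n-1}=e_{n-1}e_{n+4}/(e_{n+1}e_{n+2})$ on the left, while the right-hand side becomes $-\alpha e_ne_{n+3}/(e_{n+1}e_{n+2})-J$. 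Clearing the common denominator $e_{n+1}e_{n+2}$ and shifting $n\mapsto n+1$ yields precisely (\ref{ds5}).

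There is no real obstacle beyond being careful that $J$ is genuinely a constant of the motion so it may be treated as a scalar coefficient in the resulting recurrence; this is exactly the content of Lemma \ref{ij} together with the invariance of $J$ along orbits, so nothing further is needed. The only mild subtlety worth flagging is that the derivation is reversible: starting from a generic $(e_n)$ satisfying (\ref{fol}), the value of $J$ is determined by the initial data through the formula $J=(u_0u_1-\alpha)(u_0+u_1)$ with $u_n$ defined via the substitution, so (\ref{ds5}) holds with this specific $J$.
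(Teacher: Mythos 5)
Your proof is correct and follows essentially the same route the paper intends: you invoke the multiplicative form $u_{n+1}u_{n-1}=-(\alpha u_n+J)/u_n$ of the DTKQ-2 orbit (via Lemma~\ref{ij}) and then substitute $u_n=e_ne_{n+3}/(e_{n+1}e_{n+2})$, noting the telescoping cancellation; the paper's text preceding the proposition sketches exactly this argument. The only detail you gloss over is the reverse implication — that $(e_n)$ satisfying (\ref{fol}) forces $u_n=e_ne_{n+3}/(e_{n+1}e_{n+2})$ to satisfy (\ref{um}) — but dividing (\ref{fol}) by $e_{n-2}^2e_{n-3}^2$ indeed yields $u_{n-4}(u_{n-5}+u_{n-4}+u_{n-3})=\alpha$, so this is a routine verification.
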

\begin{corollary}  For all $m,n\in\Z$ the following relation holds:
\begin{equation} \label{s74}
\left|\begin{matrix}
e_{n}e_{n+5} & e_{m}e_{m+5} \\
e_{n+2}e_{n+3} & e_{m+2}e_{m+3} \\
\end{matrix}
\right|+
\alpha \left|\begin{matrix}
e_{n+1}e_{n+4} & e_{m+1}e_{m+4} \\
e_{n+2}e_{n+3} & e_{m+2}e_{m+3} \\
\end{matrix}
\right|
=
0.
\end{equation}
\end{corollary}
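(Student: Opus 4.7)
The plan is to read the identity as a statement that a certain ratio of $e$-products is constant in its index. By Proposition \ref{dtkq2}, any sequence $(e_n)$ generated by (\ref{fol}) satisfies the Somos-5 relation (\ref{ds5}), which can be rewritten as
\[
\frac{e_{n}e_{n+5}+\alpha\, e_{n+1}e_{n+4}}{e_{n+2}e_{n+3}} = -J
\]
for every integer $n$ for which the denominator is non-zero. Since the right-hand side $-J$ is independent of $n$, the left-hand side takes the same value when evaluated at $n$ and at $m$.

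Equating these two expressions and clearing denominators yields
\[
(e_{n}e_{n+5}+\alpha\, e_{n+1}e_{n+4})\,e_{m+2}e_{m+3}
=(e_{m}e_{m+5}+\alpha\, e_{m+1}e_{m+4})\,e_{n+2}e_{n+3}.
\]
Rearranging, this is precisely
\[
\bigl(e_{n}e_{n+5}\,e_{m+2}e_{m+3}-e_{m}e_{m+5}\,e_{n+2}e_{n+3}\bigr)
+\alpha\bigl(e_{n+1}e_{n+4}\,e_{m+2}e_{m+3}-e_{m+1}e_{m+4}\,e_{n+2}e_{n+3}\bigr)=0,
\]
which is the expansion of (\ref{s74}). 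The only subtlety is the harmless assumption that $e_{n+2}e_{n+3}\neq 0$ and similarly for $m$; for generic initial data this holds, and the resulting polynomial identity in the initial values $e_0,\ldots,e_4$ then extends to all sequences by continuity (equivalently, both sides are polynomials in the initial values, and agreement on a Zariski-dense set forces the identity). No further computation is required, and there is no genuine obstacle.
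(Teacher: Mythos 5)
Your proof is correct and is essentially the paper's own argument: both reduce the two-index identity to the single Somos-5 relation (\ref{ds5}), your "constant ratio equal to $-J$" being the same computation as the paper's elimination of $e_{m+5}$ from the top-right entry followed by factoring out $e_{m+2}e_{m+3}$. The remark about non-vanishing denominators is a harmless polishing of the same idea.
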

\noindent{\it Proof of Corollary:}
This follows  immediately from (\ref{ds5}). By eliminating $e_{m+5}$ from the top right entry and expanding the two determinants, the left-hand side of (\ref{s74})  becomes
$$ \begin{array}{ll}
&
\left|\begin{array}{cc} e_ne_{n+5} &  - \al\, e_{m+4}e_{m+1}-J\, e_{m+3}e_{m+2}\\
e_{n+2}e_{n+3} &  e_{m+2} e_{m+3} \end{array}\right| + \al \,
\left|\begin{array}{cc} e_{n+1}e_{n+4} &  e_{m+1} e_{m+4} \\
e_{n+2}e_{n+3} &  e_{m+2} e_{m+3} \end{array}\right|
\\ =& e_{m+2}e_{m+3} \, (e_{n+5}e_n+\al\, e_{n+4}e_{n+1}+J\, e_{n+3}e_{n+2})=0
\end{array} $$
Thus apart from an overall $m$-dependent prefactor, which can be removed, for each $n$ the formula (\ref{s74}) just corresponds to the same Somos-5 relation (\ref{ds5}).
\hspace{2.2in}$\Box$

\begin{remark} The equation (\ref{s74}) is of degree four and depends on two indices $m,n$. This is reminiscent of Ward's defining relation
for an elliptic divisibility sequence \cite{ward}.
\end{remark}

\subsection{The asymmetric QRT map}
It turns out that Theorem \ref{maint} admits a straightforward generalization to both the general 12-parameter symmetric QRT map and the full 18-parameter asymmetric QRT map, with the result for the former being a special case of that for the latter. 
For convenience, we briefly outline the geometrical description of the general QRT map due to Tsuda \cite{tsuda}; for more details see Duistermaat's book \cite{duistermaat}. Starting from a pencil of biquadratic curves
\begin{equation}\label{pencil}
{\cal P}(u,v;\la ) \equiv \sum_{i,j=0,1,2} (a_{ij}+\la \, b_{ij})u^{2-i}v^{2-j}=0
\end{equation}
in the $(u,v)$ plane, labelled by $\la\in \mathbb{P}^1$, there are two natural birational involutions on each curve (for fixed $\la$),
called the horizontal/vertical switch, denoted $\iota_H$ and $\iota_V$ respectively, 
given by
\begin{equation}\label{switch}
\iota_H: \quad (u,v)\mapsto (u^\dagger,v), \qquad \iota_V: \quad (u,v)\mapsto (u,v^\dagger),
\end{equation}
where  $u^\dagger$ denotes the conjugate root of (\ref{pencil}) considered as a quadratic in $u$ (for fixed $v$), and similarly for $v^\dagger$. Note that the birationality of $\iota_H$ (and similarly that of $\iota_V$) can be seen
directly  by writing (\ref{pencil}) as
$$
{\cal P}\equiv  \rA(v,\lambda) u^2 + \rB(v,\lambda) u + \rC(v,\lambda) =0,
$$
and using either the formula for the sum or the product of the roots, i.e.
\begin{equation}\label{sumprod}
u^\dagger + u = -\frac{\rB(v,\lambda)}{\rA(v,\lambda)}, \qquad u^\dagger \,  u = \frac{\rC(v,\lambda)}{\rA(v,\lambda)}.
\end{equation}
Then the QRT map $\varphi$ is defined on each curve in the pencil to be the product of these two involutions,
\begin{equation} \label{qrt}
\varphi = \iota_V\cdot\iota_H,
\end{equation}
and this lifts to a birational map on the $(u,v)$ plane: to apply $\iota_H$ one can eliminate $\la$ from the pair of equations (\ref{sumprod}) to yield
a formula for $u^\dagger$ as a rational function of $u$ and $v$ alone, and similarly for the application of $\iota_V$. Moreover, solving (\ref{pencil}) for $\la$ gives
\begin{equation}\label{la}
\la = -\frac{\sum_{i,j} a_{ij}u^jv^k}{\sum_{i,j} b_{ij}u^jv^k} ,
\end{equation}
and a generic point $(u,v)$, as well as its orbit under the QRT map $\varphi$, belongs to a unique curve ${\cal C}={\cal C}(\la)$ with the corresponding value of $\la$ being
determined by the above ratio (the exception being the base points, for which both the numerator and denominator vanish).

In the special case where the coefficient matrices ${\cal A}=(a_{ij})$ and ${\cal B}=(b_{ij})$ are symmetric, each curve in the pencil admits the additional involution
$
\sigma: \, (u,v)\mapsto (v,u)$,
and the symmetric QRT map $\varphi_{sym}$ is defined as
$
\varphi_{sym}=\sigma\cdot \iota_H$;
 the vertical switch is conjugate to the horizontal, so $\iota_V=\sigma\cdot\iota_H\cdot\sigma$, and
$\varphi=(\varphi_{sym})^2$. With symmetric matrices as in 
(\ref{AB}), $\varphi_{sym}$
coincides with the map (\ref{QRT}) defined previously, and comparing  (\ref{INV}) with (\ref{la}) we see that $J=-\la$.

Considering the general asymmetric QRT map, we can start from an initial point $(u_0,v_0)$ and apply $\varphi$
repeatedly to obtain the sequence
\begin{equation}\label{qrtseq}
(u_n,v_n)=\varphi^n (u_0,v_0).
\end{equation}
We have the following result.

\begin{theorem}\label{genqrt}
Any orbit (\ref{qrtseq}) produced by the general QRT map can be written
as $u_n = k_n/l_n$, $v_n=p_n/q_n$,
where $k_n,l_n,p_n,q_n$ all satisfy the same Somos-7 recurrence of the form (\ref{s7}),
with the coefficients $A,B,C$ given by (\ref{cofo}) in terms of quantities $\alpha,\beta,\gamma$ which are determined precisely by the 18 parameters $a_{ij},b_{ij}$ and the value of the invariant $\la$ as in  (\ref{la}).
\end{theorem}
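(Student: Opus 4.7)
The proof would follow the template set by Theorem \ref{maint}, suitably adapted to the asymmetric setting where $u_n$ and $v_n$ are genuinely independent coordinates rather than two successive values of a single sequence. The plan is to uniformise the curve from the pencil (\ref{pencil}) that carries the orbit, and then to realise each of $k_n,l_n,p_n,q_n$ as a product of two companion Somos-5 sequences sharing the same coefficients, so that Corollary \ref{core} applies uniformly.

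First I would fix a generic orbit and use (\ref{la}) to determine the value of $\la$ for which $(u_n,v_n)\in \mathcal{C}(\la)$ for all $n$. Since $\mathcal{C}(\la)$ is generically a smooth curve of genus one, and $\varphi=\iota_V\cdot\iota_H$ is the composition of two involutions on $\mathcal{C}$, it acts as a translation by some $\kappa$ on the elliptic curve $\mathcal{E}$ birationally equivalent to $\mathcal{C}$. Choosing a Weierstrass model of $\mathcal{E}$ as in (\ref{ellip}) yields a uniformisation $(u,v)=(f(z),g(z))$ where $f,g$ are elliptic functions of order~$2$: the biquadratic equation (\ref{pencil}) projected onto either coordinate defines a $2{:}1$ map to $\mathbb{P}^1$. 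Then $(u_n,v_n)=(f(z_n),g(z_n))$ with $z_n=z_0+n\kappa$.

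Next, by the standard representation of an order-$2$ elliptic function as a ratio of products of sigma functions (cf.\ \S20$\cdot$53 in \cite{ww}), I would write, up to a shift in $z$,
\[
f(z)=K_1\frac{\sigma(z-Z_1)\sigma(z-Z_2)}{\sigma(z-P_1)\sigma(z-P_2)},\qquad g(z)=K_2\frac{\sigma(z-W_1)\sigma(z-W_2)}{\sigma(z-Q_1)\sigma(z-Q_2)},
\]
with the usual relation $Z_1+Z_2\equiv P_1+P_2$ and $W_1+W_2\equiv Q_1+Q_2$ modulo the period lattice. Inserting parity-dependent prefactors of the form $A_\pm B_\pm^{[n/2]}\mu^{[n/2]^2}$ into each sigma-factor evaluated along $z_n$, exactly as in (\ref{tau}), I obtain four Somos-5 sequences whose products recover $k_n,l_n,p_n,q_n$ up to harmless overall scalars. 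Crucially, all eight Somos-5 factors are built from the same $\sigma$, the same translation $\kappa$, and the same lattice, so they share the same companion EDS $\psi_n=\sigma(n\kappa)/\sigma(\kappa)^{n^2-1}$; consequently they all obey (\ref{s5}) with identical coefficients $\alpha=\psi_3$, $\beta=-\psi_4/\psi_2$ (by (\ref{psiab})), and take the same value of the invariant $\gamma$, since the identity $\psi_2^4=\beta+\alpha\gamma$ cited in the proof of Theorem \ref{maint} determines $\gamma$ from the EDS alone.

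With these four common values $(\alpha,\beta,\gamma)$, I would invoke Corollary \ref{core} four times: each of $k_n,l_n,p_n,q_n$ is a product of two Somos-5 sequences with matching $(\alpha,\beta,\gamma)$, hence each satisfies the same Somos-7 recurrence (\ref{s7}) with coefficients given by (\ref{corco}). Substituting $\psi_3=\alpha$ and $\psi_4/\psi_2=-\beta$ and eliminating $\psi_5=\psi_2^3\psi_4-\psi_3^3$ using $\psi_2^4=\beta+\alpha\gamma$ reduces (\ref{corco}) to (\ref{cofo}). The final step is to express $\alpha,\beta,\gamma$ as polynomials in the 18 parameters $a_{ij},b_{ij}$ and $\la$, by comparing the auxiliary Somos-5 biquadratic curve (\ref{s5curve}) in the $(x,y)=(\tau_n\tau_{n+3}/(\tau_{n+1}\tau_{n+2}),\tau_{n+1}\tau_{n+4}/(\tau_{n+2}\tau_{n+3}))$ plane with the explicit model of $\mathcal{C}(\la)$ cut out by (\ref{pencil}), then matching curve invariants using a package such as {\tt algcurves}, exactly in the manner used at the end of the proof of Theorem \ref{maint}.

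The only nontrivial obstacle is this last identification: the symmetric multiplicative case already required genuine computer-algebra work to produce the compact formulae (\ref{abj}), and the asymmetric case nearly doubles the number of coefficients. The structural content of the theorem — the existence of a common Somos-7 for all four of $k_n,l_n,p_n,q_n$ with coefficients of the form (\ref{cofo}) — is however immediate once all four factorisations are recognised as products of sigma functions evaluated along the same arithmetic progression $z_0+n\kappa$ on the same elliptic curve.
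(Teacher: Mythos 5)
Your proposal is correct and takes essentially the same route as the paper: uniformise the genus-one curve $\cal C(\la)$ via elliptic functions $f,g$ of order two, write their values along the orbit $z_n=z_0+n\kappa$ as quotients of $\sigma$-function products, insert parity-dependent prefactors so that each of $k_n,l_n,p_n,q_n$ factors as a product of two Somos-5 sequences sharing the same $\kappa$ and lattice (hence the same companion EDS and the same $\alpha,\beta,\gamma$), and then invoke Corollary \ref{core}. The paper also defers the explicit identification of $\alpha,\beta,\gamma$ in terms of $a_{ij},b_{ij},\la$ to the isomorphism of (\ref{s5curve}) with $\cal C$, exactly as you do.
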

\begin{proof} This is similar 
to the proof of Theorem \ref{sol}, so rather than repeating this it is sufficient to explain the main things that are different.
 For generic values of
$a_{ij},b_{ij}$  and $\la$ the curve $\cal C$ defined by (\ref{pencil}) is smooth for $(u,v)\in\Pro^1 \times \Pro^1$ and
has genus one, so it is isomorphic to $\C /\Lambda$ where $\Lambda$ is the period lattice of an elliptic curve
$\cal E$ given in the Weierstrass form  (\ref{ellip}).  Each of the maps $(u,v)\mapsto u$ and $(u,v)\mapsto v$ defines a double cover of $\Pro^1$, with associated elliptic functions of order two, denoted  $f,g$  respectively, such that
$(u,v)=(f(z),g(z))$ provides the uniformization of $\cal C$. Now, up to an overall shift in $z$,  we may write
$$
f(z) =K \, \frac{\sigma(z-Z)\sigma(z+Z)}
{\sigma(z-P)\sigma(z+P)}, \qquad g(z) =K' \, \frac{\sigma(z+\delta-Z')\sigma(z+\delta+Z')}
{\sigma(z+\delta-P')\sigma(z+\delta+P')}$$
for some constants $\delta,Z,P,K,Z',P',K'$ which specify the poles and zeros of $f,g$. Then by a standard argument (see
e.g. Theorem 2.4 in \cite{tsuda} or the proof of Proposition 2.5 in \cite{honetams}), the composition of two involutions $\varphi = \iota_V\cdot \iota_H$ corresponds to a translation $z\mapsto z+\kappa$ on the torus $\C /\Lambda$, so 
the  iterates (\ref{qrtseq})  of the QRT map have an analytic expression given by  (\ref{urat}) and
\beq\label{vnform}
v_n =\frac{p_n}{q_n} = K'\, \frac{\sigma(z_n+\delta-Z')\sigma(z_n+\delta+Z')}{\sigma(z_n+\delta-P')\sigma(z_n+\delta+P')}, \qquad z_n=z_0+n\kappa.
\eeq
Thus once again we can write down a factorisation $k_n=\tau_n\tau_n^*$ into a product of Somos-5 sequences, and similarly for $l_n,p_n,q_n$. Therefore all of $k_n,l_n,p_n,q_n$  satisfy the same Somos-7 recurrence, whose coefficients can be determined in terms of the  parameters  $\alpha,\beta,\gamma$ for a cubic curve (\ref{s5curve}) that is isomorphic to $\cal C$.
(See e.g. Remark 2.2 in \cite{tsuda} for details.)
\end{proof}

\begin{remark}\label{DP6} 
The formulae for $k_n,l_n,p_n,q_n$ as a product of two tau functions/sigma functions, 
as in (\ref{prod}), (\ref{msqlsoln}), and in the proof above, have a counterpart in the non-autonomous setting, 
in the form of the bilinearization of the $q$-$P_{III}$ and $q$-$P_{VI}$ equations by Jimbo et al. \cite{dp6}.
\end{remark}

The analogue of the above
result for the map $\varphi_{sym}=\sigma\cdot\iota_H$  is simpler  because for a  symmetric biquadratic curve
the shift $\kappa$ for the original QRT map
$\varphi$  can be chosen so that
$g(z)=f(z+\kappa/2)$, and 
then we can identify $v_n=u_{n+1/2}$ in (\ref{qrtseq}). Lemma \ref{ij} generalizes to the full 12-parameter symmetric case, such that on every orbit of  (\ref{QRT}) one may write $u_n$ as a ratio of quantities which satisfy both bilinear systems
(\ref{smsql}) and (\ref{M12}), with parameters that are linear functions of $\la$.  By combining the above results on the multiplicative and additive
cases, we arrive at the following.
\begin{corollary} \label{newpq}
Consider the general symmetric QRT map (\ref{QRT}) with
$u_n=p_n/q_n$, where  the variables $p_n,q_n$ satisfy a rational, non-Laurent system which is a variant of   (\ref{QRTH}), namely
\begin{equation} \label{quintic}
p_{n+1}=\frac{q_{n-1}F^{(1)}_n-p_{n-1}F^{(2)}_n}{D_n},\quad
q_{n+1}=\frac{q_{n-1}F^{(2)}_n-p_{n-1}F^{(3)}_n}{D_n},
\end{equation}
with $F^{(i)}_n=q_n^4f^{(i)}(\frac{p_n}{q_n})$ and
$D_n={\bf V}(p_{n-1},q_{n-1})^T{\cal B}{\bf V}(p_n,q_n)$ as above.
Then $p_n$ and $q_n$ are both solutions of the same Somos-7 relation, namely
\beq
\label{s763}
p_{n+7}p_n = Ap_{n+6}p_{n+1} + Bp_{n+5}p_{n+2}+Cp_{n+4}p_{n+3},
\eeq
where the formulae (\ref{cofo}) and (\ref{abj}) for the coefficients $A,B,C$ are still valid if we set
$$
a_1=a_{00}-J b_{00}, \, a_2=a_{01}-J b_{01},\,  a_3=a_{02}- J b_{02}, \,
a_5=a_{12}-J b_{12}, \,
a_6=a_{22}-J b_{22}, \,  I=-a_{11}+J b_{11}
$$
in terms of the 12 parameters $a_{ij},b_{ij}$, $0\leq i\leq j\leq 2$ and first integral $J=-\la$, as in (\ref{INV}), for
$\varphi_{sym}$.
\end{corollary}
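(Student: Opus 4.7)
The plan is to reduce the statement to Theorem~\ref{maint} by generalising Lemma~\ref{ij} to the full 12-parameter symmetric case. First I would establish that, on each orbit of $\varphi_{sym}$ corresponding to a fixed value of the invariant $J$, the sequence $(u_n)$ also satisfies a 5-parameter multiplicative QRT map of the form~(\ref{mul}). This follows by writing the symmetric biquadratic pencil~(\ref{pencil}) at $\lambda = -J$, which is symmetric in its two arguments and which both $(u_{n-1}, u_n)$ and $(u_n, u_{n+1})$ satisfy, as a quadratic in one variable with the other argument fixed at $u_n$; the two roots are then precisely $u_{n-1}$ and $u_{n+1}$, so Vieta's formula for their product yields exactly~(\ref{mul}) with parameters $a_1 = a_{00} - J b_{00}$, $a_2 = a_{01} - J b_{01}$, $a_3 = a_{02} - J b_{02}$, $a_5 = a_{12} - J b_{12}$, $a_6 = a_{22} - J b_{22}$. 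Comparing the coefficient of $u_{n-1} u_n$ in the curve equation with the role played by $-I$ in~(\ref{iinv}) gives the identification $I = -a_{11} + J b_{11}$, reproducing all the substitutions stated in the corollary.

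Second, Theorem~\ref{maint} applied with these derived parameters asserts that the bilinear system~(\ref{smsql}) generates Laurent variables $\tilde{k}_n, \tilde{l}_n$ with $u_n = \tilde{k}_n/\tilde{l}_n$, both of which satisfy the Somos-7 relation~(\ref{s763}) with coefficients given by~(\ref{cofo}) and~(\ref{abj}); after substituting the expressions for $a_1, \ldots, a_6$ and $I$ obtained in step one, these coefficients are exactly those claimed. Since the variables $p_n, q_n$ from~(\ref{quintic}) also satisfy $u_n = p_n/q_n$ on the same orbit, we may write $p_n = c_n \tilde{k}_n$ and $q_n = c_n \tilde{l}_n$ for a common nonzero factor $c_n$ depending only on the orbit. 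To finish, it suffices to verify that $c_n$ belongs to the gauge group of the Somos-7 relation, namely that it has the quasi-periodic form $A_{\pm} B^{n} \mu^{n^2}$ of the type discussed immediately after the proof of Theorem~\ref{sol}, under which~(\ref{s763}) with fixed coefficients is invariant.

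The hard part is precisely this last verification: showing that the specific normalisation by $D_n$ appearing in~(\ref{quintic}) produces a common factor $c_n$ of the required gauge type rather than some more general scalar. A conceptual route uses the sigma-function representation underlying Theorems~\ref{sol} and~\ref{genqrt}: any pair of rational iterates representing the order-two elliptic function $u_n = f(z_n)$ on the orbit is forced to have the form $p_n = A'_{\pm} (B'_{\pm})^{[n/2]} \mu^{[n/2]^2} \sigma(z_n - Z)\sigma(z_n + Z)$ (and similarly for $q_n$, with the poles of $f$), so the ratio $c_n = p_n / \tilde{k}_n$ automatically has the required quasi-periodic structure. An alternative, more computational route is to substitute~(\ref{quintic}) into~(\ref{s763}) and reduce symbolically: the resulting expression is a rational function of the initial values $(u_0, u_1)$ and the 12 parameters $a_{ij}, b_{ij}$ of bounded degree, and vanishes identically once checked at a sufficiently generic numerical specialisation. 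Either argument, combined with steps one and two, completes the proof.
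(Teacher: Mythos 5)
Your first two steps are sound and match the argument the paper sketches in the paragraph preceding the corollary: the generalisation of Lemma~\ref{ij} by evaluating the pencil~(\ref{pencil}) at $\lambda=-J$, reading off $a_1,\ldots,a_6$ and $I$ from the Vieta product form of $\iota_H$, and then invoking Theorem~\ref{maint}. The issue lies entirely in your step~3.

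Your first route for handling the gauge factor $c_n$ is incorrect: a pair $(p_n,q_n)$ with $u_n=p_n/q_n$ is \emph{never} forced to have the sigma-function form, since $(c_n p_n, c_n q_n)$ works for any nonzero sequence $c_n$. The rigidity you invoke does not exist; what is special about~(\ref{quintic}) has to be extracted from the specific normalisation by $D_n$, and you have not done so. Your second route (generic numerical check) could in principle be made rigorous, but as stated it is not a proof --- you would need an explicit degree bound on the putative identity before a single numerical specialisation could settle it.

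There is a clean direct argument that closes the gap. On an orbit of $\varphi_{sym}$ with invariant value $J$, the cross-product structure ${\bf f}={\cal A}{\bf V}\times{\cal B}{\bf V}$ together with the constraint $A_{n-1,n}=J\,B_{n-1,n}$ (where $A_{n-1,n}={\bf V}(u_{n-1},1)^T{\cal A}{\bf V}(u_n,1)$ etc.) gives the identities
\begin{align*}
f^{(1)}(u_n)-u_{n-1}f^{(2)}(u_n) &= -\frac{B_{n-1,n}}{u_{n-1}}\,\rC(u_n),\\
f^{(2)}(u_n)-u_{n-1}f^{(3)}(u_n) &= -B_{n-1,n}\,\rA(u_n),
\end{align*}
where $\rA(v)=a_1v^2+a_2v+a_3$, $\rC(v)=a_3v^2+a_5v+a_6$ with the stated substitutions for $a_i$. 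Plugging these into~(\ref{quintic}) and cancelling $D_n=q_{n-1}^2q_n^2B_{n-1,n}$ yields
\[
p_{n+1}p_{n-1}=-(a_3p_n^2+a_5p_nq_n+a_6q_n^2), \qquad q_{n+1}q_{n-1}=-(a_1p_n^2+a_2p_nq_n+a_3q_n^2),
\]
which is precisely the bilinear system~(\ref{smsql}) up to an overall sign on the right-hand sides. Writing $p_n=c_n k_n$, $q_n=c_n l_n$ with $(k_n,l_n)$ a solution of~(\ref{smsql}) from the same initial data, the scalar $c_n$ satisfies $c_{n+1}c_{n-1}=-c_n^2$ with $c_0=c_1=1$, hence $c_{\mod 4}=[1,1,-1,-1]$. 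One checks directly that $c_{n+7}c_n=c_{n+6}c_{n+1}=c_{n+5}c_{n+2}=c_{n+4}c_{n+3}$ for this pattern, so the gauge preserves any Somos-7 relation unchanged. Theorem~\ref{maint} then gives the claimed Somos-7 for $(k_n,l_n)$, hence for $(p_n,q_n)$, with the stated coefficients.
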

\begin{corollary}\label{K} The variables  $s_n, t_n$ in the system (\ref{KP1})  both satisfy the same Somos-7 relation, given by (\ref{s763}) but with the coefficients rescaled according to
$$
A \to K^{12}A, \qquad B \to K^{20} B, \qquad C \to K^{24}C,
$$
where on each orbit the constant  $K$ is given by
\beq \label{kpqst}
K^2=\frac{\rho_{n+1}\rho_{n-1}}{\rho_n^2}, \qquad \rho_n =\frac{s_n}{p_n}=\frac{t_n}{q_n},
\eeq
with $p_n$, $q_n$ being the corresponding solution to (\ref{quintic}).
\end{corollary}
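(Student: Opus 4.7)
The plan is to lift the Somos-7 relation for $p_n,q_n$ established in Corollary \ref{newpq} to one for $s_n,t_n$ by treating the factor $\rho_n$ as a multiplicative gauge between the two tau-type representations of the same orbit. Since both $(s_n,t_n)$ and $(p_n,q_n)$ parametrise the same iterate of (\ref{QRT}) via $u_n = s_n/t_n = p_n/q_n$, cross-multiplication gives $s_n/p_n = t_n/q_n$, so the function $\rho_n$ of (\ref{kpqst}) is well-defined (and nonzero) on each orbit.

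The first substantial step is to show that $K^2 := \rho_{n+1}\rho_{n-1}/\rho_n^2$ is independent of $n$. My preferred route is through the analytic description from the proof of Theorem \ref{genqrt}: both $s_n$ and $p_n$ admit sigma-function representations of the shape
\[
s_n = \mathrm{a}^{(s)}_\pm (\mathrm{b}^{(s)}_\pm)^{[\frac{n}{2}]}\mu^{2[\frac{n}{2}]^2}\sigma(z_n-Z)\sigma(z_n+Z),
\]
with an analogous expression for $p_n$, and the same shape for $t_n,q_n$ with the pole parameter $P$ in place of $Z$. The sigma factors and $\mu=-\sigma(\kappa)^{-4}$ are common to both representations because both projections yield the same $u_n=f(z_n)$ on the orbit, so $\rho_n$ collapses to a pure ratio of prefactors. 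The gauge relations analogous to those listed after (\ref{msqlsoln})---in particular $\mathrm{b}_-/\mathrm{b}_+=\mu^2$ and $\mathrm{b}_-\mathrm{b}_+=(\mathrm{a}_-/\mathrm{a}_+)^4$, which hold for both systems as a consequence of the weighted homogeneity of (\ref{KP1}) and (\ref{quintic})---combine to cancel the apparent parity dependence, and $\rho_n$ can be written in the canonical form $\alpha\beta^n K^{n^2}$ for orbit-dependent constants $\alpha,\beta,K$. A direct check then yields $\rho_{n+1}\rho_{n-1}/\rho_n^2 = K^2$, independent of $n$.

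With constancy in hand, the rescaled Somos-7 follows from a short exponent count. Since $\log\rho_n = c_0 + c_1 n + (\log K)\, n^2$, for any $(j,k)$ with $j+k=7$ one finds
\[
\log(\rho_{n+7}\rho_n) - \log(\rho_{n+j}\rho_{n+k}) = (49 - j^2 - k^2)\log K,
\]
producing the factors $K^{12}, K^{20}, K^{24}$ for $(j,k)=(6,1),(5,2),(4,3)$ respectively. Substituting $s_n = \rho_n p_n$ into the target Somos-7 for $s_n$, dividing through by $\rho_{n+7}\rho_n$, and invoking (\ref{s763}) then gives exactly the rescaled coefficients $K^{12}A,\ K^{20}B,\ K^{24}C$. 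The identical substitution $t_n=\rho_n q_n$ in the $q_n$-version of (\ref{s763}) yields the same Somos-7 for $t_n$, completing the proof.

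The main obstacle is the constancy of $K^2$ in the middle step. The prefactors $\mathrm{a}_\pm,\mathrm{b}_\pm$ are manifestly parity-dependent, and the delicate point is to verify that, once all the structural constraints imposed by both (\ref{KP1}) and (\ref{quintic}) are combined, every alternating contribution cancels in the ratio $\rho_n$, so that $\log\rho_n$ is a genuine polynomial in $n$ rather than containing a $(-1)^n$ piece. This is the only place where careful sigma-function bookkeeping is unavoidable; everything else is routine.
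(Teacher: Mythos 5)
Your outline has the right general shape — introduce $\rho_n = s_n/p_n = t_n/q_n$, show $K^2=\rho_{n+1}\rho_{n-1}/\rho_n^2$ is constant, write $\rho_n$ in the canonical quadratic-exponential form, and propagate the gauge through (\ref{s763}) with the exponent count $49-j^2-k^2=12,20,24$ — and the final exponent-count/substitution step is both correct and essentially what the paper does. But the central claim, the constancy of $K^2$, is precisely the step you leave unverified, and your route to it is not the paper's.

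The paper proves constancy of $K^2$ by a short \emph{algebraic} computation on the recurrences themselves, with no recourse to sigma functions at all: taking the quotient of the second equation of (\ref{KP1}) with the first of (\ref{quintic}) gives $\rho_{n+1}r_{n-1}=D_n\rho_n^4\rho_{n-1}$, while eliminating $s_n,t_n$ from the first equation of (\ref{KP1}) gives $r_nr_{n-1}=\rho_n^8\,Q(p_n,q_n)$. Combining these and substituting (\ref{quintic}) into $Q(p_{n+1},q_{n+1})$ yields
$$
\frac{\rho_{n+1}\rho_{n-1}/\rho_n^2}{\rho_{n+2}\rho_{n}/\rho_{n+1}^2}=\frac{Q(p_n,q_n)}{D_nD_{n+1}}=1,
$$
which immediately gives constancy (and, as a bonus, the explicit value $K^2=D_1={\bf V}(u_0,1)^T{\cal B}{\bf V}(u_1,1)$ once one normalises $\rho_0=\rho_1=1$). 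Your sigma-function route is not unsound in principle, but you candidly admit that you have not carried out the ``delicate sigma-function bookkeeping'' required to show that the parity-dependent prefactors in the representations of $s_n$ and $p_n$ cancel in $\rho_n$; and there is a further unexamined assumption that $p_n,q_n$, which obey (\ref{quintic}) rather than a bilinear recurrence, really do admit prefactors of the same quadratic-exponential-plus-parity shape as $s_n,t_n$. So as written there is a genuine gap exactly where you flag one. The algebraic identity $Q(p_n,q_n)=D_nD_{n+1}$ is the missing ingredient that closes it cleanly.
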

\noindent \textit{Proof of Corollary \ref{K}:}
Since the solutions of (\ref{KP1}) and (\ref{quintic}) are both related to the same solution of
(\ref{QRT}) via $u_n=s_n/t_n=p_n/q_n$, we can introduce $\rho_n =s_n/p_n=t_n/q_n$. Taking the quotient of
each side of the second equation (\ref{KP1}) with each side of the first equation (\ref{quintic}), and doing the same
for the third equation (\ref{KP1}) with the second equation (\ref{quintic}), leads 
in both cases to 
\beq \label{rho1}
\rho_{n+1}r_{n-1}= D_n \rho_n^4\rho_{n-1},
\eeq
while eliminating $s_n,t_n$ from the first equation (\ref{KP1}) gives
\beq\label{rho2}
r_nr_{n-1} = \rho_n^8 Q(p_n,q_n).
\eeq
Then using (\ref{rho1}) 
in (\ref{rho2}) and further substituting
for $p_{n+1},q_{n+1}$ from (\ref{quintic}) gives
$$
\left(\frac{\rho_{n+1}\rho_{n-1}}{\rho_n^2}\right) \left(\frac{\rho_{n+2}\rho_{n}}{\rho_{n+1}^2}\right)^{-1}
=\frac{Q(p_n,q_n)}{D_nD_{n+1}}=1 \implies \frac{\rho_{n+1}\rho_{n-1}}{\rho_n^2}=\mathrm{const}
\implies \rho_n = \rho_0 \left(\frac{\rho_1}{K\rho_0}\right)^n K^{n^2}
$$
for some $K$, which verifies (\ref{kpqst}). The quadratic exponential form of the gauge transformation from $p_n$ to $s_n$ means that $s_n$ satisfies (\ref{s763}) but with coefficients rescaled as stated, and likewise for $t_n$. If the initial values are chosen so that
$r_0=1$, $s_0=p_0=u_0$, $s_1=p_1=u_1$ and $t_0=q_0=t_1=q_1=1$ then $\rho_0=\rho_1=1$ and hence
$K^2=\rho_2=D_1={{\bf V} (u_{0},1)^T{\cal B}{\bf V}(u_1,1)}$. \hspace{1.7in}
$\Box$

\section{Concluding remarks}

We have applied homogenisation and/or  recursive factorisation to  symmetric QRT maps, and have shown directly that this produces systems with the Laurent property. In the multiplicative case, the resulting system  (\ref{smsql}), appears to correspond to a pair of mutations in an LP algebra \cite{LP}, which
would give an
alternative way to verify the Laurent property. However, 
neither (\ref{M12}) nor the system (\ref{KP1}) obtained from the  general
symmetric QRT 
are  of the right form for successive  LP algebra mutations. 

Our results show that obtaining Laurent systems and using their ultradiscrete (or tropical) versions is a very efficient method for calculating the degree growth 
of maps. For the QRT maps, which preserve an elliptic fibration, 
the fact that the degree growth is quadratic is well known  in the context of algebraic entropy of maps 
with invariant curves \cite{bellon}. 
Earlier results on automorphisms of rational surfaces admitting an elliptic fibration were presented by 
Gitzatullin \cite{ratsurfaces}.  
The quadratic growth observed in QRT maps 
also 
fits into Diller and Favre's classification of bimeromorphic maps of surfaces in \cite{df}, 
and can be understood by making a sequence of blowing-up transformations 
which lifts the QRT map to an automorphism of a complex analytic surface, 
then considering the action on homology (see e.g. the discussion of geometric singularity confinement in chapter 3 of \cite{duistermaat}).  
However, this approach may become intractable for maps in dimension 3 and above, whereas the combination of   Laurentification and 
ultradiscretization extends to   higher dimensions in a straightforward manner. 

In future work we propose to consider the analogue of (\ref{KP1}) for the general asymmetric QRT map, 
as well as Laurent systems for higher-dimensional maps. As a starting point, it would be worth considering the 
bilinearization of the discrete Painlev\'{e} VI ($q$-$P_{VI}$) equation presented in \cite{dp6}, for which
the  
autonomous limit  is just the multiplicative  version of the asymmetric QRT map. 

\noindent {\bf Acknowledgements:}
This research was supported by the Australian Research Council, grant DP140100383,  
and La Trobe's Disciplinary Research Program Mathematical and Computing Sciences. 
ANWH is supported by Fellowship EP/M004333/1 from the Engineering \& Physical Sciences Research Council.
We are grateful to the referees for their reports, and especially to one of the editors, Manuel Kauers, for his detailed comments on the original version of the paper.

\section*{Appendix} 

Here we provide an inductive proof of Lemma \ref{coprime}, by making repeated use of two  facts: (i) for a pair of  polynomials $f,g\in\mathbb{K}[x_1,\ldots,x_m]$ in $m$ variables  over a field $\mathbb{K}$ with positive degree in $x_1$, 
there are polynomials $A,B$ such that 
$
Af+Bg=\mathrm{Res}(f,g,x_1)$;
and (ii) $f,g$ have a common factor with positive degree in $x_1$ if and only if this resultant vanishes (see e.g. Proposition 1, section \S 6 of chapter 3 in \cite{clo}). This extends directly to the case at hand, where the ring of coefficients $\cal R$ is a unique factorisation domain, by working in the corresponding field of fractions. 

 As our inductive hypothesis, we assume that $(s_k,t_k)=1$ for $0\leq k\leq n$. The base cases $k=0,1$ are trivial, while for $k=2$ we can write 
\beq \label{linsys}
\left(\begin{array}{c} r_0s_2 \\ r_0 t_2 \end{array}\right) 
=\left(\begin{array}{cc} -F^{(2)}(s_1,t_1) & F^{(1)}(s_1,t_1) \\ 
-F^{(3)}(s_1,t_1) & F^{(2)}(s_1,t_1) \end{array}\right) 
\left(\begin{array}{c} s_0 \\ t_0 \end{array}\right) , 
\eeq 
and verify directly that  
$ \rN_2=r_0s_2$ and  
$\hat{\rN}_2=r_0t_2$ are coprime polynomials in ${\cal R} [s_0,t_0,s_1,t_1]$, and homogeneous in $s_0,t_0$ and $s_1,t_1$ separately (of degree 1 and 4, respectively). Moreover, if we regard $ \rN_2,\hat{\rN}_2$ as (linear) polynomials in 
$u_0=s_0/t_0$,  then from the aforementioned facts their coprimality means that there are polynomials $A,B$ (also linear in $u_0$), whose 
coefficients are homogeneous polynomials in 
${\cal R}[s_1,t_1]$, such that 
$$
t_0^{-1}(Ar_0s_2+Br_0t_2)=\mathrm{Res}(t_0^{-1}r_0s_2,t_0^{-1}r_0t_2,u_0)\neq 0.
$$
As it happens, from the linearity of the system (\ref{linsys}) we see that the resultant in this case is just the determinant of the $2\times 2$ matrix on the right, and turns out to be equal to the polynomial $-Q(s_1,t_1)$,   homogeneous of degree 8, so after rescaling by $t_0^2$ we have 
$$
A'r_0s_2+B'r_0t_2=- t_0^2 Q(s_1,t_1), 
$$ 
where $A',B'$ are homogeneous in  $s_0,t_0$ and $s_1,t_1$ (separately). Upon shifting indices, by 
applying the pullback of the map 
defined by the Laurent system (\ref{KP1}), this gives an identity of Laurent polynomials for all $n$, namely 
\beq\label{laurid} 
A'(s_{n-1},t_{n-1},s_n,t_n) r_{n-1} s_{n+1}+
B'(s_{n-1},t_{n-1},s_n,t_n) r_{n-1} t_{n+1}= -t_{n-1}^2  Q(s_n,t_n).
\eeq 

Now we suppose that $s_{n+1}$ and $t_{n+1}$ have a non-trivial common factor $P$, and show that 
under the inductive hypothesis this leads to a contradiction. Indeed, from the right-hand side of (\ref{laurid}) there are two possibilities: (a) $P| t_{n-1}$, or (b) $P| Q(s_n,t_n)$. 
In case (a), (\ref{KP1})  directly yields 
$$ 
r_{n-1}s_{n+1} -t_{n-1}F^{(1)}(s_n,t_n) =  -s_{n-1} F^{(2)}(s_n,t_n), 
\, \,
r_{n-1}t_{n+1} -t_{n-1}F^{(2)}(s_n,t_n) =  -s_{n-1} F^{(3)}(s_n,t_n), 
$$ 
therefore $P$ must divide both right-hand sides above, 
and, since $(s_{n-1},t_{n-1})=1$ by the inductive hypothesis, this yields 
$P| F^{(2)}(s_n,t_n)$ and $F^{(3)}(s_n,t_n)$. By applying the same argument as in the proof of Proposition \ref{thm5}, we form the Sylvester matrix corresponding to the coefficients of $F^{(2)}$ and $F^{(3)}$, whose determinant is a non-zero element of $\cal R$, and since also 
$(s_{n},t_{n})=1$ by  hypothesis, this gives a contradiction. Thus  
we are left with case (b). The fact that $s_2,t_2$ are both coprime 
to $Q(s_1,t_1)$ can be verified directly, so by taking resultants 
with respect to $u_1$ and shifting indices this leads to a pair of identities 
$$ 
A^*_n r_{n-1} s_{n+1}+B^*_n Q(s_n,t_n)=t_n^MR^*(s_{n-1},t_{n-1}), 
\quad 
 A^\dagger_n r_{n-1} s_{n+1}+B^\dagger_n Q(s_n,t_n)=t_n^MR^\dagger(s_{n-1},t_{n-1}), 
$$ 
where $A_n^*,B_n^*,A_n^\dagger, B^\dagger_n$ are 
homogeneous polynomials in $s_{n-1},t_{n-1},s_n,t_n$, $M$ is a positive integer, and the resultants $R^*,R^\dagger$ are a pair of coprime homogeneous polynomials in their arguments. Using the same method as for Proposition \ref{thm5} once again, the latter coprimality means that, 
since  $(s_{n-1},t_{n-1})=1$ by hypothesis, $P$ cannot divide both 
$R^*(s_{n-1},t_{n-1})$ and $R^\dagger(s_{n-1},t_{n-1})$, hence 
$P|t_n$. To finish off the argument, it is enough to use 
$(s_2,t_1)=1=(t_2,t_1)$ and take resultants with respect to $v_1=t_1/s_1$, then shift to obtain further relations of 
the form 
$$ 
\tilde{A}_n r_{n-1} s_{n+1}+\tilde{B}_n t_n=s_n^L\tilde{R}(s_{n-1},t_{n-1}), 
\quad 
 \hat{A}_n r_{n-1} s_{n+1}+\hat{B}_n t_n=s_n^L\hat{R}(s_{n-1},t_{n-1}) 
$$ 
for all $n$, where $L$ is a positive integer and the resultants $\tilde{R},\hat{R}$ are coprime homogeneous polynomials in their arguments. 
From $(s_n,t_n)=1$ it follows from the two right-hand sides above that $P|\tilde{R}(s_{n-1},t_{n-1}), 
\hat{R}(s_{n-1},t_{n-1})$, which is seen to be a contradiction by applying the argument used for Proposition \ref{thm5} once more. This 
completes the proof of Lemma \ref{coprime}. 
\hfill $\qed$

Analogous arguments can be used to obtain similar coprimality conditions for all Laurent polynomials  
$r_n,s_n,t_n$ generated by (\ref{KP1}), and it appears that more is true: these iterates should be distinct  irreducible elements of $\hat{\cal R}={\cal R}[r_0^{\pm 1}, s_0, s_1, t_0, t_1]$ for all $n$.


\begin{thebibliography}{10}
\bibitem{ACH}
J. Alman, C. Cuenca, J. Huang, Laurent Phenomenon Sequences, J Algebr Comb 43 (2016) 589--633.


\bibitem{bv} M.P. Bellon and C.M. Viallet, Algebraic entropy, 
Commun. Math. Phys. 204 (1999) 425--437. 

\bibitem{bellon} M.P. Bellon, Algebraic entropy of
  birational maps with invariant curves, Lett. Math. Phys.  50 (1999) 79--90. 

\bibitem{Mai}
S. Boukraa and J-M. Maillard, Factorization properties of birational mappings, Physica A 200 (1995) 403--470.

\bibitem{clo} D. Cox, J. Little and D. O'Shea, 
Ideals, Varieties and Algorithms, 3rd edition, Springer, 2007. 

\bibitem{DTKQ}
D.K. Demskoi, D.T. Tran, P.H. van der Kamp and G.R.W. Quispel,
A novel $n$th order difference equation that may be integrable,
J Phys A: Math Theor 45 (2012) 135202 (10pp).

\bibitem{df}J. Diller and C. Favre, Dynamics of bimeromorphic maps of surfaces, Amer. J. Math. 123 (2001) 1135--1169. 

\bibitem{duistermaat} J.J. Duistermaat, QRT Maps and Elliptic Surfaces, Springer, 2010.



\bibitem{EZ}
S.B. Ekhad and D. Zeilberger, How to generate as many Somos-like miracles as you wish, J Differ Equ Appl 20 (2014) 852--858.

\bibitem{FZ}
S. Fomin and A. Zelevinsky, The Laurent Phenomenon, Adv Appl Math 28 (2002) 119--144.

\bibitem{fziv}
S. Fomin and A. Zelevinsky,  Cluster algebras IV: coefficients,  Comp. Math. 143 (2007) 112--164.

\bibitem{fst} S. Fomin, M. Shapiro and  D. Thurston, Cluster algebras and triangulated surfaces. Part I: Cluster complexes,
Acta Mathematica 201 (2008) 83--146.

\bibitem{fm}  A.P. Fordy and  R.J. Marsh,  Cluster mutation-periodic Quivers and associated Laurent sequences. J. Alg.
Comb. 34  (2011)  19--66.

\bibitem{Gal}
D. Gale, The strange and surprising saga of the Somos sequences, Math Intell 13 (1991) 40--42.

\bibitem{ratsurfaces} M.K. Gizatullin, Rational 
  ${G}$-surfaces,  Izv. Akad. Nauk SSSR. Ser. Mat.  44 
  (1980) 
 110--144, English translation in
  Math. USSR-Izvestiya 16 (1981) 
103--134.

\bibitem{GR}B. Grammaticos and A. Ramani, Integrability in a discrete world. Chaos Soliton Fract 11 (2000) 7--18.

\bibitem{HK}
K. Hamad and  P.H. van der Kamp, From discrete integrable equations to Laurent recurrences, J Differ Equ Appl 22 (2016) 789--816.

\bibitem{honeblms} A.N.W. Hone, Elliptic curves and quadratic recurrence sequences,
Bull. Lond. Math. Soc. 37  (2005) 161--171.

\bibitem{honetams} A.N.W. Hone, Sigma function solution of the initial value problem for Somos 5 sequences,
Trans. Amer. Math. Soc. 359 (2007) 5019--5034.

\bibitem{honeswart}  A.N.W. Hone, C.S. Swart, Integrality and the Laurent phenomenon for Somos 4 and Somos 5 sequences,
Math. Proc. Camb. Phil. Soc. 145 (2008) 65--85.

\bibitem{hones6} A.N.W. Hone, Analytic solutions and integrability  for  bilinear recurrences of order six, Applicable Analysis 89 (2010) 473--492.

\bibitem{dp6} M. Jimbo, H. Sakai, A. Ramani and B. Grammaticos, 
Bilinear structure and Schlesinger transforms of the $q$-$P_{III}$  and $q$-$P_{VI}$ equations, 
Physics Letters A
 217 
(1996) 111--118.




\bibitem{LP}
T. Lam and P. Pylyavskyy, Laurent phenomenon algebras, Cambridge J Math 4 (2016) 121--162.



\bibitem{Mas}
T. Mase, The Laurent Phenomenon and Discrete Integrable Systems, RIMS K\^{o}ky\^{u}roku Bessatsu B41 (2013) 43--64.

\bibitem{mase}
T. Mase, Investigation into the role of the Laurent property in integrability, J. Math. Phys.  57 (2016) 022703.

\bibitem{Rob}
R. Robinson, Periodicity of Somos sequences, Proc. Amer. Math. Soc. 116 (1992) 613--619.

\bibitem{QRT1}
G.R.W. Quispel, J.A.G. Roberts, C.J. Thompson,
Integrable mappings and soliton equations,
Phys Lett A 126 (1988) 419--421.

\bibitem{QRT2}
G.R.W. Quispel, J.A.G. Roberts, C.J. Thompson,
Integrable mappings and soliton equations II,
Physica D: Nonlinear Phenomena 34 (1989) 183--192.


\bibitem{vdK}
P.H. van der Kamp, Growth of degrees of integrable mappings. J Differ Equ Appl 18 (2012) 447--460.


\bibitem{tsuda} T. Tsuda. Integrable mappings via rational elliptic surfaces.
J. Phys. A 37 (2004) 2721--2730.

\bibitem{vdps} A.J. van der Poorten and C.S. Swart,
Recurrence Relations for Elliptic Sequences: every Somos 4 is a Somos k,
Bull. Lond. Math. Soc. 38 (2006) 546--554.

\bibitem{Via}
C.M. Viallet, On the algebraic structure of rational discrete dynamical systems, J Phys A: Math Theor 48 (2015) 16FT01.

\bibitem{ward}M. Ward, Memoir on  Elliptic Divisibility Sequences, Amer. J. Math. 70 (1948) 31--74.

\bibitem{WTS}
R. Willox, T. Tokihiro and J. Satsuma, Darboux and binary Darboux transformations for the non-autonomous discrete KP equation, J Math Phys 38 (1997) 6455--6469.

\bibitem{ww} E.T. Whittaker and G.N. Watson, A Course of Modern Analysis, Cambridge University Press, 1920.

\bibitem{Zab}A. V. Zabrodin, A survey of Hirota's difference equations, Theor Math Phys 113 (1997) 1347--1392.

\end{thebibliography}
\end{document}